\documentclass[10pt,aps,pra,twocolumn,floatfix,nofootinbib,superscriptaddress,longbibliography]{revtex4-2}
\usepackage{placeins}

\usepackage{bm,graphicx,mathrsfs,amsmath,amssymb,mathtools,,makecell,bbm, amsthm, bm,dsfont,color,times,txfonts,nicefrac,framed,enumitem,tikz,physics,wrapfig,amsfonts,tcolorbox,lipsum,nicematrix}
\usepackage[ruled,vlined, linesnumbered]{algorithm2e}

\usepackage[colorlinks=true,linkcolor=equationcolor,citecolor=teal]{hyperref}

\hypersetup{urlcolor=teal}
\definecolor{equationcolor}{RGB}{222,94,100}
\definecolor{alecolor}{RGB}{198,113,190}
\definecolor{changescolor}{rgb}{0, 0, 0.7}

\makeatletter
\def\blfootnote{\gdef\@thefnmark{}\@footnotetext}
\makeatother

\newcommand{\ms}[1]{\textsf{#1}}

\newcommand{\E}[1]{\mathcal{E}}

\def\E{ {\ms E} }

\newtheorem{thm}{Theorem}

\newtheorem{lem}{Lemma}
\newtheorem{prop}{Proposition}
\newtheorem{cor}{Corollary}

\newtheorem{defn}{Definition}

\begin{document}
\title{Entanglement manipulation and magic area-laws}

\author{Rafael A. Macêdo}
\affiliation{Physics Department, Federal University of Rio Grande do Norte, Natal, 59072-970, Rio Grande do Norte, Brazil}
\affiliation{International Institute of Physics, Federal University of Rio Grande do Norte, 59078-970, Natal, Brazil}
\affiliation{Department of Mathematical Sciences, University of Copenhagen, Universitetsparken 5, 2100 Copenhagen,
Denmark}
\author{Rafael Chaves}
\affiliation{International Institute of Physics, Federal University of Rio Grande do Norte, 59078-970, Natal, Brazil}

\begin{abstract}
Physical many-body quantum systems typically interact locally, giving rise to ground states where entanglement grows with the area of the region being probed. Beyond entanglement, non-stabilizerness (or "magic") has been established as another crucial resource linked to quantum complexity and the properties of error-correcting states. Unlike entanglement, non-stabilizerness typically behaves with a volume-law scaling in ground states, indicating inherent computational complexity. In this work, we show that under a recent classification of quantum states, area-law states split into two distinct classes. Entanglement-dominated states obey an area-law for non-stabilizerness and admit efficient quantum algorithms for entanglement manipulation. Conversely, magic-dominated states exhibit a a faster-than-area growth of subsystem non-stabilizerness and require non-efficient algorithms for entanglement manipulation tasks. Within this context, we introduce the concept of the stabilizer regime, a restricted quantum phase that admits efficient entanglement manipulation and hosts area laws for both magic and entanglement.
\end{abstract}

\maketitle
\section{Introduction}

Quantum many-body phenomena naturally arise in states that inhabit Hilbert spaces whose dimensions grow exponentially with the number of constituents. However, most of these states are unstructured. For instance, when chosen at random, they exhibit extensive entanglement, known as a volume law, and exponentially small expectation values for local observables \cite{page1993average, foong1994proof, hayden2006aspects, mele2024introduction}. In contrast, physical quantum systems are typically governed by Hamiltonians with local interactions defined by some underlying geometry. The ground states of such systems are known to obey an entanglement area-law \cite{eisert2010colloquium}, in which the entanglement of a subregion scales only with the size of its boundary. This behavior implies that these states are highly compressible and occupy a structured corner of the Hilbert space.

The structured entanglement of ground states arising from local interactions is a foundational feature in the classification of quantum phases of matter \cite{zeng2019quantum}, which aims to categorize ground states into equivalence classes connected by adiabatic evolution. This has led to a computationally inspired definition of a quantum phase \cite{chen2010local}, in which two states belong to the same phase if they can be transformed into each other by a finite-depth quantum circuit (FDQC). Such transformations preserve local correlations and, crucially, maintain the area-law.

This computational definition frames quantum phases in terms of quantum complexity \cite{baiguera2025quantum}, placing states that are related by a short-depth unitary into the same phase. However, entanglement is not the only source of complexity in quantum states. The Gottesman-Knill theorem \cite{gottesman1998heisenberg, aaronson2004improved} famously shows that quantum circuits composed entirely of Clifford group operations, can be efficiently simulated on classical computers, despite these states often being highly entangled. Stabilizer states also occupy a central role in quantum computation, forming the backbone of quantum error-correcting codes \cite{nielsen2010quantum}, which are essential for the reliable storage and manipulation of quantum information. In light of these dual roles, both as classically simulable states and as components of robust quantum error correction, a resource theory of non-stabilizerness, colloquially known as “magic”, has been developed \cite{veitch2014resource, howard2017application}. Non-stabilizerness has since emerged as a key quantifier of the non-classical resources required for universal quantum computation. It is the ingredient that enables quantum states to go beyond classical simulability and the resource that must be injected into stabilizer codes to achieve fault-tolerant quantum computation \cite{bravyi2005universal}.

Among the various proposed measures, the stabilizer Rényi entropy
\cite{leone2022stabilizer, leone2024stabilizer} is of particular relevance, serving as a faithful measure of non-stabilizerness analogous to the entanglement Rényi entropies. It can be efficiently computed classically for matrix product states \cite{haug2023quantifying,tarabunga2024nonstabilizerness, frau2025stabilizer}, leading to a body of numerical work estimating its behaviour in one-dimensional quantum systems \cite{oliviero2022magic, frau2024nonstabilizerness}. Overall, the stabilizer Rényi entropy is typically extensive, coinciding with the generic behaviour for random states \cite{leone2022stabilizer}. This suggests that even small perturbations to a local stabilizer Hamiltonian generate an extensive amount of magic in the ground state, while the system remains in the same phase.

This observation implies that ground state entanglement is robust whereas its magic is not. This is of practical relevance, since robust stabilizer error-correcting codes are also representatives of quantum phases \cite{zeng2019quantum}, such as $\mathbb Z_2$ topological order represented by the toric code \cite{kitaev2003fault}. Consequently, while global properties of those codes are preserved throughout the phase, their stabilizerness is lost upon adding perturbations, complicating information retrieval since standard decoders heavily rely on the stabilizer formalism. It is therefore natural to ask under what conditions area-law states mantain their stabilizerness protected. 

Recent advances in understanding the entanglement structure of non-stabilizer states have begun to shed light on this question. In Ref.~\cite{gu2024magic}, a classification was proposed that distinguishes between two types of states: entanglement-dominated (ED) states, where entanglement entropy outweighs the contribution from magic, and magic-dominated (MD) states, where the opposite holds. While efficient algorithms for entanglement extraction, dilution, and estimation exist in the ED regime, these tasks become provably computationally hard in the MD regime. Physical examples of MD states include Haar-random states \cite{mele2024introduction}, which represent chaotic, long-time many-body quantum systems, whereas ED states include near-stabilizer codes where entanglement stems predominantly from their stabilizer structure.

In this work, we explore how the ED/MD decomposition of Hilbert space manifests under the constraint of an entanglement area-law. Our main results, illustrated in Fig.~\ref{fig:theorem}, show that area-law quantum states inherit the ED/MD dichotomy. Specifically, states in the ED phase exhibit an area-law for the stabilizer Rényi entropy, whereas states in the MD phase show a stabilizer entropy that grows faster than the boundary area. Furthermore, we show that even under the area-law constraints, entanglement manipulation in MD states remains hard. This finding implies that quantum ground states, even those with stabilizer fixed points, typically fall into the MD category: only specific classes of perturbations can drive stabilizer states into the ED regime, where their structure can be effectively harnessed for quantum information processing.

\begin{figure*}
    \centering
    \includegraphics[width=0.6\linewidth]{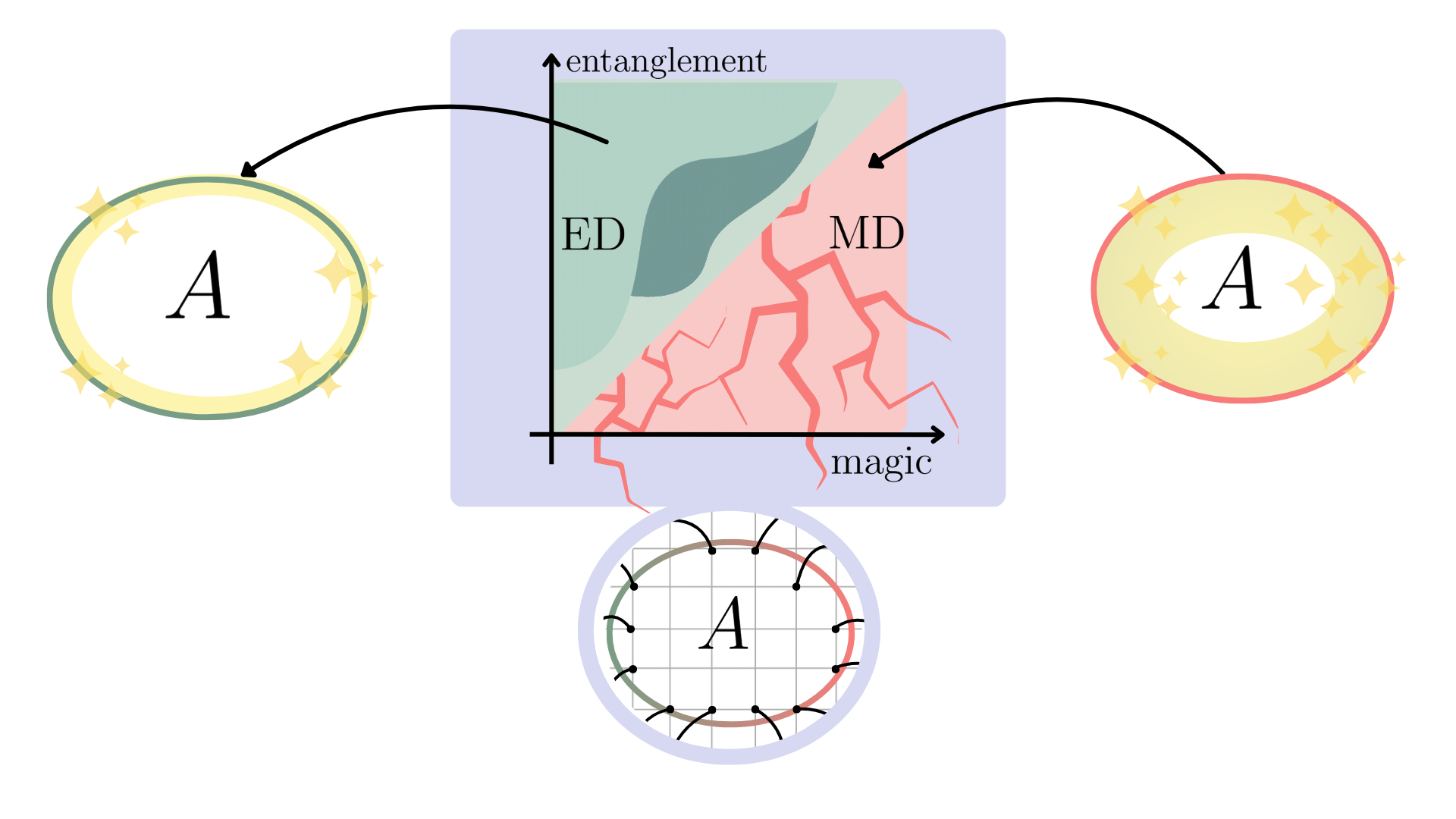}
    \caption{
    Area-law states, illustrated at the bottom as quantum states with entanglement bounded by the perimeter of a subsystem $A$ (with dangling bonds corresponding to fictitious Bell pairs) , can be classified into entanglement-dominated (ED) and magic-dominated (MD) phases, depending on the relative magnitude of their bipartite entanglement and non-stabilizerness (magic). On the left, we show that all ED area-law states exhibit an area-law for the stabilizer Rényi entropy. On the right, we illustrate that any state with a strictly faster-than-area growth of stabilizer entropy must belong to the MD phase. This dichotomy motivates a refined notion of quantum phases, where ED states share complexity properties—particularly for information-processing tasks—with stabilizer states. The color coding follows the convention introduced in Ref.~\cite{gu2024magic}, with green denoting ED states and red denoting MD states.}
    \label{fig:theorem}
\end{figure*}

This work is organized as follows.  In Sec. \ref{sec:phasesandscaling}, we review the necessary background, introduce our notation, and formalize the key definitions. In Sec.~\ref{sec:localityofnonstabilizerness}, our main results are presented: the main technical contribution, Theorem \ref{thm:arealawMDhardness}, shows that restricting MD states to be area-law still leads to computationally hard entanglement manipulation, and Theorem \ref{thm:main}, our main conceptual result, shows that the growth of magic in the subsystems separate area-law ED and MD states. In Sec.~\ref{sec:stabilizerregime}, we define the concept of a stabilizer regime, a restriction of the definition of a quantum phase that admits efficient entanglement manipulation and hosts area laws for entanglement and magic. We also study its stability and discuss how it connects with previous examples of Hamiltonian models discussed in the literature. Finally, in Sec.~\ref{sec:conclusions}, we summarize our main results and outline directions for future research.

\section{Phases of matter and scaling laws}
\label{sec:phasesandscaling}

In the following, we review existing results on the relationship between quantum phases and entanglement/magic scaling laws, while introducing the relevant definitions and notation.

\subsection{Quantum phases and the area law}
We will deal with the Hilbert space of $n$ qubits, $\mathcal H = (\mathbb C^2)^{\otimes n}$, where $n$ is always taken to be an asymptotic parameter, as we are interested in the scaling of quantities in the thermodynamic limit. We will make extensive use of Bachmann-Landau asymptotic notation\footnote{Given two functions $f,g$ over $\mathbb Z$, we denote: $f = O(g) \iff \exists c >0:  \lim_{n \to \infty }f(n)/g(n) \leq c$, $ f=\Omega(g) \iff \exists c>0: \lim_{n \to \infty} f(n)/g(n) \geq c$, introducing $f = \Theta(g)$ as a short-hand notation for  $f = O(g) \; \& \; f=\Omega(n)$. Furthermore, $f=o(g) \iff \lim_{n \to \infty}f(n)/g(n)=  0$ and $f=\omega(g) \iff \lim_{n \to \infty} f(n)/g(n) =\infty$.}, when referring to families of operators and states. We focus our attention on geometrically-local Hamiltonians:
\begin{equation}
    H = \sum_X h_X\;,
    \label{eq:Hamiltonian}
\end{equation}
where $\{h_X\}_X$ is a $\mathrm{poly}(n)$-size set with $X = \mathrm{supp}(h_X)$ of terms that are local with respect to the $D$-dimensional hypercubic lattice $\Lambda_D = \{1,\cdots,L\}^D$, meaning:
\begin{equation}
    \mathrm{diam}(X) \equiv \max_{i,j \in X}\mathrm{dist}_{\Lambda_D}(i,j) = O(1)\;,
\end{equation}
where $\mathrm{dist}_{\Lambda_D}(\cdot, \cdot)$ is the taxicab distance on $\Lambda_D$. In examples, we might also consider periodic boundary conditions. Furthermore, we assume that the family of hamiltonians in Eq. (\ref{eq:Hamiltonian}) satisfy a uniform energy bound:
\begin{equation}
    \sup_{n,i}\sum_{X\;:\; i \in X}\|h_X\| \leq J < \infty\;.
\end{equation}
We restrict ourselves to gapped Hamiltonians, whose ground state subspace is separated from the rest of the spectrum by a constant in the thermodynamic limit, referred to as the energy gap $\Delta_n  =\Omega(1)$. In this case, there is a clear-cut definition of a quantum phase based on the adiabatic continuity of quantum states \cite{hastings2005quasiadiabatic,bachmann2012automorphic}. More specifically, we say that two Hamiltonians $H$ and $H^\prime$ are in the same phase if there is a continuous path of (gapped) Hamiltonians that interpolate between the two. This is a generalization of the adiabatic theorem in quantum mechanics \cite{born1928beweis}, that states that eigenstates of Hamiltonians are preserved under perturbations if there is a energy gap. We require the gap to be non-vanishing in the thermodynamic limit in order to make the definition macroscopically robust.

We will mainly adopt a definition of quantum phases with a computational flavor. As argued in Ref.~\cite{chen2010local}, the adiabatic continuity condition can be reformulated in terms of finite-depth quantum circuits (FDQCs).
\begin{defn}
    We say that an $n$-qubit unitary $U$ is a geometrically local $D$-dimensional finite-depth quantum circuit if it admits the decomposition
    \begin{equation}
        U  =\prod_{\ell=1}^{\mathrm{depth}(U)} u_\ell\;,
    \end{equation}
    with $\mathrm{depth}(U)=O(1)$ and $u_\ell = \prod_{j} u_{j, \ell}$, $\{u_{j,\ell}\}_j$ defined on disjoint supports, $\max_{j,\ell}|\mathrm{supp}(u_{j,\ell})|=O(1)$ and $\max_{j,\ell}\mathrm{diam}(\mathrm{supp}(u_{j,\ell})) = O(1)$, where the diameter is computed on the hypercubic lattice of $D$-dimensions. Furthermore, we say that a (family of) two $n$ qubit states $\psi, \psi^\prime$ are in the same $D$-dimensional quantum phase if there is a $D$-dimensional FDQC $U$ such that $\psi^\prime = U \psi U^\dagger$.
    \label{def:fdqcphase}
\end{defn}
Sometimes we omit the dependency on $D$ and just state that $\psi$ and $\psi^\prime$ are in the same quantum phase. If a symmetry is imposed, the equivalence class further requires that gates commutes with a unitary representation of a group $G$. This definition can be derived from the former Hamiltonian formulation.\footnote{For a more detailed discussion, see \cite{coser2019classification} In general, it is shown that $|\psi^\prime\rangle \approx_{\varepsilon} U |\psi \rangle$, where $U$ is a quantum circuit with (i) finite depth and (ii) with gates acting at most $O(\mathrm{poly}\log(n))$ qubits, such that the error of approximation scales as $\varepsilon = \| U|\psi\rangle -|\psi^\prime \rangle \| \leq \delta_n$, where $\delta_n$ decays faster than any polynomial in $n$. We will not concern ourselves with such details, as it will suffice to take the definition with strictly local gates.} 
In the case of degenerate ground states, applying the FDQC to a basis of the ground-state subspace of $H$ generates a basis of the ground-state subspace of $H'$.

One reason this circuit definition is natural is that the FDQC only generates ``local'' entanglement. To see why, consider $A \subseteq [n]\equiv \{1,2,\cdots,n\}$. Given a pure state $|\phi\rangle$, with density matrix $\phi=|\phi\rangle\langle \phi|$, we denote the reduced density matrix of $A$ by  $\phi_A = \tr_{\bar A}\phi$, where $\bar A$ is the complement of $A$. The  $\alpha$-Rényi entropy of $A$ is given by \cite{renyi1961measures}
\begin{equation}
S_\alpha(\phi_A) \equiv \frac{\log_2 \tr(\phi_A^\alpha)}{1-\alpha}\;,
\label{eq:renyientropy}
\end{equation}
defined for $\alpha \in [0,\infty)$, where we will denote $S(\cdot) \equiv S_1(\cdot)$. It is widely conjectured that gapped ground states of geometrically local Hamiltonians do satisfy the area law, formalized as:
\begin{defn}
    We say that a quantum state $\psi$ is area-law if for $A \subseteq [n]$ if
    \begin{equation}
        S_\alpha(\psi_A) = O(|\partial A|)\;,
    \end{equation}
    for all $\alpha \in [0, \infty)$, where $\partial A \equiv\{j \in A\;|\; \mathrm{dist}_{\Lambda_D}(j,\bar A)=1\}$ is the number of sites at the boundary of $A$, with respect to some dimension $D$. If the state is area-law for all subregions, it is said that $\psi$ is area-law.
    \label{def:arealawstate}
\end{defn}
The von Neumann $(\alpha=1)$ entanglement area law is widely conjectured for ground states of uniformly gapped, geometrically local Hamiltonians with bounded interaction strength, under appropriate assumptions on ground-state degeneracy. It is established in one dimension \cite{hastings2007area,brandao2013area, arad2012improved, arad2013area}, while Ref. \cite{anshu2022area} proves a nearly linear entropy bound for locally gapped frustration-free two-dimensional systems across vertical bipartitions. Definition 2 imposes the stronger requirement of an area law also at $\alpha=0$, corresponding to an exact Schmidt-rank bound. We adopt this stronger condition as an explicit assumption; it is satisfied by the states discussed in this work.

As we prove in Appendix \ref{app:entropygrowthproof}, for large enough subsystems, the entropic growth under the FQDC is limited by the boundary size of the corresponding region.
 \begin{lem}
     Let $\psi = \ket{\psi}\bra{\psi}$ be a $D$-dimensional area-law state, and let $\psi^\prime = U \psi U^\dagger$ be in the same quantum phase. Then, $\psi^\prime$ is also area-law with the same $D$.
     \label{lem:entropygrowth}
 \end{lem}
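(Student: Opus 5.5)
The plan is to bound how much a single layer of the circuit can change the entropy of $A$, and then iterate over the $O(1)$ layers. Write $U=u_{d}\cdots u_1$ with $d=\mathrm{depth}(U)=O(1)$. Each gate has support of size at most $k=O(1)$ and diameter at most $r=O(1)$. Fix a region $A$ and write $\psi^{(\ell)} = u_\ell\cdots u_1\psi(u_\ell\cdots u_1)^\dagger$.

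\textbf{Step 1 (one layer).} Split the gates of layer $u_\ell$ into two sets. The first set contains gates supported entirely inside $A$ or entirely inside $\bar A$; these are local unitaries across the cut and leave every Rényi entropy $S_\alpha$ of $A$ unchanged. The second set contains the gates $u_{j,\ell}$ whose support straddles the cut, i.e.\ meets both $A$ and $\bar A$. A straddling gate acts on at most $k$ qubits, of which at most $k$ lie in $A$. Using the operator-Schmidt decomposition of the gate across $A|\bar A$, it has operator Schmidt rank at most $4^{k}$. Applying it therefore multiplies the Schmidt rank of the state across $A|\bar A$ by at most $4^k$, so $S_0$ grows by at most $2k$. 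For general $\alpha$ I would instead use the elementary bound $|S_\alpha(\rho'_A)-S_\alpha(\rho_A)|\le 2|\mathrm{supp}(u)\cap A|\log 2$ for a unitary $u$ acting on $A\bar A$. This bound follows by purifying, or more simply by noting that the gate can be simulated by teleportation-like swaps: swap the $A$-part of the support into $\bar A$, act there, and swap back, where each swap changes $S_\alpha$ by at most $2\log\dim$. This subadditivity-free argument holds for every $\alpha\in[0,\infty)$, which matters because Rényi entropies are not subadditive.

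\textbf{Step 2 (counting straddling gates).} A gate that straddles the cut has diameter at most $r$ and contains a site of $A$ within distance $r$ of $\bar A$. Hence its $A$-sites lie in the thickened boundary $N_r(\partial A)=\{i\in A:\mathrm{dist}(i,\bar A)\le r\}$. Gates within a layer have disjoint supports, so
\[
\sum_{j\ \mathrm{straddling}}|\mathrm{supp}(u_{j,\ell})\cap A|\le |N_r(\partial A)|\le c_{D}\, r^{D}|\partial A|.
\]
The last inequality holds because every site of $N_r(\partial A)$ lies within distance $r$ of some boundary site, and a lattice ball of radius $r$ in $\Lambda_D$ has $O(r^D)$ sites. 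Combining this with Step 1 gives $S_\alpha(\psi^{(\ell)}_A)\le S_\alpha(\psi^{(\ell-1)}_A)+2c_D r^D|\partial A|$.

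\textbf{Step 3 (iterate).} Summing over the $d$ layers gives
\[
S_\alpha(\psi'_A)\le S_\alpha(\psi_A)+2dc_Dr^D|\partial A|=O(|\partial A|)+O(|\partial A|),
\]
uniformly in $\alpha$ and for the same lattice dimension $D$. This establishes that $\psi'$ is area-law.

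The main obstacle is Step 1 for general $\alpha$. There the clean bound must avoid subadditivity, and the most robust route is the Schmidt-rank/majorization argument: a unitary on $A_0\cup B_0$ with $|A_0|=m$ can be written as a sum of $4^m$ product operators, which I would exploit via the swap trick. A secondary subtlety is the paper's remark that the statement holds ``for large enough subsystems.'' When $|\partial A|$ is small relative to $r$ the constant overhead dominates, but it is still bounded by a constant times $|\partial A|\ge1$ for nonempty proper $A$, so the $O(\cdot)$ statement survives.
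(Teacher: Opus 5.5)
Your skeleton matches the paper's: drop the one-sided gates, confine the $A$-sites of straddling gates to a thickened boundary of size $O(r^D|\partial A|)$ using disjointness within a layer, and sum over $O(1)$ layers. Your $S_0$ estimate in Step 1 is also correct. Your count $4^{|\mathrm{supp}(u)\cap A|}$ for the operator Schmidt rank is the right one; the paper writes $2^{|\mathrm{supp}_A(\ell)|}$, which does not affect the conclusion.

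You diverge in the per-$\alpha$ treatment, and there the justification is circular. ``Swapping the $A$-part of the support into $\bar A$'' is itself a unitary straddling the cut and acting on $m$ qubits of $A$. So asserting that it changes $S_\alpha$ by at most $2m$ is exactly the claim being proved. The operator-Schmidt decomposition by itself only controls the Schmidt rank, so as written Step 1 is established only for $\alpha=0$.

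That is all you need, and it is how the paper concludes. Definition~\ref{def:arealawstate} imposes the area law for every $\alpha\in[0,\infty)$, in particular $S_0(\psi_A)=O(|\partial A|)$. Your Steps 1--3 for $S_0$ then give $S_0(\psi'_A)\le S_0(\psi_A)+O(|\partial A|)=O(|\partial A|)$. Monotonicity $S_\alpha(\psi'_A)\le S_0(\psi'_A)$ handles all $\alpha$ at once, so your ``main obstacle'' is self-inflicted.

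If you do want the per-$\alpha$ statement, it is true and has a short proof.
\begin{enumerate}
\item For any state on $XY$, the Pauli twirl on $X$, $\rho\mapsto \frac{\mathbb I_X}{d_X}\otimes\mathrm{tr}_X\rho$, is a mixed-unitary channel. Its output is majorized by its input, so Schur-concavity gives $S_\alpha(XY)\le S_\alpha(Y)+\log d_X$ for every $\alpha\in[0,\infty]$.
\item Purifying $XY$ by $Z$ and applying the same bound to $XZ$ gives $S_\alpha(XY)\ge S_\alpha(Y)-\log d_X$.
\item Take $X=A_0$, the $A$-sites of the straddling gates of one layer, and $Y=A\setminus A_0$. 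Apply the one-sided gates first; they do not change the spectrum of the $A$-marginal.
\item The straddling gates then act trivially on $Y$. Comparing the $A$-marginal before and after them with the unchanged $Y$-marginal gives $|S_\alpha(\psi^{(\ell)}_A)-S_\alpha(\psi^{(\ell-1)}_A)|\le 2|A_0|$.
\end{enumerate}

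With this repair your route proves more than the paper's. It is two-sided and works $\alpha$ by $\alpha$, so for example a von Neumann area law alone is preserved by FDQCs. The paper's argument, by contrast, needs the Schmidt-rank ($\alpha=0$) area law of $\psi$ as input.
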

The intuition is that finite-depth unitaries can only create entanglement in the region near the boundary, as schematically illustrated on the bottom of Fig. \ref{fig:theorem}. We say that FDQCs \emph{preserve} the area law by Lemma \ref{lem:entropygrowth}, since it maintains its entropy scaling. This reinforces FDQCs as the natural generator of gapped phases.  

However, we note that imposing area laws does not correspond to the space of ground states, as discussed in \cite{ge2016area}. There are area-law states that do not admit a parent geometrically-local Hamiltonian on which it is a ground state of. We will take the area-law as a necessary information-theoretic constraint, making our results relevant to other physical scenarios where the area law emerges beyond ground states, such as in monitored circuits \cite{vasseur2026houches}.

\subsection{Magic and volume-law typicality \label{sec:magictypicallity}}
The basics of stabilizer resource theory and non-stabilizerness are reviewed in App. \ref{app:stabreview}. We denote $\mathcal P_n$ as the Pauli group, $V_\mathcal S\subseteq \mathcal H$ as the stabilizer code of the stabilizer group $\mathcal S \subseteq \mathcal P_n$. The set of pure stabilizer states will be denoted as $\mathrm{stab}_n$, which is in bijection with the set of stabilizer groups of full rank. Given any $n$-qubit state $\psi$, one of the simplest measures is the stabilizer nullity, defined as \cite{beverland2020lower}
\begin{equation}
    \nu(\psi) = n-\mathrm{rank}[\mathrm{Stab}(\psi)]\;,
    \label{eq:stabnullity}
\end{equation}
where $\mathrm{Stab}(\psi) = \{P \in \mathcal P_n: |\tr(P \psi)|=1\}$ is its stabilizer group. For a pure stabilizer state, its stabilizer group is full rank, and thus has vanishing nullity as expected of a faithful monotone. $\nu$ counts how much of a quantum state is fixed by a stabilizer group, and as reviewed in App. \ref{app:structurenullity}, a pure quantum state of nullity $\nu$ admits can be compressed to \cite{leone2021quantum}
\begin{equation}
    \psi  = |\psi\rangle \langle \psi|\rightsquigarrow (\mathcal T_{\mathrm{Stab}(\psi)},\{h_i\}_{i=1}^k )\;,
    \label{eq:fixednullity_rep}
\end{equation}
where $\mathcal T_{\mathrm{Stab}(\psi)}$ is called the stabilizer tableau of the group $\mathrm{Stab}(\psi)$, corresponding to a binary matrix of its generators, and $\{h_i\}_{i=1}^k \subseteq \mathcal P_n$ is a set of Pauli operators with non-vanishing expectation value in the state. We also have that $k = O(4^{\nu})$: If $\nu = O(\log n)$, the corresponding quantum state can be classically stored with a polynomial amount of data. In \cite{gu2024magic}, it was shown that the 2-Rényi entropy of subsystems can be evaluated in this classical representation with a classical algorithm in  $O(4^\nu n^3)$ time, efficient in this sub-logarithm regime. 

For stabilizer states, it is known that the nullity corresponds to entanglement, and can be used to show the area law.
\begin{lem}[\cite{fattal2004entanglement, hamma2005bipartite}] 
    Let $\psi^\mathcal S \in \mathrm{stab}_n$ be a pure stabilizer state with stabilizer group $\mathcal S \subseteq \mathcal P_n$, and let $ A \subseteq [n]$. Then, for all $\alpha \in [0, \infty)$, we have that
    \begin{enumerate}
        \item $S_\alpha(\psi^\mathcal S_A) =\nu(\psi^\mathcal S_A) = |A|-\mathrm{rank}(\mathcal S_A)$, where $\mathcal{S}_A \equiv \{s \in \mathcal S\;:\;\tr_{\bar A}(s)\neq 0\}$ is the stabilizer subgroup supported on $A$; 
        \item Given $\mathcal S = \langle s_1, s_2, \cdots, s_n\rangle$ with the corresponding Hamiltonian $H_\mathcal S= -\sum_{i=1}^n s_i$ geometrically local in $D$ dimensions, then, there is a $c=O(1)$ constant such that:
        \begin{equation}
            S_\alpha (\psi^\mathcal S_A) \leq c |\partial A|\;.
        \end{equation}
        That is, $\psi^\mathcal S$ is area-law.
    \end{enumerate}
    \label{lem:stabarealaw}
\end{lem}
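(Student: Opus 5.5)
Part 1 rests on the structure of the reduced state of a stabilizer state; part 2 combines part 1 with a locality argument on the generators.

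For part 1, expand the pure stabilizer state as $\psi^\mathcal S = 2^{-n}\sum_{s\in\mathcal S} s$ and trace out $\bar A$. Any $s$ acting nontrivially on $\bar A$ has vanishing partial trace. The surviving terms are exactly the elements of $\mathcal S_A$, each contributing $2^{|\bar A|}\,s|_A$. Hence
\begin{equation}
\psi^\mathcal S_A = 2^{-|A|}\sum_{s\in\mathcal S_A} s|_A = 2^{-(|A|-r)}\,\Pi_{\mathcal S_A},
\end{equation}
where $r=\mathrm{rank}(\mathcal S_A)$ and $\Pi_{\mathcal S_A}=2^{-r}\sum_{s\in \mathcal S_A}s|_A$ is the projector onto the code $V_{\mathcal S_A}$ of dimension $2^{|A|-r}$. (It is a projector because $\mathcal S_A$ is an abelian group not containing $-\mathbb 1$.)

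The reduced state is therefore maximally mixed on a $2^{|A|-r}$-dimensional subspace. Its spectrum is flat, so every Rényi entropy equals $|A|-r$, independently of $\alpha$, including $\alpha=0,1,\infty$ by continuity or direct evaluation.

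To identify this with the nullity, I would note that $\mathrm{Stab}(\psi^\mathcal S_A)$, read as the Paulis $P$ on $A$ with $|\tr(P\psi_A)|=1$, is precisely $\mathcal S_A|_A$. Indeed, $\tr(P\psi_A)$ has modulus one only if $P$ is, up to sign, a stabilizer of the code, and the stabilizers of the code are exactly $\mathcal S_A$. This gives $\nu(\psi^\mathcal S_A)=|A|-r$.

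For part 2, the natural tool is a generator-counting argument. The entropy is symmetric, $S(\psi_A)=S(\psi_{\bar A})$, and it equals $|A|-\mathrm{rank}(\mathcal S_A)$. So it suffices to lower-bound $\mathrm{rank}(\mathcal S_A)$ using the generators supported inside $A$. Let $G_{\rm in}$ be the generators whose support lies entirely in $A$, and $G_\partial$ those straddling the cut. Each generator has diameter $O(1)$, so every straddling generator touches a site within distance $O(1)$ of $\partial A$. The uniform-locality assumption means each site lies in $O(1)$ generators, so $|G_\partial| \le c'|\partial A|$ with $c'$ depending only on $D$ and the locality range.

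The delicate step is the rank computation. The $n$ generators are independent, and the generators inside $\bar A$ commute with and are independent of those inside $A$. A clean route is the standard identity
\begin{equation}
S(\psi_A)=\tfrac12\big(n-\mathrm{rank}\,\mathcal S_A-\mathrm{rank}\,\mathcal S_{\bar A}\big).
\end{equation}
Writing $\mathcal S_{A}\supseteq\langle G_{\rm in}\rangle$ and similarly for $\bar A$, independence gives
\begin{equation}
\mathrm{rank}\,\mathcal S_A+\mathrm{rank}\,\mathcal S_{\bar A}\ge n-|G_\partial|,
\end{equation}
and hence $S_\alpha(\psi_A)\le \tfrac12|G_\partial|\le c|\partial A|$.

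The obstacle is justifying that identity. I would derive it from the fact that $\mathcal S/(\mathcal S_A\mathcal S_{\bar A})$ has rank equal to twice the entanglement, via the symplectic pairing of the restrictions $s|_A$. If that derivation becomes messy, the cruder bound $S\le|G_\partial|$ suffices for an area law. That bound follows because the generators in $G_{\rm in}$ alone give $|A|-\mathrm{rank}\,\mathcal S_A\le |A|-|G_{\rm in}|$, with the remaining deficit controlled by the boundary generators.

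One caveat for that fallback: it needs the count of generators meeting $A$ to be at most $|A|+O(|\partial A|)$. This holds when generators are assigned to sites as in the Hamiltonian $H_\mathcal S=-\sum_i s_i$ with $n$ terms.
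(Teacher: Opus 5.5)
Your proof is correct. Part 1 is the paper's argument: partial trace of $2^{-n}\sum_{s\in\mathcal S}s$, a flat spectrum, and $\mathrm{Stab}(\psi^\mathcal S_A)=\mathcal S_A$.

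\textbf{Part 2: how your route differs.} The paper's argument is asymmetric. It splits the generators as $G_{\rm in}\sqcup G_\partial\sqcup G_{\bar A}$ and uses $|G_{\bar A}|\le|\bar A|$ (independent commuting Paulis on $\bar A$). This gives $|G_{\rm in}|\ge|A|-|G_\partial|$, hence $S_\alpha(\psi^\mathcal S_A)\le|G_\partial|$. It then bounds $|G_\partial|\le|\mathrm{supp}(G_\partial)|$, again by independence, with that support confined to an $O(1)$-thickening of $\partial A$. You instead symmetrize via $2S=n-\mathrm{rank}\,\mathcal S_A-\mathrm{rank}\,\mathcal S_{\bar A}$. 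This buys a sharper constant, $S\le\tfrac12|G_\partial|$. You also bound $|G_\partial|$ by a bounded-degree count (each site in $O(1)$ generators) rather than by independence. That degree bound is supplied by the paper's uniform interaction bound $J$, but the independence trick makes it unnecessary.

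\textbf{The identity is not an obstacle.} Apply part 1 to both $A$ and $\bar A$ and use $S(\psi^\mathcal S_A)=S(\psi^\mathcal S_{\bar A})$ for a pure state, which you already noted. Adding the two equalities gives the identity at once, with no need for the quotient $\mathcal S/(\mathcal S_A\mathcal S_{\bar A})$. You also correctly use only $\mathrm{rank}\,\mathcal S_A\ge|G_{\rm in}|$. That inequality is all that holds in general, since products of straddling generators can lie in $\mathcal S_A$. The paper writes it as an equality, though only the inequality is used.

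\textbf{The fallback's caveat points the wrong way.} Your fallback is essentially the paper's proof. To get $|A|-|G_{\rm in}|\le|G_\partial|$, you need a \emph{lower} bound on the generators meeting $A$, namely $|G_{\rm in}|+|G_\partial|\ge|A|$. An upper bound of the form $|A|+O(|\partial A|)$ does not help. The lower bound holds automatically: the generators avoiding $A$ are independent commuting Paulis on $\bar A$, so there are at most $|\bar A|$ of them. No hypothesis about assigning generators to sites is needed.
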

The proof is presented in App. \ref{app:stabarealaw} for completeness. As a measure of non-stabilizerness, the nullity is quite unstable: Since it only counts how many Paulis stabilize a quantum state, states that have a near-stabilizer Pauli spectrum, say by yielding $\tr (P\psi) = 1-\varepsilon$ for $n$ Paulis, will have maximum nullity for any $\varepsilon >0$. To capture such deformations of stabilizer states, it is useful to consider the Stabilizer Rényi entropy (SRE) \cite{leone2022stabilizer}, given by
\begin{equation}
    M_\alpha(\psi) \equiv \frac{1}{1-\alpha}\log_2\left[ \frac{1}{2^n} \sum_{P \in \mathcal P_n} \tr(P\psi)^{2\alpha}\right]\;,
    \label{eq:SRE}
\end{equation}
taking $\alpha \in [0, \infty)$. Both measures in Eqs. (\ref{eq:stabnullity}, \ref{eq:SRE}) are additive, positive, invariant under Clifford operators, and faithful. For a discussion of those properties, we defer to App. \ref{app:stabreview}. One important fact is that the SRE is upper-bounded by the nullity, $M_\alpha(\psi) \leq \nu(\psi)$, corroborating the intuition that it captures subtler deformations of stabilizer states.

Numerical studies \cite{haug2023quantifying, tarabunga2024nonstabilizerness} based on the stabilizer Rényi entropy suggest that, for generic ground states of one-dimensional Hamiltonians, the stabilizer Rényi entropy typically follows a volume law, that is,
\begin{equation}
    M_\alpha(\psi) \sim \#_1 n -\#_2+\cdots\;,
    \label{eq:magicvolumelaw}
\end{equation}
where $\#_1, \#_2$ are non-universal constants, found both in gapped and gapless phases. This seems to indicate that \emph{typical} ground states do have $M_\alpha(\psi) = \Theta(n)$. This extensive behaviour is the scaling obtained for typical random states \cite{leone2022stabilizer}, showing that ground states are typically as non-stabilizer as generic states. It is hard to make this statement precise, but based on work characterizing non-stabilizerness of random matrix product states \cite{chen2024magic}, it is possible to make a mathematical argument for this statement, as we put forward in App. \ref{app:vollawmpsargument}. This is somewhat expected: Since non-stabilizerness of a state is closely related to the $T$ gate count in its circuit synthesis \cite{beverland2020lower}, even if we have a short-range entangled ground state \cite{zeng2015gapped}, we do not expect that there is a non-trivial upper bound on the number of the $T$ gates necessary for its circuit, generically saturating it.

This discrepancy between the behaviors of entanglement and non-stabilizerness compared to the typical behavior suggests the need for an understanding of the scalings of entanglement and magic in area-law quantum states. This is what we will do in the next section.

\section{Locality of non-stabilizerness \label{sec:localityofnonstabilizerness}}
In this section, we show that by restricting the recent classification of quantum states to be area-law, we can show the presence of an analogue of area-law for non-stabilizerness for a certain class of states that also inherit efficiency in various information processing tasks, and show that states with faster-than-area growth of subsystem non-stabilizerness have computationally hard entanglement.

\subsection{Entanglement versus non-stabilizerness at large scales}
In \cite{gu2024magic}, the following definition was put forward:

\begin{defn}
    Given a region $A \subseteq [n]$ with $\min \{|A|, |\bar A|\} =\omega(\log n)$, we say that a family of $n$-qubit states $\psi =\ket{\psi}\bra{\psi}$ of stabilizer nullity $\nu$ is either:
    \begin{itemize}
        \item \textbf{Entanglement dominated} (ED), if $S(\psi_A) = \omega(\nu)$,
        \item \textbf{Magic dominated} (MD), if $S(\psi_A) = O(\nu)$.
    \end{itemize}
    The ED/MD classification depends on the chosen family of bipartitions $A|\bar A$. Unless otherwise stated, ED and MD refer to this specified family of cuts.
    \label{def:EDMDstate}
\end{defn}
Note that this definition applies for large subregions, where the corresponding reduced density matrix $\psi_A = \tr_{\bar A}\psi$ does not have an efficient compression in general, since they are of dimension $2^{\omega(\log n)}$. Entanglement-dominated states are characterized by states that have asymptotically low nullity compared to their entanglement, and therefore have near-stabilizer properties. Magic-dominated states, on the other hand, have high nullity compared to their entanglement and have no particular stabilizer structure.

Based on this dichotomy, \cite{gu2024magic} was able to characterize the \emph{hardness} of entanglement manipulation in the two classes of states. Consider a subregion $A \subseteq [n]$, and consider two observers, one in $A$ and another in the complement $\bar A$, having access to $m$ samples of a state $\psi$, and access to polynomial-time quantum circuits, classical communication between $A$ and $\bar A$, and computational-basis measurements. Three tasks were classified, illustrated in Fig. \ref{fig:EDentanglementprocessing}. First, we have entanglement estimation, which aims to compute $S(\psi_A)$ up to some $\varepsilon$ relative error. Second, entanglement distillation aims to distill $M_+$ Bell pairs from $\psi$ using LOCC operations. Lastly, entanglement dilution is the task of preparing $\psi$ from $M_-$ Bell pairs across the $A|\bar A$ partition constrained to $N_\mathrm{CC}$ bits of communication. It was shown that ED-dominated states have efficient protocols for all three tasks, where entanglement can be estimated within $\varepsilon = o(1)$ relative error, the number of Bell pairs that can be: (i) extracted, with $M_+/S(\psi_A)  = 1-o(1)$ and (ii) diluted, where $M_-/S(\psi_A) =1+o(1)$ are nearly optimal. For MD states, it was shown that efficient protocols with such parameters are impossible.  

We aim to understand this classification when we restrict to area-law states. In what follows, we will denote an area-law MD/ED state as an area-law state that is ED/MD for all choices of subregions $A$ in Definition \ref{def:EDMDstate}. For MD states, we show that this restriction does not change the hardness of the entanglement manipulation tasks.
\begin{thm}
    There is no single state-agnostic, sample-efficient protocol that achieves the following guarantees uniformly over all area-law MD states satisfying \(S(\psi_A)=\omega(\log n)\):  (1) estimates $S(\psi_A)$ with $\leq 1/3$ relative error, (2) distills a vanishing fraction of Bell pairs, with $M_+/S(\psi_A) \geq c >0$ and (3) dilutes $\psi$ across $A|\bar A$ using  $M_-/S(\psi_A) \leq C < \infty$ Bell pairs, with $c,C$ being fixed constants.
    \label{thm:arealawMDhardness}
\end{thm}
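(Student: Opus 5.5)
We reduce to the hardness construction of Ref.~\cite{gu2024magic}. That construction exhibits two ensembles of MD states that are computationally (or sample-) indistinguishable to any efficient observer. The ensembles nevertheless have very different entanglement across the cut $A|\bar A$, for instance $S(\psi_A)=\Theta(k)$ versus $S(\psi_A)=\Theta(k')$ with $k'\gg k$. The plan is to realize both ensembles inside the class of area-law states while keeping them MD and indistinguishable.

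Concretely, we build them on a $D$-dimensional lattice as follows. Pick a region $A$ with $|\partial A|=\omega(\log n)$. Along the boundary, place $m=\Theta(|\partial A|)$ disjoint local blocks straddling the cut, each of $O(1)$ diameter. On the union $B$ of these blocks, prepare a state from one of the two ensembles of \cite{gu2024magic}, for example pseudoentangled (subset-phase or pseudorandom) states. These states have entanglement either $\Theta(m)$ or $\Theta(\log^{1+\delta} n)=\omega(\log n)$ across the cut, and nullity $\Theta(m)$. The rest of the lattice is a product state. By construction the Schmidt rank across $A|\bar A$ is at most $2^{O(|\partial A|)}$, so all Rényi entropies, including $\alpha=0$, are $O(|\partial A|)$.

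The subtle point is that Definition~\ref{def:arealawstate} requires the area law for all subregions, not just $A$. A generic pseudorandom state on $B$ would violate it for cuts passing through the bulk of $B$, because $B$ is spread along the boundary. We will try to handle this in one of two ways. The first option is to arrange $B$ as a quasi-one-dimensional strip and generate the pseudorandom state with a brickwork circuit of depth $\mathrm{poly}\log n$ along the strip. Any cut $C$ then meets the strip in $O(|\partial C|)$ places and creates at most $\mathrm{polylog}$ entanglement per intersection. The second option, which is cleaner, is to use pseudoentangled states made of $m$ \emph{local} copies of small Bell-like blocks, twisted by a pseudorandom phase that is diagonal in the computational basis. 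Diagonal phases do not change the Schmidt rank across cuts that separate blocks, so entanglement stays controlled by the number of blocks cut, which is $O(|\partial C|)$. In either case MD must hold for every admissible cut. Since $\nu=\Theta(m)$ is large by genericity of the phases, and every region's entropy is $O(|\partial\cdot|)\le O(m)$ up to the bulk product part, we get $S=O(\nu)$.

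With the embedding in place, the hardness follows as in \cite{gu2024magic}. Suppose a single protocol achieved (1), (2) and (3). Then it estimates $S(\psi_A)$ within relative error $1/3$, which separates $\Theta(m)$ from $\omega(\log n)=o(m)$ and hence distinguishes the two ensembles. Similarly, distilling $c\,S(\psi_A)$ Bell pairs and diluting with $C\,S(\psi_A)$ Bell pairs gives a distinguisher: the protocol tuned for the low-entanglement ensemble would have to distill more than is possible, or dilute with fewer Bell pairs than the entanglement bound allows. Either way, polynomially many samples would break pseudorandomness, a contradiction. Phrasing this via pseudoentanglement also yields sample-efficiency lower bounds. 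The main obstacle is the all-subregions area law combined with the MD condition uniformly over cuts. I would first try the diagonal-phase construction, because it makes the Schmidt-rank bound essentially combinatorial.
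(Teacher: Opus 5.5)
Your reduction is the same as the paper's. You realize the two indistinguishable MD ensembles of Ref.~\cite{gu2024magic} (Haar versus subset-phase states on a set $B$ of qubits, $\ket{0}$ elsewhere) as area-law states, then invoke the distinguisher argument. The gap is at the step you yourself flag as the crux, the area law for \emph{every} subregion, and neither of your remedies works.

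\begin{itemize}
\item \emph{The diagonal-phase claim is false.} A pseudorandom phase $(-1)^{f(x)}$ depends on all bits. Take Bell blocks and a cut $C$ containing $j$ whole blocks and cutting none. The state is $\sum_{y_1,y_2}(-1)^{f(y_1,y_2)}\ket{y_1y_1}_C\ket{y_2y_2}_{\bar C}$. Its coefficient matrix is a generic $\pm1$ matrix, so its Schmidt rank is generically $2^{\min(j,m-j)}$, not $1$.
\item \emph{That family has no entanglement gap.} Across $A$ it carries exactly $m$ ebits for every $f$. As described, it supplies no low-entanglement ensemble to distinguish from.
\item \emph{The brickwork option is not an area law.} By your own count it gives $O(|\partial C|\,\mathrm{polylog}\,n)$, which is off by a polylog factor.
\end{itemize}

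The missing idea is that the state needs no special structure. For $\ket{\phi_B}\otimes\ket{0}^{\otimes(n-|B|)}$ one has $S_\alpha(\psi_C)\le\min(|B\cap C|,|B\setminus C|)$ for all $\alpha$, including $\alpha=0$. So everything reduces to the purely geometric statement $|B\cap C|=O(|\partial C|)$ for all $|C|\le n/2$; purity and $|\partial\bar C|\le 2D|\partial C|$ handle larger $C$.

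The paper obtains this geometric bound with a sparse packing (Lemma~\ref{lem:sparseembeddingarealaw}). It places $\nu=\mathrm{polylog}\,n$ qubits on an $\ell$-sparse set with $\ell\ge\nu^{1/(D-1)}$. A connected component holding $k\ge2$ of them then has volume at least $k\ell/2$, and by the isoperimetric inequality its boundary is $\gtrsim k$.

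Your dense layer can be rescued the same way, which shows the obstacle was misdiagnosed. Take $A$ a half-space and $B$ a slab of width $w=O(1)$ around the flat cut. Each line normal to the cut that meets $C$ either contains a point of $\partial C$ or lies entirely in $C$. There are at most $|C|/L\le|C|^{(D-1)/D}=O(|\partial C|)$ lines of the second kind, using $|C|^{1/D}<L$ and isoperimetry. Hence $|B\cap C|\le w(|\partial C|+|C|/L)=O(|\partial C|)$, and even a Haar state on $B$ is area law. Your worry is only justified when $\partial A$ is irregular and space-filling.

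With such a lemma your route would be a valid alternative, using polynomially many non-stabilizer qubits instead of $\mathrm{polylog}\,n$. Without it, the proposal contains no correct area-law argument.

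A minor point: in the distillation step the logic runs the other way. The guarantee on the high-entanglement ensemble forces, by indistinguishability, an output on the low ensemble that exceeds its $\log^{1+\delta}n$ ebits.
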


The proof is presented in App. \ref{app:arealawMDhardness}. The proof technique relies in a careful revisit of the arguments of \cite{gu2024magic}, showing that certain ensembles of MD states can be embedded into lattices using sphere-packings. This indicates that even states that host an entanglement area-law have hard entanglement manipulation. This is in line with previous work establishing the hardness of optimizing, preparing, and realizing area-law states for $D \geq 2$, due to their intrinsic exponential dimension \cite{ge2016area}. Our results show that if their magic is extensive, they also carry an intrinsic hardness to manipulate. We can readily apply this result to our discussion of ground states, since an extensive SRE is sufficient to witness MD states.
\begin{prop}
    Let $\psi$ be a quantum state with extensive SRE for some $\alpha$, $M_\alpha(\psi) = \Theta(n)$. Then, $\psi$ is magic-dominated for any subregion $A$.
\end{prop}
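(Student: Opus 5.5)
The plan is to chain two elementary bounds, one on the nullity and one on the entanglement, and compare them with Definition~\ref{def:EDMDstate}.

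First, I will use the inequality recalled above, $M_\alpha(\psi)\le \nu(\psi)$, valid for every $\alpha$. If this is only established for some range of $\alpha$ in App.~\ref{app:stabreview}, I will fall back on monotonicity of $M_\alpha$ in $\alpha$ together with $M_0(\psi)=\nu(\psi)$. The latter holds because the number of Paulis with nonzero expectation value is at least $2^{n+\nu}$ when the support is not a group, and the identity for $\alpha=0$ bounds $M_0$ accordingly. With the hypothesis $M_\alpha(\psi)=\Theta(n)$, this gives $\nu(\psi)=\Omega(n)$. In particular there is a constant $c>0$ with $\nu(\psi)\ge c n$ for all large $n$.

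Second, for any subregion $A$ I will bound the entanglement trivially. The reduced state $\psi_A$ lives on $|A|$ qubits, so
\[
S(\psi_A)\le S_0(\psi_A)\le \min\{|A|,|\bar A|\}\le n,
\]
using purity of $\psi$ for the $|\bar A|$ bound.

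Combining the two bounds gives $S(\psi_A)\le n\le \nu/c$. Hence $S(\psi_A)=O(\nu)$, which is exactly the MD condition of Definition~\ref{def:EDMDstate}. Since nothing depended on the choice of $A$, beyond the standing requirement $\min\{|A|,|\bar A|\}=\omega(\log n)$ that makes the classification applicable, $\psi$ is MD for every admissible cut.

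The only step needing care is justifying $M_\alpha\le\nu$ uniformly in $\alpha$, including $\alpha<1$ where monotonicity is needed. I would first try citing the appendix review directly. Failing that, I will prove it via the observation that every Pauli with $|\tr(P\psi)|=1$ has the form $\pm s$ with $s$ in the stabilizer group. Restricting to the commutant of $\mathrm{Stab}(\psi)$ then shows that the Pauli spectrum is supported on a set of size at most $2^{n+\nu}$ (modulo the group). I will also check the Landau convention for $\Theta$ used here, in particular that $f=\Theta(g)$ implies $f=\Omega(g)$ despite the typo in the footnote.
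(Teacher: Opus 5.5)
Your argument is correct and is essentially the paper's own proof: $\nu(\psi)\ge M_\alpha(\psi)=\Theta(n)$, combined with the trivial bound $S(\psi_A)\le n$, gives $S(\psi_A)=O(\nu)$ for every admissible cut. One correction to your fallback: $M_0(\psi)=\nu(\psi)$ is false in general (the single-qubit state $(\ket{0}+e^{i\pi/4}\ket{1})/\sqrt{2}$ has Pauli support $\{I,X,Y\}$, so $M_0=\log_2(3/2)<1=\nu$), and the support has size \emph{at most}, not at least, $2^{n+\nu}$; however, only $M_0\le\nu$ is needed, and your closing commutant remark correctly supplies it.
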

\begin{proof}
    The family of SRE serves as a lower bound for the stabilizer nullity \cite{leone2022stabilizer} $\nu(\psi) \geq M_\alpha(\psi)$, implying that $\nu = \Theta(n)$.  Since $S(\psi_A) \leq n$, for any $A$, it follows that $S(\psi_A) = O(\nu)$.
\end{proof}

\begin{figure*}[t!]
    \centering
    \includegraphics[width=0.6\linewidth]{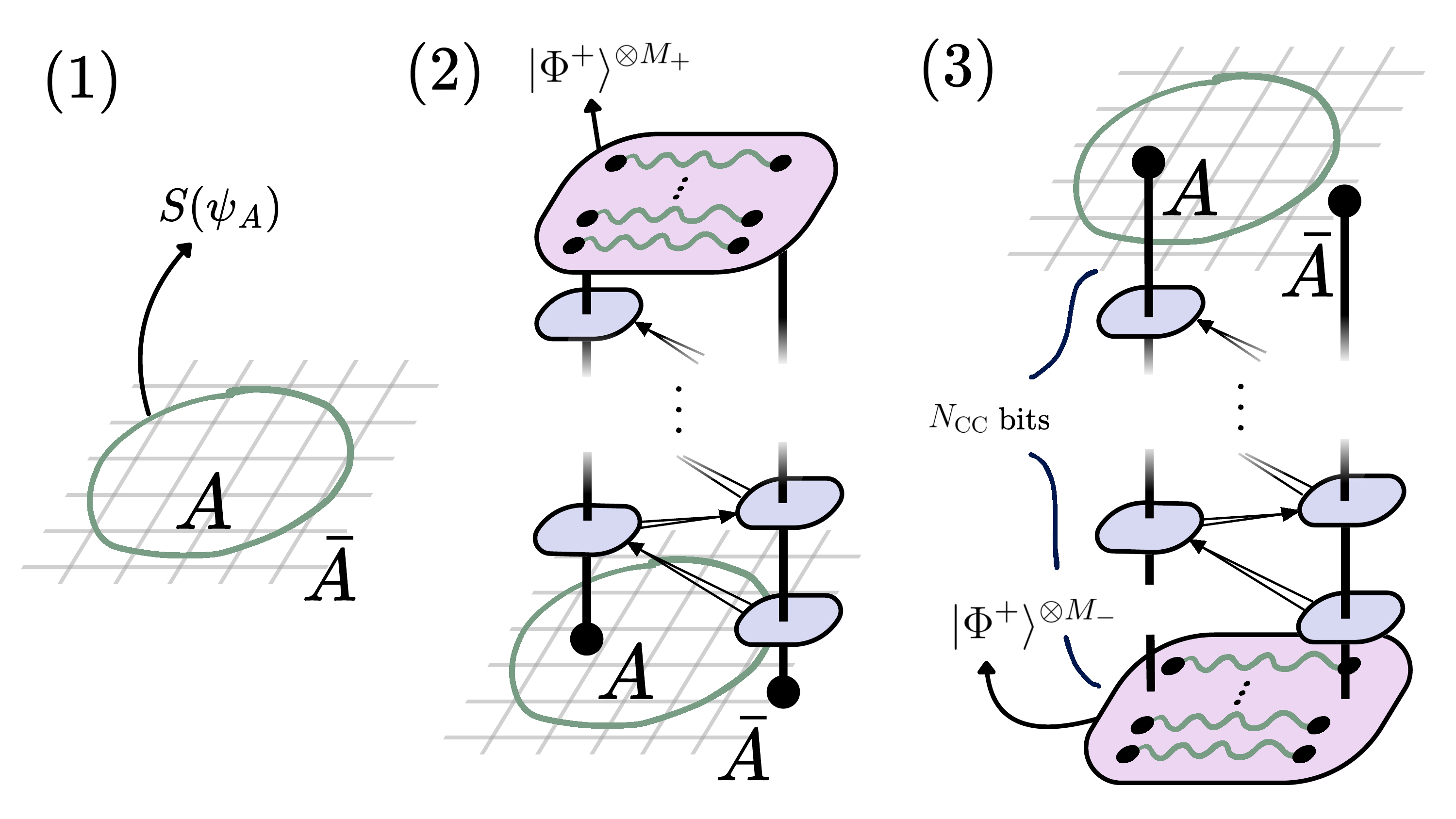}
    \caption{The three information-theoretic tasks on which entanglement-dominated states admit quantum algorithms that are both time- and sample-efficient, while magic-dominated states, even if they are area-law, do not.  (1) entanglement estimation, with the task of estimating the entanglement entropy between the bipartition defined by the region; (2) distillation, which seeks to extract Bell pairs from LOCC operations on the bipartition; (3) dilution, where the task is to build the target state from Bell pairs under LOCC with bounded communication complexity.}
    \label{fig:EDentanglementprocessing}
\end{figure*}

By Theorem \ref{thm:arealawMDhardness} and the discussion of Sec. \ref{sec:phasesandscaling}, this suggests that ground states are usually magic-dominated, making them a poor resource for entanglement manipulation under computational constraints.

An interesting point to make is that $D=1$ apparently avoids the hardness result. Indeed, as pointed out in \cite{gu2024magic}, MPS are magic-dominated and are well-known to efficiently approximate one-dimensional gapped ground states \cite{hastings2007area}. MPS-specific efficient entanglement protocols cannot be ruled out: for example, we know that they admit efficient tomography \cite{cramer2010efficient}. Therefore, $D=1$ area-law states can avoid the hardness barrier imposed in the separation of ED and MD states.

\subsection{Local SRE \& the area-law}

Having established the hardness result for area-law states, we now show that this transition between the area-law ED states to MD states is captured by an abrupt change in the behaviour of the magic of subsystems. To measure it, we refer to the following measure \cite{leone2022stabilizer, esposito2026stabilizer}:
\begin{equation}
    \tilde M_2(\psi) = M_2(\psi)-S_2(\psi)\;,
    \label{eq:localstab}
\end{equation}
which can be shown to vanish if and only if $\psi \propto \sum_{P \in \mathcal S} P$, where $P$ is not a necessarily full rank stabilizer group $\mathcal S \subseteq \mathcal P_n$. This is the case for subsystems of a pure stabilizer state $\psi^\mathcal S$, and thus we have that $\tilde M_2(\psi^\mathcal S_A)=0$ for any $A \subseteq[n]$, as reviewed in App. \ref{app:stabreview}. If we restrict to stabilizer pure states, $M_2$ is already faithful, and there is no need to consider this entropy-subtracted version.

Recall that states of fixed nullity have a compressed description described in Eq. (\ref{eq:fixednullity}). We can further show that the estimation of the stabilizer Rényi entropy for nullity restricted states has the following form:
\begin{lem}
    Let $\psi = \ket{\psi}\bra \psi$ be a $n$ qubit pure state with fixed nullity $\nu$ admitting a classical representation described on Eq. (\ref{eq:fixednullity_rep}), for any $A \subseteq [n]$ it follows that
    \begin{equation}
        \tilde M_2(\psi_A) = - \log\left[\frac{\sum_{i=0}^k \delta_i \langle h_i \rangle_\psi^{4}}{ \sum_{i=0}^k \delta_i \langle h_i \rangle_\psi^2}\right]\;,
    \end{equation}
    where $\delta_i =1$ if there is a stabilizer $Q \in \mathrm{Stab}(\psi)$ such that $\mathrm{supp}(h_i Q) \subseteq A$ and $\delta_i =0$ otherwise. Furthermore, there is a classical algorithm that takes as an input the representation and computes $\tilde M_2(\psi_A)$ in $O(4^\nu n^3)$ time.
    \label{lem:stab_renyi_formula}
\end{lem}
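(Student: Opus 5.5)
We plan to start from the Pauli expansion of the reduced state. Writing $\psi = 2^{-n}\sum_{P\in\mathcal P_n}\tr(P\psi)P$ and tracing out $\bar A$ gives $\psi_A = 2^{-|A|}\sum_{P \in \mathcal P_A}\tr(P\psi)\,P$, where $\mathcal P_A$ are the Paulis supported on $A$ (tensored with identity on $\bar A$). Consequently
\begin{equation}
2^{|A|}\tr(\psi_A^2)=\sum_{P\in\mathcal P_A}\tr(P\psi)^2,\qquad 2^{|A|}\tr(\psi_A^2)^2\,2^{-M_2(\psi_A)}=\sum_{P\in \mathcal P_A}\tr(P\psi)^4,
\end{equation}
using the standard definition of $M_2$ for mixed states normalized by the purity. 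This yields $\tilde M_2(\psi_A) = M_2(\psi_A)-S_2(\psi_A) = -\log\bigl[\sum_{P\in\mathcal P_A}\tr(P\psi)^4/\sum_{P\in\mathcal P_A}\tr(P\psi)^2\bigr]$. Here the normalizations $2^{|A|}$ and the purity factors cancel in the entropy-subtracted combination. We will check this against the convention in App.~\ref{app:stabreview}, since the exact normalization of $M_2$ for mixed states fixes whether the prefactors cancel exactly.

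The second step restricts both sums to the support of the Pauli spectrum using the structure of fixed-nullity states (App.~\ref{app:structurenullity}). The nonzero expectation values of $\psi$ occur only on Paulis of the form $h_i Q$ with $Q\in\mathrm{Stab}(\psi)$, where $h_0=\mathbb 1$ and the $h_i$ range over coset representatives of $\mathrm{Stab}(\psi)$ in its commutant. Moreover $\tr(h_iQ\psi)=\pm\tr(h_i\psi)$, since $Q\psi=\pm\psi$. Each sum over $\mathcal P_A$ therefore splits into cosets $h_i\,\mathrm{Stab}(\psi)$, and each coset contributes $N_i\langle h_i\rangle^{2}$ or $N_i\langle h_i\rangle^{4}$. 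Here $N_i$ is the number of $Q$ with $\mathrm{supp}(h_iQ)\subseteq A$.

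The main obstacle is to show that $N_i$ takes the same value $|\mathcal S_A|$ whenever it is nonzero, so that it factors out as the common weight $\delta_i$. The argument is that if $h_iQ_0$ is supported on $A$, then $h_iQ$ is supported on $A$ exactly when $Q_0Q\in\mathcal S_A$. The set of valid $Q$ is thus a coset of the subgroup $\mathcal S_A$, of size $|\mathcal S_A|$ independent of $i$. The common factor cancels in the ratio, which gives the stated formula with $\delta_i\in\{0,1\}$. We must also handle phases and signs: Paulis are taken modulo phase, and only even powers of $\tr(P\psi)$ appear, so signs drop out.

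For the algorithm, deciding $\delta_i$ is a linear-algebra problem over $\mathbb F_2$. In the symplectic representation we ask whether $h_i$ plus some element of the row space of the tableau $\mathcal T_{\mathrm{Stab}(\psi)}$ vanishes on the coordinates of $\bar A$. Equivalently, we ask whether the $\bar A$-restriction of $h_i$ lies in the span of the $\bar A$-restricted generators. Gaussian elimination of the restricted tableau costs $O(n^3)$ once, after which each membership test costs at most $O(n^2)$, or $O(n^3)$ naively. With $k=O(4^\nu)$ coset representatives, the total cost is $O(4^\nu n^3)$, and the two sums are then assembled in $O(k)$ time.
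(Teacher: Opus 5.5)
Your proposal is correct and follows essentially the paper's route. You expand $\psi_A$ in Paulis and use the coset decomposition $\mathrm{supp}(\psi)=\bigsqcup_i h_i\,\mathrm{Stab}(\psi)$, so that the $A$-supported part of each coset is either empty or a coset of $\mathcal S_A=\mathrm{Stab}(\psi_A)$; your explicit count $N_i\in\{0,|\mathcal S_A|\}$ is what the paper uses implicitly when writing $\psi_A=\Pi_A K_A$. You then decide each $\delta_i$ by row-space membership of the $\bar A$-projected symplectic vectors, as the paper does.

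One caution on your first step: the prefactors cancel exactly only under the paper's unnormalized convention $M_2(\rho)=-\log\bigl[2^{-|A|}\sum_{P}\tr(P\rho)^4\bigr]$. With the purity-normalized identity you displayed, $M_2-S_2$ would carry an extra $-2S_2(\psi_A)$, so you should use the paper's convention.
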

The proof is presented in App. \ref{app:stab_renyi_formula}, and shows that states of fixed nullity also have their local non-stabilizerness obtained within the same complexity as their subsystem entropy.  It also shows an interpretation: It measures the fluctuations of the bad generators $\{h_i\}$, responsible for generating the non-stabilizerness itself.

The technical result that allows us to connect $\tilde M_2$ with ED and MD states is the following upper bound:
\begin{lem}
    For any $n$ qubit state $\psi$, $\tilde M_2(\psi_A) \leq \nu(\psi) \;.$
\label{lem:localstabbound}
\end{lem}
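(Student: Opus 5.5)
The plan is to use the Clifford normal form for states of fixed nullity, which cuts $\tilde M_2(\psi_A)$ down to a quantity on $\nu$ qubits only, and then to bound that quantity with Cauchy--Schwarz and a purity estimate.

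\emph{Step 1: Pauli form of $\tilde M_2$.} For the reduced state $\rho=\psi_A$ on $|A|$ qubits, write $c_P=\tr(P\rho)=\tr\big((P\otimes \mathbb 1_{\bar A})\psi\big)$. From Eq.~(\ref{eq:SRE}) and $\tr\rho^2=2^{-|A|}\sum_P c_P^2$, the factors $2^{-|A|}$ cancel and
\begin{equation}
\tilde M_2(\psi_A)=-\log_2\frac{\sum_{P\in\mathcal P_A}c_P^4}{\sum_{P\in\mathcal P_A}c_P^2}\;,
\end{equation}
with Paulis counted modulo phases.

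\emph{Step 2: reduction to $\nu$ qubits.} Let $G=\mathrm{Stab}(\psi)$, of rank $r=n-\nu$. There is a Clifford $C$ with $\psi=C(|0\rangle\langle 0|^{\otimes r}\otimes\phi)C^\dagger$, where $\phi$ is a pure state on $\nu$ qubits. Only Paulis in the centralizer $G^\perp$ can have nonzero expectation, because a Pauli anticommuting with some $s\in G$ has expectation zero. Modulo $G$, the centralizer $G^\perp$ splits into $4^\nu$ cosets $Q\cdot G$, labelled by Paulis $Q$ on the last $\nu$ qubits, and $|c|=|\tr(Q\phi)|$ is constant on each coset.

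Now fix $A$. A coset that contains an element supported on $A$ meets $\mathcal P_A$ in a coset of $G_A$, where $G_A$ is the subgroup of $G$ supported on $A$. The labels $Q$ of such cosets form a subgroup $K$ of the $\nu$-qubit Pauli group, since such cosets are closed under products. Therefore both sums in Step 1 carry the common factor $|G_A|$, and
\begin{equation}
\tilde M_2(\psi_A)=-\log_2\frac{\sum_{Q\in K}\tr(Q\phi)^4}{\sum_{Q\in K}\tr(Q\phi)^2}\;.
\end{equation}
This step is essentially the content of Lemma~\ref{lem:stab_renyi_formula}, and I would take it from there.

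\emph{Step 3: Cauchy--Schwarz.} Cauchy--Schwarz gives $\big(\sum_K c_Q^2\big)^2\le |K|\sum_K c_Q^4$, so the ratio inside the logarithm is at least $\sum_K c_Q^2/|K|$. It therefore suffices to show
\begin{equation}
\sum_{Q\in K}\tr(Q\phi)^2\ \ge\ \frac{|K|}{2^\nu}\;.
\end{equation}
Without this step one only gets $2\nu$, from the cruder bound $|K|\le 4^\nu$ together with the identity term $c_{\mathbb 1}=1$.

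\emph{Step 4: the purity estimate (main obstacle).} Let $\mathcal E_K$ be the trace-preserving conditional expectation onto $\mathrm{span}(K)$, defined by $\mathcal E_K(\phi)=2^{-\nu}\sum_{Q\in K}c_Q Q$. Then $\sum_K c_Q^2=2^\nu\tr[\mathcal E_K(\phi)^2]$.

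Since $K$ is a Pauli subgroup, its span is a $*$-algebra of the form $\bigoplus_{i=1}^z M_d\otimes \mathbb 1_m$, with $z$ the size of the center, $zd^2=|K|$ and $zdm=2^\nu$. Thus $\mathcal E_K(\phi)=\bigoplus_i p_i\,\rho_i\otimes \mathbb 1_m/m$, where $\rho_i$ is the reduced state on $M_d$ of the normalized block $i$ of the pure $\phi$. This gives
\begin{equation}
\tr[\mathcal E_K(\phi)^2]=\frac{1}{m}\sum_i p_i^2\tr\rho_i^2\ \ge\ \frac{1}{z\,m\,\min(d,m)}\;,
\end{equation}
using $\sum_i p_i^2\ge 1/z$ and the fact that a reduced state of a pure state on $\mathbb C^d\otimes\mathbb C^m$ has purity at least $1/\min(d,m)$.

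In both cases $d\le m$ and $d>m$ this is at least $1/(zm^2)$. Since $|K|/4^\nu=zd^2/(zdm)^2=1/(zm^2)$, multiplying by $2^\nu$ gives exactly the bound required in Step 3, and hence $\tilde M_2(\psi_A)\le\nu$.

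The delicate points are the block structure of the Pauli algebra, in particular that all blocks share the same $d$ and $m$, and the use of the purity of $\phi$ inside each block. I would verify these carefully, checking first the extreme cases $K=\mathcal P_\nu$ and $K=\{\mathbb 1\}$.
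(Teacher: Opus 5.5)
Your proof is correct, and Steps 1--2 are exactly the paper's starting point, namely the coset formula of Lemma~\ref{lem:stab_renyi_formula}. The divergence is in the final estimate.

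Write $N_p=\sum_{Q\in K}\tr(Q\phi)^p$. The paper bounds the two sums separately:
\begin{itemize}
\item $N_4\ge 1$ from the identity term;
\item $N_2\le\sum_{Q\in\mathcal P_\nu}\tr(Q\phi)^2=2^\nu\tr\phi^2=2^\nu$ from purity, which is its identity $\sum_i\langle h_i\rangle_\psi^2=2^{\nu}$.
\end{itemize}
This gives $N_2/N_4\le 2^\nu$ in one line. Your remark that without Cauchy--Schwarz one only gets $2\nu$ is therefore mistaken. That loss comes from bounding $N_2$ by $|K|$ instead of by the full purity sum.

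Your Steps 3--4 are nonetheless valid. The uniform block form $\bigoplus_{i=1}^{z}M_d\otimes\mathbb I_m$ with $z=2^b$, $d=2^a$, $m=2^c$ follows from the symplectic normal form of $K$. The inequality $1/(zm\min(d,m))\ge 1/(zm^2)=|K|/4^\nu$ holds since $\min(d,m)\le m$. The cost is that you need the structure theory of Pauli algebras and the purity of $\phi$ block by block. The paper's argument only needs $\tr\phi^2\le 1$.

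If you keep your route, Step 4 has a one-line proof with no block decomposition. For pure $\phi$, $\tr(Q\phi)^2=\tr(Q\phi Q^\dagger\phi)$, so $N_2=|K|\tr[\mathcal T_K(\phi)^2]\ge|K|/2^\nu$. Here $\mathcal T_K(\rho)=|K|^{-1}\sum_{Q\in K}Q\rho Q^\dagger$ is the twirl over $K$, which is idempotent and Hilbert--Schmidt self-adjoint.

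What your Cauchy--Schwarz step does add is a refinement when combined with the paper's bound $N_4\ge 1$:
\[
N_2/N_4\le\min(N_2,\,|K|/N_2)\le\sqrt{|K|},
\]
that is, $\tilde M_2(\psi_A)\le\tfrac12\log_2|K|\le\nu$. This needs neither your Step 4 nor purity, and it is strictly sharper than $\nu$ whenever $|K|<4^\nu$, i.e.\ when not all cosets survive the restriction to $A$.
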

The proof is presented in App. \ref{app:localstabbound}. From this result and Definition \ref{def:EDMDstate}, it follows directly that:
\begin{thm}
Let $\psi$ be an area-law state. Then, if it is ED, $\tilde M_2(\psi_A) =O(|\partial A|)$. On the other hand, $\psi$ is MD if $\tilde M_2(\psi_A) = \omega(|\partial A|)$.
\label{thm:main}
\end{thm}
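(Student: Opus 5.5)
The plan is to chain Lemma~\ref{lem:localstabbound} with the asymptotic relations in Definition~\ref{def:EDMDstate}, using the area law to convert statements about $S(\psi_A)$ into statements about $|\partial A|$. Throughout, fix a family of subregions $A$ with $\min\{|A|,|\bar A|\}=\omega(\log n)$, so that the ED/MD dichotomy applies. Since $\psi$ is area-law (Definition~\ref{def:arealawstate} with $\alpha=1$), we have $S(\psi_A)\le c\,|\partial A|$ for some constant $c$.

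\emph{ED case.} If $\psi$ is ED, then $S(\psi_A)=\omega(\nu)$, hence $\nu=o(S(\psi_A))$. Combining this with the area law gives $\nu = o(|\partial A|)$, and in particular $\nu=O(|\partial A|)$. Lemma~\ref{lem:localstabbound} gives $\tilde M_2(\psi_A)\le \nu(\psi)$ for every $A$, so $\tilde M_2(\psi_A)=O(|\partial A|)$ (indeed $o(|\partial A|)$). One point to check is that $\tilde M_2(\psi_A)\ge 0$, so that the $O$-statement is not vacuous in the wrong direction. This follows from the fact reviewed in App.~\ref{app:stabreview} that $\tilde M_2$ vanishes exactly on normalized stabilizer-group projectors and is otherwise positive. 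Alternatively, we only need the upper bound.

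\emph{MD case.} This is the contrapositive direction. Suppose $\tilde M_2(\psi_A)=\omega(|\partial A|)$. Lemma~\ref{lem:localstabbound} then gives $\nu\ge \tilde M_2(\psi_A)=\omega(|\partial A|)$. The area law gives $S(\psi_A)=O(|\partial A|)$, so $S(\psi_A)=o(\nu)$, which implies $S(\psi_A)=O(\nu)$, i.e.\ $\psi$ is MD for this cut. Equivalently, an ED area-law state cannot exhibit super-area growth of $\tilde M_2$, by the first part. If the statement is read as requiring MD for all admissible cuts, as in the paper's convention for area-law MD states, the hypothesis $\tilde M_2(\psi_A)=\omega(|\partial A|)$ is needed for each such cut. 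The argument above is cut-wise, so it applies uniformly.

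The only step I expect to need care is the asymptotic bookkeeping. The footnote definitions use limits rather than limsup, and $|\partial A|$ is itself a function of $n$ along the family of cuts. I would therefore state explicitly that all asymptotics are taken along the fixed family $A=A_n$, and verify the transitivity steps $\nu=o(S)$ with $S=O(|\partial A|)$ implying $\nu=o(|\partial A|)$, and $S=O(|\partial A|)$ with $|\partial A|=o(\nu)$ implying $S=o(\nu)$. Both are routine once the functions are positive, and $|\partial A|\ge 1$ for nontrivial cuts. Everything else is a direct substitution of Lemma~\ref{lem:localstabbound} into Definition~\ref{def:EDMDstate}.
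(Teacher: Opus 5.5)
Your proposal is correct and follows the paper's own route: the paper also combines the bound $\tilde M_2(\psi_A)\le \nu(\psi)$ of Lemma~\ref{lem:localstabbound} with Definition~\ref{def:EDMDstate} and the area law $S(\psi_A)=O(|\partial A|)$, in both directions, just as you do. Your asymptotic bookkeeping also yields the slightly stronger $\tilde M_2(\psi_A)=o(|\partial A|)$ in the ED case, which the paper's statement does not record.
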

This shows our second main claim, as the arrows indicate in Fig. \ref{fig:theorem}: It displays that entanglement area-law ground states do display non-stabilizer area laws if they are entanglement dominated, and magic-dominated states can be witnessed by their faster-than-area growth. This result establishes an operational meaning both to the measure  $\tilde M_2$ in the setting of subsystems, as a witness to the ED/MD separation, and for a ``magic area law'', that is manifested as the signature of entanglement domination.

\section{Stabilizer regime, area-laws and examples \label{sec:stabilizerregime}}
We have then established that area-law ED states also hosts a non-stabilizerness area law, it is natural to ask from what perturbations this is a robust feature, as entanglement area laws are for FDQC perturbations. We know that FDQCs are too strong a class, since they can generate arbitrary magic. This motivates us to restrict Definition \ref{def:fdqcphase} as:
\begin{defn}
    We say that a $n$-qubit unitary $U$ is $t$-doped if there is $t$ non-Clifford gates. We say that two $n$ qubit area-law states $\psi, \psi^\prime$ are in the same stabilizer regime if there is a $t=o(\log n)$-doped circuit $U \in \mathrm{FDQC}_D$ such that $\psi = U\psi^\prime U^\dagger$ that preserves entropy scaling, $S(\psi^\prime_A) = \Theta(S(\psi_A))$. 
    \label{def:stabilizerregime}
\end{defn}
We claim that this is the natural restriction that preserves the magic area-law and the ED phase. The natural setting is to consider an area-law $\psi^\mathcal S \in \mathrm{stab}_n$, generated through geometrically-local stabilizer Hamiltonians as explained in Lemma \ref{lem:stabarealaw}. Consider the new ground state generated by adding a perturbation, such that
\begin{equation}
    H_\mathcal S \mapsto H_\mathcal S+ \sum_{k=1}^p \lambda_k P_k\;,
\end{equation}
where $\{P_k\}\subseteq \mathcal P_n$ is a subset of Pauli strings. In \cite{gu2024doped}, it was shown that the corresponding ground state has nullity $\nu \leq p$. Since, as a consequence of Proposition  \ref{prop:stabnullitybound} in App. \ref{app:stabregimerobust}, it would require a $\Omega(p)$-doped circuit to be synthesized from a stabilizer state. For the case where $p=o(\log n)$, the definition of a stabilizer regime captures the minimal regime when the lower bound is attained, and also imposes finite-depth to model the local changes to the ground state. 

We can then show a version of Lemma \ref{lem:entropygrowth}, for $\tilde M_2$:
\begin{lem}
    Given $A\subseteq [n]$, let $\psi$ be a ED area-law state with $\psi^\prime$ in the same stabilizer regime, assuming $S(\psi_A) = \omega(\log n)$. Then, $\psi^\prime$ is also ED and area-law.
    \label{lem:stabregimerobust}
\end{lem}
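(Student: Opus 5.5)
We treat the two conclusions of Lemma \ref{lem:stabregimerobust} separately: area-law scaling of $\psi^\prime$, and entanglement domination of $\psi^\prime$ across the given cut $A|\bar A$.

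\textbf{Area law.} By Definition \ref{def:stabilizerregime}, $\psi^\prime$ and $\psi$ are related by a $t$-doped circuit $U\in\mathrm{FDQC}_D$. In particular, they lie in the same $D$-dimensional quantum phase in the sense of Definition \ref{def:fdqcphase}. Lemma \ref{lem:entropygrowth} then gives directly that $\psi^\prime$ is area-law with the same $D$. The doping plays no role in this step.

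\textbf{Entanglement domination.} The key quantity is the nullity, and we aim to show it changes by at most $O(t)$ under a $t$-doped circuit. Write $U = C_{t}\, g_t\, C_{t-1}\cdots g_1 C_0$, where the $C_j$ are Clifford and the $g_j$ are the $t$ non-Clifford gates, each acting on $O(1)$ qubits. We then use two facts.
\begin{itemize}
    \item Clifford unitaries preserve $\nu$, since they map $\mathrm{Stab}(\psi)$ isomorphically onto the stabilizer group of the conjugated state.
    \item A gate $g$ supported on a set $R$ with $|R|=O(1)$ increases $\nu$ by at most $2|R|$. To see this, take the subgroup of $\mathrm{Stab}(\psi)$ commuting with every Pauli on $R$. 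Up to products, its elements act trivially on $R$, and the constraint is at most $2|R|$ linear conditions over $\mathbb{F}_2$. This subgroup therefore has rank at least $\mathrm{rank}\,\mathrm{Stab}(\psi)-2|R|$, and its elements remain stabilizers of $g\psi g^\dagger$.
\end{itemize}
This is presumably the content of Proposition \ref{prop:stabnullitybound}, which we may cite.

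Applying both facts in both directions (using $U^\dagger$, which is also $t$-doped) gives
\[
|\nu(\psi^\prime)-\nu(\psi)| = O(t) = o(\log n).
\]

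\textbf{Conclusion.} By hypothesis, $S(\psi_A)=\omega(\nu(\psi))$ and $S(\psi_A)=\omega(\log n)$. Combining these with the nullity bound,
\[
\nu(\psi^\prime)\le \nu(\psi)+o(\log n) = o(S(\psi_A)).
\]
The stabilizer-regime condition gives $S(\psi^\prime_A)=\Theta(S(\psi_A))$. Hence $S(\psi^\prime_A)=\omega(\nu(\psi^\prime))$, so $\psi^\prime$ is ED.

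The assumption $S(\psi_A)=\omega(\log n)$ is exactly what absorbs the additive $o(\log n)$ nullity change. Without it, a state with $\nu(\psi)=0$ and small entropy could fail to stay ED.

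\textbf{Main obstacle.} The delicate step is the nullity-increase bound for a single local non-Clifford gate. We need the surviving stabilizer subgroup to have rank loss controlled by $|R|$. The subtlety is choosing the subgroup of $\mathrm{Stab}(\psi)$ whose restriction to $R$ is the identity, rather than merely commuting. This is done by intersecting with the kernel of the restriction homomorphism $\mathrm{Stab}(\psi)\to \mathcal P_{R}/\{\pm1,\pm i\}$. Since the target is $\mathbb{F}_2^{2|R|}$, the kernel has corank at most $2|R|$. Everything else reduces to earlier lemmas.
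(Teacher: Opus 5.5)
Your proposal is correct and follows essentially the paper's proof. The entanglement-domination step is exactly the paper's argument: Proposition~\ref{prop:stabnullitybound} bounds the nullity growth by $O(t)=o(\log n)$, and this is absorbed using $\nu(\psi)=o(S(\psi_A))$, $S(\psi_A)=\omega(\log n)$ and $S(\psi^\prime_A)=\Theta(S(\psi_A))$.

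The differences are minor. You obtain the area law from Lemma~\ref{lem:entropygrowth}, whereas the paper reads it off the von Neumann scaling hypothesis; your route controls all R\'enyi indices, as Definition~\ref{def:arealawstate} requires. Your kernel-of-the-restriction-map argument for $\nu(g\psi g^\dagger)\le \nu(\psi)+2|R|$ is also cleaner than the paper's generator-splitting count, because it does not depend on choosing a good generating set.
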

Shown in App. \ref{app:stabregimerobust}. By Theorem \ref{thm:main}, this also shows that the area law for $\tilde M_2$ is preserved. It shows that, given some state $\psi$ that is ED and area-law (as, for example, a stabilizer ground state, discussed in the next section), there is a restricted notion of a quantum phase that preserves its structured entanglement, that can also be efficiently manipulated. This establishes the importance of area laws in non-stabilizerness as a distinct signature of efficient entanglement in phases of matter.

We note that the asymptotic divide between the maximum number of non-Clifford gates $t$ in Definition \ref{def:stabilizerregime} and the minimum growth of the entanglement entropy in Lemma \ref{lem:stabregimerobust} is somewhat arbitrary, and we made the minimum choice such that stabilizer regimes are defined with a doping parameter that can depend on $n$. In the case $t=O(1)$, having $S(\psi_A)=\omega(1)$ entanglement suffices to show the result, as the proof discusses. Another point is the imposition that the entropic scaling must be invariant under the action of the unitary. One can also note the condition for the entropy scaling, which precludes us from considering unitaries that disentangle the initial state, making $\psi^\prime$ a product state. In this case, we can have MD states very easily, by just considering a product of non-stabilizer states.

\subsection{Stabilizer ground states and ED to MD transition}
We close this section by making some remarks about the physical applications of our results, while providing specific examples. We recall that a geometrically-local $H_\mathcal S$ has as its ground state the stabilizer \emph{code}:
\begin{equation}
    V_\mathcal S = \{\ket{\psi} \in \mathcal H\;|\; s \ket{\psi} =+\ket{\psi}\;:\;\forall s \in \mathcal S\}\;.
\end{equation}
Two parameters are relevant: The \emph{encoding rate} $k=\log_2 \mathrm{dim}(V_\mathcal S)$ counts the ground space degeneracy, and counts the effective number of qubits/bits that can be encoded on it. The second parameter measures its robustness: To define it, let $\mathrm C(\mathcal S) \subseteq \mathcal  P_n$ be the Pauli commutant with respect to $\mathcal S$. The \emph{code distance} $d$ is defined as:
\begin{equation}
    d =\min_{P \in \mathrm C(\mathcal S)\backslash \mathcal S} |\mathrm{supp}(P)|\;. 
    \label{eq:distancedef}
\end{equation}
It is interpreted as the minimum support of a Pauli operator that fails to be detected on the code. If $d=O(1)$, local operators already cannot be detected, and thus  $V_\mathcal S$ is not a performing quantum code in the asymptotic regime, and the corresponding stabilizer ground state is not stable under Hamiltonian perturbations \cite{brayvi2010topological}. Otherwise, if the distance is a non-trivial function of $n$, it means that local Paulis can be detected, also implying robustness to local perturbations. We will refer to $V_\mathcal S$ (or $\mathcal S$) as  a $[[n,k,d]]$ code. If one is interested also only in symmetric perturbations with respect to some unitary group $G$, the optimization over the commutant can be restricted to symmetric Paulis obtain a symmetric distance $d_G$ \cite{zeng2019quantum}. Some canonical examples in the many-body literature include:
\begin{enumerate}
    \item \textbf{Ising model:} Defined in $D$ dimensions, $\mathcal S_{\mathrm{ising},D}\equiv \langle \{Z_i Z_j\}_{\langle i, j\rangle}\rangle$, where $\langle i,j\rangle$ denotes nearest-neighbors in the $D$-dimensional hypercubic lattice. It stabilizes $V_{\mathcal S_{\mathrm{ising},D}}\equiv \mathrm{span}_\mathbb C\{\ket{0}^{\otimes n}, \ket{1}^{\otimes n}\}$. It has $d=1$, since any $Z_i $ is feasible in Eq. (\ref{eq:distancedef}), but if a $G=\mathbb Z_2$ bitflip symmetry is imposed, we have $d_{\mathbb Z_2} =\Theta(n)$. Due to this reason, it is referred as a classical code, encoding a $k=1$ bit.
    \item  \textbf{Cluster model:} On its simplest version, it is defined on $D=1$ \cite{briegel2001persistent} as $\mathcal S_{\mathrm{cluster},1}\equiv \langle \{Z_i X_{i+1}Z_{i+2}\}_i\rangle$, but has also higher-dimensional generalizations \cite{raussendorf2005long, else2012symmetry}. In this model, it also hosts a logical $k=1$ bit (it also has $d=1$) on its code, but its spanning states are entangled \cite{zeng2019quantum}. Correspondingly, it also has symmetry protection, with $d_{\mathbb Z_2 \times \mathbb Z_2}=\Theta(n)$.
    \item \textbf{Toric code: }Defined by $\mathcal S_{\mathrm{TC},2} =\langle \{XXXX\}_{\square}, \{ZZZZ\}_+ \rangle$, where $\square$ and $+$ denotes a plaquette and a vertex of a two-dimensional square lattice, tessellating a torus, where the qubits are defined on the edges \cite{kitaev2003fault}. It has $k=2$ logical qubits. It has a large distance, $d=\Theta(\sqrt{n})$, which is, in fact, the asymptotic limit for the distance in geometrically-local $D=2$ codes \cite{bravyi2010tradeoffs}. More robust, higher $D$ versions of the toric code, which are even more robust can also be constructed \cite{dennis2002topological}.
\end{enumerate}
In all those examples, there is some degree of robustness of the ground state subspace that allow us to detect errors. This allows to show that the corresponding ground space is robust upon adding generic perturbations in the Hamiltonian \cite{brayvi2010topological}. However, on general grounds, this do not guarantee that their entanglement is efficiently manipulable, since the resulting state might be MD, and also usually hosts a magic volume-law.

On a much studied family of examples, consider, given $H_\mathcal S$:
\begin{equation}
    H_\mathcal S(h_x,h_z;p) = H_\mathcal S + \sum_{i=1}^p(\lambda_x X_i + \lambda_z Z_i)\;,
    \label{eq:perturb_stabham}
\end{equation}
for $p=n$, this is the usual transverse/longitudinal field models that are now textbook \cite{zeng2019quantum}\footnote{ For the symmetry-restricted cases, imposing the symmetry usually implies either $\lambda_x=0$ or $\lambda_z=0$}. For sufficiently low $|\lambda_x|, |\lambda_z| \ll 1$, it is indeed numerically evident that the area law and the robust ground-state subspace exist, and that there is extensive magic in the $D=1$ case \cite{tarabunga2024nonstabilizerness}. Our results, however, show that if one takes $p= o(\log n)$, a generic point in the phase diagram should exhibit an area-law for the non-stabilizerness, a sign that the quantum state has also efficient processing. On a intermediate asymptotic regime, we expect to see a MD ground space marked by faster-than-area growth in the non-stabilizerness of the subsystems, while we maintain the entanglement area-law. This can be used as a probe of a possible ED to MD transition, and to understand what ground states admit efficient entanglement manipulation.

\section{Conclusions \label{sec:conclusions}}
In this work, we investigated the interplay between entanglement and non-stabilizerness in quantum many-body systems. While ground states of locally interacting Hamiltonians are known to obey an entanglement area-law, they can nonetheless exhibit volume-law scaling of magic, as quantified by the stabilizer Rényi entropy. This observation highlights that entanglement alone does not fully characterize the complexity of such states and that magic plays a crucial and often overlooked role in determining their computational and informational properties.

To resolve the apparent tension between the presence of stabilizer code structures and the fragility of stabilizerness under perturbations, we adopted the recently proposed entanglement-dominated (ED) versus magic-dominated (MD) classification of Hilbert space. Our main results demonstrate that this ED/MD dichotomy persists even within the set of area-law states, where MD states exhibit  provably hard entanglement manipulation, whereas ED states admit general efficient protocols (Theorem \ref{thm:arealawMDhardness}). This is captured by the behaviour of their non-stabilizerness of subsystems: states in the ED regime obey a non-stabilizerness area-laws, whereas fast-than-area magic growth witnesses MD states (Theorem \ref{thm:main}). Based on this separation, we defined the stabilizer regime, a restriction of the notion of a quantum phase that preserves magic area-laws and allow efficient entanglement processing.

We further showed that ED states can naturally arise in stabilizer codes subject to specific perturbations, where the system remains in the same quantum phase while transitioning between MD and ED behavious. This finding offers new insight into the structure of topological codes like the toric code, suggesting that their physical resilience under perturbations must be understood not only in terms of phase stability, but also through the lens of of magic stability, given its relevance for decoding and information retrieval.

Looking forward, several directions remain open. First, extending this non-stabilizerness area versus volume law classification to mixed states, such as thermal or noisy ground states, is of direct relevance for quantum error correction, particularly in light of recent advances connecting phases of matter and decoding thresholds \cite{sang2024mixed}. Second, our results hint at deeper connections with measurement-based quantum computation (MBQC) \cite{raussendorf2001one}, where non-stabilizerness is known to play an important role \cite{liu2022many}. Furthermore, it would be interesting to perform a numerical analysis of the entanglement manipulation tasks in the ground states of Eq. (\ref{eq:perturb_stabham}), to determine whether the ED/MD transition correlates with the algorithmic complexity of these operations.

\begin{acknowledgments}
We acknowledge the use of ChatGPT (versions 5.6 and 6) for assistance in verifying specific arguments in the proofs. Most of the conceptual work was done prior to the availability of these models \cite{macedoquantum}. We thank Rodrigo Pereira and Alioscia Hamma for discussions and collaboration on an earlier version of the project.
R.A.M acknowledges funding from Coordenação de Aperfeiçoamento de Pessoal de Nível Superior – Brasil (CAPES) – Finance Code 001, and the ERC grant GIFNEQ 101163938.
R.C. acknowledges funding the Brazilian National Council for Scientific and Technological Development (CNPq; Grants No. 403181/2024-0 and 301687/2025-0), the National Institute of Science and Technology for Applied Quantum Computing through CNPq Process No. 408884/2024-0, the Simons Foundation (Grant No. 1023171, R.C.), and the Financiadora de Estudos e Projetos (Grant No. 1699/24 IIF-FINEP). R.C. also thanks the Technical University of Denmark for its hospitality, where part of this work was carried out during a guest professorship supported by the Otto M\o nsted Foundation. 
\end{acknowledgments}

\bibliographystyle{apsrev4-2}
\bibliography{pr_version.bib} 

\appendix
\onecolumngrid

\section{A brief review of stabilizer resource theory and non-stabilizer measures \label{app:stabreview}}
We will review the basics of stabilizer resource theory \cite{gottesman1997stabilizer, gottesman2016surviving, veitch2014resource}, by highlighting the concepts that will be necessary for our work. We will review the relevant results citing the corresponding original references, with proofs presented when they add to the discussion. A stabilizer group $\mathcal S \subseteq \mathcal P_n$ is an abelian subgroup of the Pauli group $\mathcal  P_n$ with $-1 \notin \mathcal S$. We always denote $\psi^\mathcal S$ as:
\begin{equation}
    \psi^\mathcal S \equiv \frac{1}{2^n}\sum_{s \in \mathcal S}s \in \mathcal S(\mathcal H)\;.
\end{equation}
It defines a stabilizer \emph{code}, defined as:
\begin{equation}
    V_\mathcal S \equiv\{\ket{\psi} \in \mathcal H\;|\; s\ket{\psi} = +\ket{\psi}\;,\;\forall s \in \mathcal S\}\;.
\end{equation}
Its dimension can be read by its encoding rate, defined as $k=\log_2 \mathrm{dim}V_\mathcal S =n-\mathrm{rank}(\mathcal S)$. Note that if it is full rank, $\mathrm{rank}(\mathcal S)= n$, it stabilizes a single pure state $\psi^\mathcal S= \ket{\psi^\mathcal S}\bra{\psi^\mathcal S}$. There are three sets of ``stabilizer states'':
\begin{defn}
    Let $n$ be the number of qubits. We define the set of:
    \begin{align}
         &\mathrm{stab}_n \equiv \{\psi^\mathcal S \;|\; \mathcal S\subseteq \mathcal P_n \;\mathrm{stabilizer}\;\&\;\mathrm{rank}(\mathcal S)=n\}\;,\\
        &\tilde{\mathrm{stab}}_n \equiv \{\psi^\mathcal S\;|\;\mathcal S\subseteq \mathcal P_n \;\mathrm{stabilizer}\}\;,\\
        &\mathrm{STAB}_n \equiv \mathrm{conv}(\mathrm{stab}_n)=\{\sum_{\mathcal S \in\mathrm{stab}_n}p_\mathcal S \psi^\mathcal S\;|\; p_\mathcal S \geq 0\;,\; \sum_{\mathcal S \in \mathrm{stab}_n}p_\mathcal S=1\}\;,
    \end{align}
    note the hierarchy $\mathrm{stab}_n \subseteq \tilde{\mathrm{stab}}_n \subseteq \mathrm{STAB}_n$, which is strict for $n>1$. We will refer to $\mathrm{stab}_n$ as both the set of pure stabilizer states and (full rank) stabilizer groups, and $\mathrm{STAB}_n$ as the stabilizer polytope.
\end{defn}
Note that all three sets are Clifford invariant, meaning that if a state $\psi$ is a member of any of the three sets, so does $C\psi C^\dagger$ for any Clifford unitary $C$. With that, we define:
\begin{defn}
    Let $\mathcal C \ell_n $ denote the set of $n$ qubit Clifford unitaries. Given $\mathcal F \in \{\mathrm{stab}_n, \tilde{\mathrm{stab}}_n, \mathrm{STAB}_n\}$, we define the corresponding stabilizer resource theory as the tuple $(\mathcal F, \mathcal C \ell_n)$. We say that a function of quantum states $f:\mathcal S(\mathcal H) \to \mathbb R_{\geq 0}$ is a stabilizer monotone if $f(C\psi C^\dagger)\leq f(\psi)$ for all states $\psi$ and Cliffords $C \in \mathcal C \ell_n$ and is furthermore faithful if $f(\psi) =0 \Leftrightarrow \psi \in \mathcal F$.
\end{defn}

The three different sets corresponds to three different possible free state sets in the resource theory. We will not need to be concerned with the full mixed stabilizer state resource theory, described by $\mathrm{STAB}_n$: We will be mostly concerned with the cases related to $\mathrm{stab}_n, \tilde{\mathrm{stab}}_n$. Here is a Proposition that organizes all the properties of the monotones of interest to us:
\begin{prop}
(\cite{leone2022stabilizer, leone2024stabilizer, beverland2020lower}) Let $\psi$ be a $n$ qubit quantum state. Define the following measures:
\begin{equation}
    M_\alpha(\psi) \equiv \frac{\log\left[\frac{1}{2^n}\sum_{P \in \mathcal P_n}\tr(P\psi)^{2\alpha}\right]}{1-\alpha} \quad , \quad \tilde M_2(\psi)\equiv M_2(\psi)-S_2(\psi) \quad , \quad \nu(\psi) \equiv n-\mathrm{rank}[\mathrm{Stab}(\psi)]\;,
\end{equation}
where we take $\alpha \in [0,\infty)$ and we defined $\mathrm{Stab}(\psi) \equiv \{P \in \mathcal P_n\;|\;|\mathrm{tr}(P\psi)|=1\}$. We will refer to $f \in \{M_\alpha, \tilde M_2, \nu\}$ when denoting a property valid for all three measures. The following properties hold:
\begin{enumerate}
    \item \textbf{Faithfulness: }All of them are positive-semidefinite, and are bounded above by $\leq n$, satisfying: (i) $M_\alpha(\psi) =0 \;\mathrm{or}\;\nu(\psi) = 0 \Leftrightarrow \psi \in \mathrm{stab}_n$, (ii) $\tilde M_\alpha(\psi) =0 \Leftrightarrow \psi \in \tilde{\mathrm{stab}}_n$;
    \item \textbf{Monotonicity: }All measures are Clifford-invariant, $f(C\psi C^\dagger) = f(\psi)$ for all Clifford unitaries $C$;
    \item They are additive: For any other $m$ qubit state $\sigma$, $f(\psi\otimes \sigma ) = f(\psi) + f(\sigma)$;
    \item The Rényi entropies forms an hierarchy, $M_\alpha(\psi) < M_{\alpha^\prime}(\psi)$ for any $\psi$ if $\alpha > \alpha^\prime$. Furthermore, for pure states, $M_\alpha(\psi)\leq \nu(\psi)$.
    \item The SRE lower bounds the $T$ gate count of pure states: If $t(\psi)$ is the minimum number of $T$ gates to prepare $\psi = \ket{\psi}\bra{\psi}$ from a stabilizer state, $M_\alpha(\psi)\leq t(\psi)$. 
\end{enumerate}
\label{prop:properties_monotomes}
\end{prop}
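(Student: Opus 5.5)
The plan is to rephrase all three measures through the Pauli spectrum of $\psi$, which makes each item either a counting argument or a standard property of Rényi entropies. For a pure state, $\sum_{P\in\mathcal P_n}\tr(P\psi)^2 = 2^n\tr\psi^2 = 2^n$, where the sum runs over the $4^n$ unsigned Pauli strings. So $\Xi_P\equiv \tr(P\psi)^2/2^n$ is a probability distribution, and
\begin{equation}
M_\alpha(\psi)=H_\alpha(\Xi)-n,
\end{equation}
with $H_\alpha$ the classical Rényi entropy. Similarly,
\begin{equation}
\tilde M_2(\psi)=-\log\left[\sum_P\tr(P\psi)^4\Big/\sum_P\tr(P\psi)^2\right],
\end{equation}
since $S_2(\psi)=-\log[2^{-n}\sum_P\tr(P\psi)^2]$. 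Since $|\tr(P\psi)|\le 1$, we have $\Xi_P\le 2^{-n}$, hence $H_\alpha(\Xi)\ge H_\infty(\Xi)\ge n$ and $M_\alpha\ge 0$. Equality forces $\Xi$ to be flat on exactly $2^n$ strings with $|\tr(P\psi)|=1$; these form an abelian group of rank $n$, so $\psi\in\mathrm{stab}_n$. For $\tilde M_2$, the bound $\sum\tr^4\le \max_P \tr(P\psi)^2 \sum\tr^2\le\sum\tr^2$ gives positivity. Equality holds iff every nonvanishing $\tr(P\psi)$ has modulus one, i.e. $\psi=2^{-n}\sum_{s\in\mathcal S}s$ for a (signed) stabilizer group, which is membership in $\tilde{\mathrm{stab}}_n$. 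The upper bounds by $n$ follow from $H_\alpha\le \log 4^n$ and from $\nu\le n$.

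Clifford invariance holds because $C$ permutes Pauli strings up to signs, and all three measures depend only on $|\tr(P\psi)|$ or on the group $\mathrm{Stab}(\psi)$, whose rank is preserved by conjugation. Additivity follows from $\mathcal P_{n+m}=\mathcal P_n\otimes\mathcal P_m$ together with $\tr[(P\otimes Q)(\psi\otimes\sigma)]=\tr(P\psi)\tr(Q\sigma)$. The Pauli sums then factorize, so $M_\alpha$, $S_2$ and hence $\tilde M_2$ are additive. For $\nu$, note that $|\tr(P\psi)\tr(Q\sigma)|=1$ iff both factors have modulus one, so $\mathrm{Stab}(\psi\otimes\sigma)=\mathrm{Stab}(\psi)\times\mathrm{Stab}(\sigma)$ and the ranks add. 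The hierarchy in $\alpha$ is the monotonicity of classical Rényi entropies in the order. Here I would state it as non-strict, since flat spectra such as stabilizer states give equality.

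For $M_\alpha\le\nu$ I use $H_\alpha\le H_0=\log|\mathrm{supp}\,\Xi|$, and bound the support as follows. Let $r=\mathrm{rank}\,\mathrm{Stab}(\psi)$. If $P$ anticommutes with some $s$ with $s\psi=\pm\psi$, then $\tr(P\psi)=\tr(Ps\psi s)=-\tr(P\psi)$, so $\tr(P\psi)=0$. Hence the support lies in the commutant of $\mathrm{Stab}(\psi)$, which contains $2^{2n-r}$ unsigned strings. This gives $M_\alpha\le M_0\le 2n-r-n=\nu$.

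The main obstacle is the $T$-count bound, because the SRE is not obviously monotone along a circuit. I would route it through the nullity: by the previous item it suffices to show that $\nu$ increases by at most one per $T$ gate. Cliffords leave $\nu$ unchanged, so consider $T_j$ acting on a state with stabilizer group $\mathcal S$ of rank $r$. The subgroup $\mathcal S'$ of elements whose $j$-th tensor factor lies in $\{I,Z\}$ has index at most $2$ in $\mathcal S$, so its rank is at least $r-1$. Each element of $\mathcal S'$ commutes with $T_j$ and therefore still stabilizes $T_j\psi$. Iterating over the $t(\psi)$ gates, starting from a stabilizer state with $\nu=0$, yields $\nu(\psi)\le t(\psi)$, and hence $M_\alpha(\psi)\le t(\psi)$.
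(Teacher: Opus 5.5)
The paper does not prove this proposition; it quotes it from the cited works. Your argument is therefore a self-contained replacement rather than a reconstruction. It is correct for the statement read with the corrections discussed below, some of which you already make.

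What your route buys is clearest against the tools the paper proves elsewhere.
\begin{itemize}
\item Your commutant count, $|\mathrm{supp}\,\Xi|\le 2^{2n-r}=2^{n+\nu}$, is the same bound the coset decomposition of Lemma~\ref{lem:fixednullity} gives ($k+1\le 4^{\nu}$ cosets of size $2^{n-\nu}$). You obtain it in one line from the sign-flip argument.
\item Your $T$-count step is genuinely sharper than Proposition~\ref{prop:stabnullitybound}. That proposition only allows an increase of $2|\mathrm{supp}(u)|$ per non-Clifford gate, hence $\nu\le 2t$ and $M_\alpha\le 2t$. You observe that the $X$-bit at site $j$ is a homomorphism $\mathcal S\to\mathbb F_2$ whose kernel commutes with $T_j$. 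This loses at most one generator per $T$ gate and gives exactly the $M_\alpha\le\nu\le t$ of item 5.
\end{itemize}

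You are right that the hierarchy in item 4 must be non-strict. You should be equally explicit that every $M_\alpha$ claim relies on purity through the normalization of $\Xi$: nonnegativity, the bound by $n$, faithfulness (i), and the hierarchy. This restriction is forced, not a convenience.
\begin{itemize}
\item For the maximally mixed state, $M_\alpha=n/(\alpha-1)$. This is negative for $\alpha<1$, exceeds $n$ for $1<\alpha<2$, and gives $M_{3/2}=2n>M_{1/2}=-2n$, which breaks the ordering.
\item The mixed qubit $\psi=\tfrac12(\mathbb I+\tfrac12X+\tfrac12Y)$ has $M_{1/2}(\psi)=0$, yet $\psi\notin\mathrm{stab}_1$.
\end{itemize}
Your arguments for $\tilde M_2$, $\nu$, Clifford invariance and additivity do not use purity. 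That is exactly what item 1(ii) needs, since $\tilde{\mathrm{stab}}_n$ contains mixed states.

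Three small points to close:
\begin{itemize}
\item In the equality case for $\tilde M_2$, add the one-line check that $\{\tr(P\psi)P:|\tr(P\psi)|=1\}$ is a stabilizer group. Any two such elements fix $\psi$, so anticommutation would force $\psi=-\psi$.
\item Your sentence on upper bounds does not cover $\tilde M_2\le n$. It follows from $\sum_P\tr(P\psi)^4\ge\tr(\mathbb I\psi)^4=1$ together with $\sum_P\tr(P\psi)^2=2^n\tr\psi^2\le 2^n$.
\item The case $\alpha=0$ must be read with the convention $0^0=0$, i.e.\ $M_0=\log|\mathrm{supp}\,\Xi|-n$, as you implicitly do. With $0^0=1$ one gets $M_0\equiv n$, and $M_0\le\nu$ fails for stabilizer states.
\end{itemize}
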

Some further properties are listed in the Supplemental Material of \cite{gu2024pseudomagic}. Notice that the definition of $\tilde M_2$ is such to construct a faithful measure for the $\tilde{\mathrm{stab}}_n$ resource theory. It captures subsystem non-stabilizerness, as it is easily shown:
\begin{prop}
    Let $\psi^\mathcal S  \in \tilde{\mathrm{stab}}_n$ and $A \subseteq [n]$. Then, 
    \begin{equation}
        \psi_A^\mathcal S = \tr_{\bar A}\psi^\mathcal S =\frac{1}{2^{|A|}}\sum_{s_A\in \mathcal S_A}s_A \in \tilde{\mathrm{stab}}_n\;,
    \end{equation}
    with $\mathcal S_A$ as defined on part 1 of Lemma \ref{lem:stabarealaw}.
    \label{prop:reducedstab}
\end{prop}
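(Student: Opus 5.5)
We need to prove Proposition \ref{prop:reducedstab}: for $\psi^\mathcal S=2^{-n}\sum_{s\in\mathcal S}s$, the partial trace over $\bar A$ equals $2^{-|A|}\sum_{s_A\in\mathcal S_A}s_A$, and this operator lies in $\tilde{\mathrm{stab}}$ on $|A|$ qubits.

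The plan is to compute the partial trace term by term, using linearity and the product structure of Pauli strings. Each $s\in\mathcal S$ factorizes up to a phase as $s = \omega\, P_A\otimes P_{\bar A}$, where $P_A$ and $P_{\bar A}$ are Pauli strings on $A$ and $\bar A$ and $\omega\in\{\pm1,\pm i\}$. Then
\[
\tr_{\bar A}(s)=\omega\,\tr(P_{\bar A})\,P_A .
\]
Since $\tr(P_{\bar A})=0$ unless $P_{\bar A}=\mathbb 1_{\bar A}$, in which case it equals $2^{|\bar A|}$, exactly the elements of $\mathcal S_A$ survive. These are the $s$ with $\tr_{\bar A}(s)\neq 0$, i.e. those acting as the identity on $\bar A$. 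Summing gives
\[
\tr_{\bar A}\psi^\mathcal S=\frac{2^{|\bar A|}}{2^n}\sum_{s\in\mathcal S_A}s_A=\frac{1}{2^{|A|}}\sum_{s_A\in\mathcal S_A}s_A ,
\]
where $s_A$ denotes the restriction $\omega P_A$. The sign is kept in the restriction, so $s=s_A\otimes\mathbb 1_{\bar A}$.

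It remains to check that $\{s_A\}$ is a stabilizer group on $A$, so that the result belongs to $\tilde{\mathrm{stab}}_{|A|}$.

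\begin{itemize}
\item \emph{Subgroup.} $\mathcal S_A$ is a subgroup of $\mathcal S$: products and inverses of operators acting trivially on $\bar A$ still act trivially on $\bar A$, and the identity is included.
\item \emph{Isomorphism.} The restriction map $s\mapsto s_A$ is an injective group homomorphism, because $s=s_A\otimes\mathbb 1_{\bar A}$. Its image is therefore an abelian subgroup of $\mathcal P_{|A|}$.
\item \emph{No $-\mathbb 1$.} The image does not contain $-\mathbb 1_A$, since otherwise $-\mathbb 1\in\mathcal S$, contradicting that $\mathcal S$ is a stabilizer group. This also forces all phases to be $\pm1$, so the elements are Hermitian.
\end{itemize}

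Hence the reduced state has exactly the form $\psi^{\mathcal S'}$ with $\mathcal S'=\{s_A\}$ a stabilizer group on $|A|$ qubits, with the correct normalization $2^{-|A|}$.

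The only delicate step is the phase bookkeeping in the factorization $s=\omega P_A\otimes P_{\bar A}$. The statement's "$\in\tilde{\mathrm{stab}}_n$" should be read as membership in the $|A|$-qubit set. No earlier result is needed beyond the definitions.
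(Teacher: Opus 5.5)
Your proof is correct and follows the paper's route: expand $\psi^\mathcal S$ in Pauli strings, take the partial trace term by term, and keep only the stabilizers that act as the identity on $\bar A$, which gives the prefactor $2^{|\bar A|}/2^n = 2^{-|A|}$. Beyond that, you explicitly check that $\{s_A\}$ is an abelian subgroup of $\mathcal P_{|A|}$ not containing $-\mathbb I$, and you note that the target set should read $\tilde{\mathrm{stab}}_{|A|}$; both fill in details the paper dismisses as ``immediate by its definition.''
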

\begin{proof}
    Note that:
    \begin{align}
        \psi^\mathcal S_A &= \frac{1}{2^n}\sum_{s\in \mathcal S}\tr_{\bar A}(s) = \frac{1}{2^{|A|}}\sum_{s\in \mathcal S_A}s_A\;,
    \end{align}
    where on the second equality we have used the fact that if $\tr_{\bar A}s\neq 0$, it must be of the form $s =s_{A}\otimes \mathbb I_{\bar A}$, since otherwise the trace vanishes. Membership in $\tilde{\mathrm{stab}}_n$ is immediate by its definition.
\end{proof}
\noindent Hence, $\tilde M_2(\psi_A) >0$ witness $\psi_A\notin \tilde{\mathrm{stab}}_n$, by property 1 of Proposition \ref{prop:properties_monotomes}. It measures how far a subsystem is from coming from a stabilizer state.

\subsection{The structure of fixed nullity states \label{app:structurenullity}}

Since the nullity counts the number of Paulis fixed by a stabilizer group, we expect that given a $n$ qubit quantum state:
\begin{equation}
    \psi = \frac{1}{2^n}
    \sum_{P \in \mathcal P_n} \langle P\rangle_\psi P
\end{equation}
to have its Pauli representation constrained by its nullity. Indeed, this is the result shown in previous works, for pure states:
\begin{lem}
(\cite{leone2024learning,gu2024magic})
    Let $\psi = \ket{\psi}\bra{\psi}$ be $n$ qubit pure quantum state with nullity $\nu(\psi) =\nu$. Then, it admits the following presentation:
    \begin{equation}
        \psi= K\cdot \Pi\;,
        \label{eq:fixednullity}
    \end{equation}
    where:
    \begin{align}
    K & =\frac{1}{2^\nu} \sum_{i=0}^k \tr(h_i\psi) h_i, \\
    \Pi & =\frac{1}{|\operatorname{Stab}(\psi)|} \sum_{P \in \operatorname{Stab}(\psi)} P,
    \end{align}
    with $\Pi$ being the projector into $\mathrm{Stab}(\psi)$ and the set $\{h_i\}_{i=0}^k$ is referred as the set of bad generators, which are Paulis such that $\mathrm{supp}(\psi)\equiv \{P \in\mathcal P_n\;|\;\tr(h_i \psi)\neq 0\}$ splits into left cosets of the stabilizer group, 
    \begin{equation}
        \mathrm{supp}(\psi)  = \bigsqcup_{i=0}^k h_i \mathrm{Stab}(\psi)=\bigsqcup_{i=0}^k \{h_i s \;|\; s \in \mathrm{Stab}(\psi)\}\;,
        \label{eq:cosetdecomp}
    \end{equation}
    with $h_0=\mathbb I$. Furthermore, $[K, \Pi] =0$, $k+1 \leq 4^\nu$ and $\|K\|\leq 2^\nu$.
    \label{lem:fixednullity}
\end{lem}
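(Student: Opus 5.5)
The plan is to build the decomposition directly from the Pauli expansion $\psi=2^{-n}\sum_P \langle P\rangle_\psi P$, using the group structure of $G\equiv\mathrm{Stab}(\psi)$. This group has rank $n-\nu$, so $|G|=2^{n-\nu}$ up to signs. Fix a sign choice so that $s\ket\psi=\ket\psi$ for every $s\in G$; this is possible because $|\tr(s\psi)|=1$ for a pure state means $\ket\psi$ is a $\pm1$ eigenvector of $s$, and we may absorb the sign into $s$. Then $\Pi=|G|^{-1}\sum_{s\in G}s$ is the projector onto the joint $+1$ eigenspace, which has dimension $2^{\nu}$, and $\Pi\ket\psi=\ket\psi$. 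Hence $\psi=\psi\Pi=\Pi\psi$.

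The first key step is the coset structure of the Pauli support. For $P\in\mathcal P_n$ and $s\in G$ we have $\tr(Ps\psi)=\tr(P\psi)$, since $s\ket\psi=\ket\psi$. Therefore $\langle \cdot\rangle_\psi$ is constant (up to the fixed phase convention) on each left coset $PG$. Moreover, if $P$ anticommutes with some $s\in G$, then $\tr(P\psi)=\tr(Ps\psi)=-\tr(sP\psi)=-\tr(P\psi)$, so $\tr(P\psi)=0$. Thus $\mathrm{supp}(\psi)$ lies inside the commutant $C(G)$ and is a union of cosets of $G$. Choosing one representative $h_i$ per coset, with $h_0=\mathbb I$ for $G$ itself, gives the disjoint decomposition in Eq.~(\ref{eq:cosetdecomp}). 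For the counting, modulo phases $|C(G)|=4^n/|G|=4^n/2^{n-\nu}=2^{n+\nu}$, so the number of cosets of $G$ in $C(G)$ is $2^{n+\nu}/2^{n-\nu}=4^\nu$, which gives $k+1\le 4^\nu$.

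Next, regroup the Pauli sum coset by coset:
\[
\psi=\frac{1}{2^n}\sum_i\sum_{s\in G}\tr(h_is\psi)\,h_is=\frac{1}{2^n}\sum_i\tr(h_i\psi)\,h_i\sum_{s\in G}s=\frac{|G|}{2^n}\sum_i\tr(h_i\psi)\,h_i\,\Pi .
\]
Since $|G|/2^n=2^{-\nu}$, this is exactly $K\Pi$. The commutation $[K,\Pi]=0$ holds because each $h_i\in C(G)$ commutes with every $s\in G$. The step that needs the most care is the phase bookkeeping: $h_is$ must be identified with the Pauli appearing in the expansion, and $\tr(h_is\psi)=\tr(h_i\psi)$ must be used consistently. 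This works because I expand over the operators $h_is$ themselves, so any phase cancels between the coefficient $\tr(h_is\psi)$ and the operator $h_is$.

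Finally, for $\|K\|\le 2^\nu$ the plan is to restrict to the $+1$ eigenspace $V=\Pi\mathcal H$. Because $K$ commutes with $\Pi$, it preserves $V$, and on $V$ we have $K\Pi=\psi$, a rank-one projector. One might hope to bound $\|K\|$ directly, for instance by $2^{-\nu}\sum_i|\langle h_i\rangle|\le 2^{-\nu}\cdot 4^\nu=2^\nu$, using $|\langle h_i\rangle|\le1$ and $k+1\le4^\nu$. This crude triangle-inequality estimate already yields the claimed $\|K\|\le 2^\nu$. So I will simply use it, noting only that $K$ is Hermitian, since the coefficients $\tr(h_i\psi)$ are real for Hermitian $h_i$. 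The main obstacle is therefore just the rigorous phase and coset accounting in the second and third steps; the norm bound is immediate.
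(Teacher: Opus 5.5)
Your proof is correct and complete. The paper does not prove this lemma itself; it is quoted from \cite{leone2024learning,gu2024magic}, so there is no in-paper argument to match against. Your derivation establishes every claim on its own: fixing signs so that $s\ket\psi=\ket\psi$ for all $s\in\mathrm{Stab}(\psi)$, showing $\langle P\rangle_\psi$ vanishes off the commutant and is constant on cosets, counting $4^\nu$ cosets inside the commutant, and regrouping the Pauli expansion into $K\Pi$. The triangle-inequality estimate $\|K\|\le 2^{-\nu}(k+1)\le 2^\nu$ then gives the norm claim. Your sign-fixing step is not cosmetic. Read literally, the paper's $\mathrm{Stab}(\psi)=\{P:|\tr(P\psi)|=1\}$ contains $P$ together with $-P$, or, over unsigned strings, drops the signs $\tr(P\psi)$. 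In either case $\frac{1}{|\mathrm{Stab}(\psi)|}\sum_P P$ is not the code projector, and your convention is exactly what makes $\Pi$ correct.

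For comparison, the other standard route uses the Clifford normal form. Pick a Clifford $C$ with $C\,\mathrm{Stab}(\psi)\,C^\dagger=\langle Z_1,\dots,Z_{n-\nu}\rangle$, with signs fixed as in your proof, so that $C\ket\psi=\ket{0}^{\otimes(n-\nu)}\otimes\ket\phi$. The cosets of the stabilizer group in its commutant are then visibly labelled by $\nu$-qubit Pauli strings, so $k+1\le 4^\nu$ needs no symplectic counting. Choosing $h_i=C^\dagger(\mathbb I\otimes P_i)C$ gives $K=C^\dagger(\mathbb I\otimes\phi)C$, hence $\|K\|=1$, which is far sharper than the stated $2^\nu$. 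Your route instead avoids Clifford synthesis and works for an arbitrary choice of coset representatives. It is also phrased in the same coset language the paper later relies on in the proofs of Lemmas~\ref{lem:stab_renyi_formula} and~\ref{lem:localstabbound}. You can tighten the norm bound within your own framework too. Your regrouping gives $\tr\psi^2=2^{-\nu}\sum_i\langle h_i\rangle_\psi^2$, so purity forces $\sum_i\langle h_i\rangle_\psi^2=2^\nu$. Cauchy--Schwarz then yields $\|K\|\le 2^{-\nu}\sqrt{(k+1)\,2^\nu}\le 2^{\nu/2}$.
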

This Lemma shows how the knowledge of the stabilizer group $\mathrm{Stab}(\psi)$ is a strong constraint on its Pauli support: It reduces the need of tomographic information of $\psi$ to determining $O(4^\nu)$ amplitudes in $K$.

Now, we can construct a classical representation of the quantum state. To do so, it is convenient to take a representation of Pauli operators in terms of bitstrings, dubbed what is called the symplectic representation of Pauli operators \cite{gottesman1997stabilizer, gottesman1998heisenberg}, where given $P \in \mathcal P_n$, we can write it as:
\begin{equation}
    P = \lambda X^\mathbf a Z^\mathbf b\;,
    \label{eq:symppauli}
\end{equation}
where $\lambda \in \{\pm 1, \pm i\}$ and $\mathbf a, \mathbf b \in \mathbb F_2^n$ are binary vectors, where we are using the shorthand $Q^\mathbf \equiv Q^{x_1}\otimes Q^{x_2}\otimes \cdots \otimes Q^{x_n}$. Note that storing $P \mapsto \{\lambda, \mathbf a,\mathbf b\}$ takes $O(n)$ bits, and Pauli multiplication, that takes $P_1 \in \mathcal P_n \mapsto \{\lambda_1, \mathbf a_1,\mathbf b_1\}$ and $P_2\in \mathcal P_n\mapsto \{\lambda_2, \mathbf a_2,\mathbf b_2\}$ and computes the representation of $P_1\cdot P_2 \mapsto \{\lambda_1\lambda_2 (-1)^{\mathbf b_1 \cdot \mathbf a_2}, \mathbf a_1 +\mathbf a_2, \mathbf b_1 +\mathbf b_2\}$, takes $O(n)$ time, and thus checking commutativity of two Pauli strings. Given $P \in \mathcal P_n$, denote $\mathrm S(P) = (\mathbf a , \mathbf b)^T\in \mathbb F_2^{2n}$ as its binary vector on the corresponding symplectic representation. 

There is a very nice linear-algebraic characterization of stabilizer groups in this language that will be useful later:
\begin{prop}
    Let $\mathcal S\subseteq \mathcal P_n$ be a stabilizer group, and denote $\mathrm{gen}(\mathcal S) = \{g_1, g_2, \cdots, g_{\mathrm{rank}(\mathcal S)}\}$ as a minimal generating set. We define its stabilizer tableau $\mathcal T_\mathcal S$ as:
    \begin{equation}
        \mathcal T_\mathcal S = \begin{pmatrix}
            \mathrm S(g_1) \\
            \mathrm S(g_3)\\
            \vdots\\
            \mathrm S(g_{\mathrm{rank}(\mathcal S)})
        \end{pmatrix}
        \in \mathbb F_2^{\mathrm{rank}(\mathcal S)\times 2n}
    \end{equation}
    Then:
    \begin{equation}
        s \in \mathcal S \Rightarrow \mathrm S(s) \in \mathrm{rowspace}(\mathcal T_\mathcal S) \quad \& \quad \mathbf v\in \mathrm{rowspace}(\mathcal T_\mathcal S) \Rightarrow \exists s \in \mathcal S: \mathrm S(s) =\mathbf v\;.
    \end{equation}
    \label{prop:stabmembership}
\end{prop}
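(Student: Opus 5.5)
We prove the two directions of Proposition~\ref{prop:stabmembership} separately, working with the symplectic map $\mathrm S:\mathcal P_n\to\mathbb F_2^{2n}$ of Eq.~(\ref{eq:symppauli}). The organizing fact is that $\mathrm S$ is a group homomorphism from $(\mathcal P_n,\cdot)$ onto $(\mathbb F_2^{2n},+)$ whose kernel is exactly the phases $\{\pm 1,\pm i\}\mathbb I$. This follows from the multiplication rule already recorded in the text: $P_1P_2\mapsto(\mathbf a_1+\mathbf a_2,\mathbf b_1+\mathbf b_2)$, up to a phase.

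For the forward direction, take $s\in\mathcal S$. Since $\mathrm{gen}(\mathcal S)=\{g_1,\dots,g_r\}$ generates $\mathcal S$, and $\mathcal S$ is abelian with every element squaring to $\pm\mathbb I$, we can write $s=\lambda\, g_1^{c_1}\cdots g_r^{c_r}$ with $c_j\in\{0,1\}$ and $\lambda$ a phase. Reordering only introduces phases (in fact none, by commutativity), and $g_j^2\propto\mathbb I$ reduces exponents mod $2$. Applying the homomorphism gives $\mathrm S(s)=\sum_j c_j\,\mathrm S(g_j)$, which is a row combination of $\mathcal T_\mathcal S$.

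For the converse, let $\mathbf v=\sum_j c_j\,\mathrm S(g_j)$ with $c_j\in\mathbb F_2$, and set $s\equiv g_1^{c_1}\cdots g_r^{c_r}$. This is a product of elements of $\mathcal S$, so $s\in\mathcal S$. By the homomorphism property, $\mathrm S(s)=\mathbf v$, which produces the required element.

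The only delicate point is the phase bookkeeping. We must make sure that the element constructed in the converse lies in $\mathcal S$ itself, and not merely in $\langle \mathcal S, i\mathbb I\rangle$. Taking $s$ to be literally a product of the generators, with no phase inserted, settles this automatically. Because $-\mathbb I\notin\mathcal S$, the element $s$ is moreover the unique member of $\mathcal S$ with $\mathrm S(s)=\mathbf v$: if $s,s'\in\mathcal S$ both map to $\mathbf v$, then $s^{-1}s'\in\mathcal S\cap\ker \mathrm S$, which is a phase times $\mathbb I$. The only such phase allowed in $\mathcal S$ is $+1$, since $-\mathbb I\notin\mathcal S$ and $(\pm i\mathbb I)^2=-\mathbb I$. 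This remark is not needed for the statement, but it shows that $\mathrm S|_{\mathcal S}$ is a bijection onto $\mathrm{rowspace}(\mathcal T_\mathcal S)$. Minimality of the generating set, which is equivalent to linear independence of the rows, then gives $|\mathcal S|=2^{\mathrm{rank}(\mathcal S)}$. I expect no genuine obstacle beyond this phase check.
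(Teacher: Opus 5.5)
Your proposal is correct and follows the paper's proof: you write $s=\prod_j g_j^{c_j}$ and use additivity of $\mathrm S$ for the forward direction, and take $s=\prod_j g_j^{c_j}$ for the converse. Your extra phase bookkeeping is a sound addition that the paper leaves implicit, and so is your observation that $\mathrm S|_{\mathcal S}$ is injective because $-\mathbb I\notin\mathcal S$. In fact $g_j^2=\mathbb I$ exactly, so your $\lambda$ is always $1$.
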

\begin{proof}
 If $s\in \mathcal S$, then it exists a $\mathbf c \in \mathbb F_2^{\mathrm{rank}(\mathcal S)}$ such that:
    \begin{equation}
        s = \prod_{j=1}^{\mathrm{rank}(\mathcal S)}g_j ^{c_j}\Rightarrow \mathrm S(s) = \sum_{j=1}^{\mathrm{rank}(\mathcal S)}c_j \mathrm S(g_j)\in \mathrm{rowspace}(\mathcal T_\mathcal S)\;.
    \end{equation}
    The proof for the other statement follows similarly: Consider $\mathbf v=\sum_j c^\prime_j \mathrm S(g_j) \in \mathrm{rowspace}(\mathcal T_\mathcal S)$. Then, $s\equiv \prod_j g_j^{c^\prime_j}\in \mathcal S$, since it is a product of generators.
\end{proof}

We can use the symplectic form of Pauli strings to construct the desired representation of $\psi$ guaranteed by Lemma \ref{lem:fixednullity}. As illustrated by Proposition \ref{prop:stabmembership}, a way to encode the stabilizer group is to specify its generators, given by $\mathcal T_{\mathrm{Stab}(\psi)}$. Since the projector $\Pi$ is specified by its generators:
\begin{equation}
    \Pi = \prod_{j=1}^{\mathrm{rank}(\mathcal S)}\left(\frac{1+ g_j}{2}\right) = \frac{1}{|\mathrm{Stab}(\psi)|}\sum_{P \in \mathrm{Stab}(\psi)}P\;,
\end{equation}
knowledge of the tableau is a valid compression of the projector. Since the bad generators can be represented by $h_i \mapsto (\tr(h_i \psi),\mathrm S(h_i) )$, the final representation is given as:
\begin{equation}
    \psi  = |\psi\rangle \langle \psi|\rightsquigarrow (\mathcal T_{\mathrm{Stab}(\psi)},\{\tr(h_i\psi), \mathrm S(h_i)\}_{i=1}^k )\;,
    \label{eq:fixednullity_repfull}
\end{equation}
shown in simplified form in Eq. (\ref{eq:fixednullity_rep}).

\section{Ensembles with $\mathrm{poly} \log n$ qubits and Proof of Theorem \ref{thm:arealawMDhardness} \label{app:arealawMDhardness}}
We will consider $n = \Lambda_L =\{1,2,\cdots, L\}^D$. The theorem follows by showing that the proof of \cite{gu2024magic} is robust when imposing the area-law. It relies on the properties of two ensembles of states: First, define:
\begin{equation}
    \mathcal E_{\mathrm{Haar}(\nu)} \equiv \{\ket{\phi_\mathrm{Haar}}\otimes \ket{0}^{\otimes (n-\nu)}\;|\; \ket{\phi_\mathrm{Haar}} \in \mathrm{Haar}_\nu\}\;,
    \label{eq:ensembledopedhaar}
\end{equation}
with $\mathrm{Haar}_\nu$ being the Haar ensemble over $\nu$ qubits. Similarly, we also define:
\begin{equation}
    \mathcal E_{g,S,\nu} \equiv \{|\phi_{g,S}\rangle\otimes \ket{0}^{\otimes (n-\nu)}\;|\; \{|\phi_{g,S}\rangle \in \mathcal E_{g,S}(\nu)\}\;,
    \label{eq:ensembledopedsubset}
\end{equation}
where $\mathcal E_{g,S}(\nu)$ is the ensemble of subset phase states defined on $\nu$ qubits, as:
\begin{equation}
    |\phi_{g,S}\rangle = \frac{1}{\sqrt{|S|}}\sum_{\mathbf x \in S }(-1)^{g(\mathbf x)}\ket{\mathbf x}\;,
\end{equation}
with $g:\{0,1\}^\nu\to \{0,1\}$ a random Boolean function and $S \subseteq \{0,1\}^\nu$ a random subset \cite{aaronson2022quantum}. In \cite{gu2024magic}, it was shown the following:
\begin{prop}
    (Theorem 10 of \cite{gu2024magic} and Lemma S6 of \cite{gu2024pseudomagic}) Let $c>  1$. Set $\nu =\Theta(\log^c n)$ and $|S| = \Theta(\exp\log^{c^\prime} n)$, with $c^\prime \in (1,c)$. Two facts hold:
    \begin{enumerate}
        \item $\mathcal E_{\mathrm{Haar}(\nu)}$ and $\mathcal E_{g,S,\nu}$ are MD with high probability;
        \item Consider a state $\psi^\prime  = \ket{\psi^\prime}\bra{\psi^\prime}$ a MD state with $S(\psi^\prime_A) =\omega(\log n)$. If exists a sample efficient protocol that, with high probability: (1) estimate $S(\psi^\prime_A)$ within relative error at most $1/3$ relative error, or (2) a efficient LOCC protocol that distills a constant number of Bell pairs $M_+/S(\psi^\prime_A) = \Omega(1)$ or (3) an efficient LOCC protocol that requires $M_-/S(\psi_A) = O(1)$ to prepare a general MD state even with polynomially many copies of $\psi^\prime$, for all such $\psi^\prime$, then $\mathcal E_{\mathrm{Haar}(\nu)}$ and $\mathcal E_{g,S,\nu}$ are efficiently distinguishable.
    \end{enumerate}
    \label{prop:knownensembleresults}
\end{prop}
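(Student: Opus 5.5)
Proposition~\ref{prop:knownensembleresults} is imported from \cite{gu2024magic, gu2024pseudomagic}; here is how I would reconstruct it. The plan has two steps: establish the nullity and entropy profile of the two ensembles, then show that each task would separate them. The choice of parameters is essential. With $\nu=\Theta(\log^c n)$ and $\log|S|=\Theta(\log^{c'}n)$, $1<c'<c$, both ensembles have nullity exactly $\nu$. At the same time, across any cut through the $\nu$ active qubits their entanglement differs by a superconstant factor: $\Theta(\nu)$ versus $O(\log^{c'} n)=o(\nu)$. Both entropies are still $\omega(\log n)$.

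For part 1, I first bound the nullity. For a Haar state on $\nu$ qubits, the probability that a fixed non-identity Pauli has $|\tr(P\phi)|=1$ is zero. A union bound over the $4^\nu$ Paulis then gives $\mathrm{Stab}(\phi_{\mathrm{Haar}})=\{\mathbb I\}$ almost surely. Tensoring with $\ket 0^{\otimes(n-\nu)}$ adds exactly the $n-\nu$ single-qubit $Z$ stabilizers, so $\nu(\psi)=\nu$. For subset phase states, I would show that with high probability no non-trivial Pauli on the $\nu$ active qubits stabilizes $\ket{\phi_{g,S}}$. An $X$-type component must map $S$ onto itself, which fails with high probability for a random $S$ of the chosen size. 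A pure $Z$-type Pauli $Z^{\mathbf b}$ would need $(-1)^{\mathbf b\cdot\mathbf x}$ constant on $S$, and $S$ being random and large kills this too. Signs from random $g$ kill the remaining combinations. For the entropy, any cut satisfies $S(\psi_A)\le \min(|A\cap[\nu]|,\log|S|)\le\nu$. Hence $S(\psi_A)=O(\nu)$, i.e.\ MD.

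For part 2, I first record the two entropy profiles. Page-type concentration gives $S(\psi_A)=\Theta(\nu)$ for Haar whenever the cut splits the active block into linear parts. For subset states, $S(\psi_A)\le \log|S|=o(\nu)$, and a matching lower bound $\Omega(\log^{c'} n)$ holds, which is $\omega(\log n)$. The reduction then goes task by task, assuming a protocol that works for all MD states with $S=\omega(\log n)$.
\begin{enumerate}
\item \emph{Estimation.} An estimator with relative error $1/3$ outputs values separated by a superconstant factor on the two ensembles, so thresholding distinguishes them.
\item \emph{Distillation.} A distiller achieving $M_+\ge c\,S(\psi_A)$ on Haar inputs yields $\Omega(\nu)$ Bell pairs. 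On subset inputs, LOCC monotonicity caps the yield at $O(\log^{c'}n)$. Counting verified Bell pairs, via Bell-basis measurements after the protocol, distinguishes the ensembles.
\item \emph{Dilution.} A diluter using $M_-\le C\,S$ Bell pairs must consume $\Omega(\nu)$ pairs when fed Haar copies, but only $O(\log^{c'}n)$ pairs on subset copies. The consumed resource is observable, which distinguishes. Alternatively, running it with the smaller budget on Haar inputs produces a state of low entanglement, whose fidelity with the target can be checked by a swap test against fresh copies.
\end{enumerate}

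The main obstacle is making these distinguishers genuinely efficient in both time and samples, so that they contradict computational indistinguishability. That indistinguishability is the pseudoentanglement result of \cite{aaronson2022quantum}, which should be invoked as the known contradiction. The distillation and dilution cases are the delicate ones, because the verification step must itself use only polynomially many copies and polynomial-time operations. I would first try a SWAP-test-based fidelity estimate, which needs only $O(1/\varepsilon^2)$ copies, combined with a Markov-type averaging argument. That argument should convert "works with high probability for every state" into "works on average over each ensemble".
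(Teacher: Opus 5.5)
The paper gives no proof of this proposition; it imports the result from Theorem 10 of \cite{gu2024magic} and Lemma S6 of \cite{gu2024pseudomagic}. Your reconstruction is the standard argument behind those cited results and is sound: both ensembles have nullity exactly $\nu$, while on cuts that split the active block linearly their entanglement is $\Theta(\nu)$ versus $\Theta(\log^{c'}n)=\omega(\log n)$, and each task then separates them (thresholding for estimation, LOCC monotonicity plus Bell-measurement or SWAP-test verification for distillation and dilution). Two points should be made explicit: your $O(\log^{c'}n)$ cap on distilled pairs assumes $M_+$ is a per-copy (or one-shot) yield, and the indistinguishability of the ensembles is not part of this statement but enters only in the proof of Theorem~\ref{thm:arealawMDhardness}, where for truly random $(g,S)$ it is statistical rather than computational.
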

The strategy to show Theorem \ref{thm:arealawMDhardness} is to ``localize'' the entanglement of the states in the ensembles, meaning to construct an embedding of the doped states such that they are area-law. To do so, we will need to build some results. 

First, we will need a discrete isoperimetric inequality:
\begin{prop}[\cite{bollobas1991edge}]
    Let $A \subseteq [n]\subseteq \Lambda_D$, with $|A|\leq n/2$. Then, there is a dimension-dependent  positive constant $c_D =O(1)$ such that:
    \begin{equation}
        |\partial A|\geq c_D|A|^{(D-1)/D}\;.
    \end{equation}
    \label{prop:isoperimetric}
\end{prop}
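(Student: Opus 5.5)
We want a discrete edge/vertex isoperimetric inequality on the hypercubic lattice $\Lambda_D=\{1,\dots,L\}^D$: for $A$ with $|A|\le n/2$, $|\partial A|\ge c_D|A|^{(D-1)/D}$. The plan is to reduce to the Loomis--Whitney inequality by projection, then handle the large-set regime using the box's finite size.

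\textbf{Step 1 (small sets, $|A|\le \kappa L^D$ for a small constant $\kappa$).} Let $\pi_k$ denote the projection along coordinate direction $k$. Loomis--Whitney gives
\begin{equation}
|A|^{D-1}\le \prod_{k=1}^D|\pi_k(A)|,
\end{equation}
so some direction $k$ has $|\pi_k(A)|\ge |A|^{(D-1)/D}$.

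Each point of $\pi_k(A)$ labels a line $\ell$ in direction $k$ meeting $A$. If $\ell\not\subseteq A$, then walking along $\ell$ from a point of $A$ toward a point of $\bar A$ gives a site of $A$ at distance $1$ from $\bar A$. That site lies in $\partial A$ and on $\ell$. Distinct lines are disjoint, so these boundary sites are distinct.

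Hence $|\partial A|\ge |\pi_k(A)|-F_k$, where $F_k$ is the number of lines in direction $k$ fully contained in $A$. Since $F_k L\le |A|$, we get $F_k\le |A|/L$. Because $|A|\le L^D$, we have $|A|^{1/D}\le L$, so $|A|/L\le |A|^{(D-1)/D}$.

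This is only marginal, so a refinement is needed. Apply Loomis--Whitney instead to the set $A'$ obtained by removing all full lines. Alternatively, and more simply, use that if $|A|\le \kappa L^D$ then $F_k\le \kappa L^{D-1}$, compared against $|\pi_k(A)|$. When $|\pi_k(A)|\ge 2F_k$ we obtain $|\partial A|\ge \tfrac12|A|^{(D-1)/D}$.

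\textbf{Step 2 (full lines in every direction).} The remaining case has $F_k$ comparable to $|\pi_k(A)|$ in every direction. This is the main obstacle. Here I would use the induction/compression argument of Bollob\'as--Leader rather than raw projections: compressing $A$ along each coordinate does not increase $|\partial A|$ and preserves $|A|$. One thereby reduces to down-sets, for which the boundary can be counted explicitly. A down-set of size $\le n/2$ has boundary at least a constant times $\min\{|A|^{(D-1)/D}, L^{D-1}\}$.

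\textbf{Step 3 (large sets, $\kappa L^D<|A|\le L^D/2$).} Here $|A|^{(D-1)/D}\le L^{D-1}$, so it suffices to show $|\partial A|=\Omega(L^{D-1})$. Since $|\bar A|\ge L^D/2$, a positive fraction of the lines in some direction meet both $A$ and $\bar A$. Otherwise, all lines in every direction would be monochromatic, and then $A$ would be a union of lines in all directions, forcing $A$ to be empty or all of $\Lambda_D$. Each mixed line contributes a distinct boundary site, giving $\Omega(L^{D-1})$ boundary sites.

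\textbf{Combining.} Cases 1--3 together give the claim with an explicit $c_D$ depending only on $D$. The point where care is needed is Step 2, where I would defer to the compression argument of \cite{bollobas1991edge}.
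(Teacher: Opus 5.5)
The paper does not prove this proposition; it quotes it from Bollob\'as--Leader. The route implied by that citation is the grid edge-isoperimetric inequality $|E(A,\bar A)|\ge \min_{1\le r\le D} r|A|^{1-1/r}L^{D/r-1}$, in which each term is $\ge |A|^{(D-1)/D}$ since $L\ge |A|^{1/D}$, combined with $|E(A,\bar A)|\le 2D|\partial A|$. Your self-contained route through Loomis--Whitney is a reasonable alternative. As written, however, Step 3 has a genuine gap, and Step 2 is an unproved detour.

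\textbf{Step 3 gap.} The sentence ``otherwise all lines in every direction would be monochromatic'' does not follow. The negation of ``a constant fraction of the lines in some direction are mixed'' is ``at most $\epsilon L^{D-1}$ lines are mixed in every direction'', not ``no line is mixed''. Your argument therefore yields one mixed line, not $\Omega(L^{D-1})$ of them. A quantitative replacement is a canonical-path count:
\begin{itemize}
\item For $x\in A$ and $y\in\bar A$, set $z_j=(y_1,\dots,y_j,x_{j+1},\dots,x_D)$. Since $z_0\in A$ and $z_D\notin A$, some $j$ has $z_{j-1}\in A$ and $z_j\notin A$, so the direction-$j$ line through them is mixed.
\item A fixed line in direction $j$ is charged by at most $L^{D+1}$ pairs, because the free data are $x_1,\dots,x_j$ and $y_j,\dots,y_D$.
\item With $m_j$ the number of mixed lines in direction $j$, this gives $\sum_j m_j\ge |A||\bar A|/L^{D+1}\ge |A|/(2L)$, using $|\bar A|\ge L^D/2$.
\item Hence $|\partial A|\ge \max_j m_j\ge |A|/(2DL)$. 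When $|A|>\kappa L^D$ this is at least $\frac{\kappa^{1/D}}{2D}|A|^{(D-1)/D}$.
\end{itemize}

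\textbf{Step 2.} As written, this step hands the ``main obstacle'' to the very reference whose statement you are proving, and the down-set bound is asserted without proof. The case in fact never arises. You bounded $F_k\le\kappa L^{D-1}$, but the sharper bound $F_k\le |A|/L$ together with $|A|\le\kappa L^D$ gives $F_k\le \kappa^{1/D}|A|^{(D-1)/D}\le\kappa^{1/D}|\pi_k(A)|$ for the Loomis--Whitney direction $k$. Taking $\kappa=2^{-D}$ makes $|\pi_k(A)|\ge 2F_k$ automatic. Step 1 alone then gives $|\partial A|\ge\frac12|A|^{(D-1)/D}$ for $|A|\le 2^{-D}L^D$. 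The count above covers $2^{-D}L^D<|A|\le L^D/2$, so the proposition holds with $c_D=1/(4D)$ and no compression argument is needed.
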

Consider:
\begin{prop}
    Given an embedding $[n]\subseteq \Lambda_D$, $\ell \in \mathbb Z_{\geq 1} $ and $1\leq N_\mathrm{max}\leq n$, there is an algorithm that computes a subset $Q\subseteq [n]$ such that $\mathrm{dist}_{\Lambda_D}(i,j)\geq \ell$ for all $i,j \in Q$ provided that $n = \omega(|Q|\ell^D)$. We will refer to those as $\ell$-sparse embeddings.
    \label{prop:lsparseembedding}
\end{prop}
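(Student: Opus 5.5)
We build $Q$ greedily on a coarse-grained sub-lattice. Partition $\Lambda_D$ into disjoint hypercubic blocks of side length $\ell$. Each block contains at most $\ell^D$ sites, and its diameter in the taxicab metric is at most $D(\ell-1)$. Consider the sub-lattice of block indices, and keep only those blocks whose index vector has every coordinate even; call these the \emph{active} blocks. Any two distinct active blocks differ by at least $2$ in some block coordinate. So any two points taken from distinct active blocks are separated by at least one full block of side $\ell$ along that coordinate, which gives taxicab distance at least $\ell$ (in fact at least $\ell+1$).

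The algorithm scans the active blocks in lexicographic order. From each active block that meets $[n]$ it selects one site of $[n]$ and adds it to $Q$, stopping once $|Q|=N_{\mathrm{max}}$ or the active blocks are exhausted. By the observation above, every pair $i,j\in Q$ satisfies $\mathrm{dist}_{\Lambda_D}(i,j)\geq \ell$. The procedure runs in time $O(n)$: each site is assigned to its block by integer division of its coordinates, and each block is checked once.

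Next we check that enough points are produced. The active blocks, each enlarged to the $2^D$ blocks it represents, cover $\Lambda_D$. Hence the number of active blocks meeting $[n]$ is at least
\begin{equation}
\frac{n}{2^D \ell^D} .
\end{equation}
This holds when $[n]$ is the full lattice or a union of blocks. For a general embedded $[n]$ we use the weaker bound obtained by charging each site of $[n]$ to the active block representing its block. This shows that we can obtain $|Q|$ points whenever $|Q|\, 2^D \ell^D \leq n$. Since $D=O(1)$, the hypothesis $n=\omega(|Q|\ell^D)$ guarantees this inequality for all large $n$, so any target size $|Q|\leq N_{\mathrm{max}}$ with $n=\omega(|Q|\ell^D)$ is achieved.

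The main subtlety is the counting step when $[n]$ is an arbitrary subset of $\Lambda_D$ rather than the full hypercube. In that case a single active block may represent many occupied neighbouring blocks while itself containing no site of $[n]$. To handle this, in each group of $2^D$ blocks we pick a site from any occupied block of the group rather than only from the active one. Separation between points from different groups still holds after a slight enlargement: we take blocks of side $\ell$ and groups of side $2\ell$, and use a parity colouring of the groups, which costs an extra factor $2^D$ in the count. This dimension-dependent constant is again absorbed by the $\omega(\cdot)$ assumption.
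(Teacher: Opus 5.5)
Your separation argument is correct: two blocks in the same parity class differ by at least $2$ in some block coordinate, so the chosen points are at taxicab distance at least $\ell+1$. For the case actually used in App.~\ref{app:arealawMDhardness}, where $[n]=\Lambda_L$ is the full box, the number of active blocks is indeed at least $n/(2^D\ell^D)$. The gap is in the counting for a general embedded $[n]$.

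The charging argument in your third paragraph shows that at least $n/(2^D\ell^D)$ \emph{groups} of $2^D$ blocks contain a site of $[n]$. Your algorithm, however, only selects from \emph{active} blocks that meet $[n]$, which is a different quantity. That number can be zero while $n$ is arbitrarily large: take $[n]$ to be the union of all blocks having an odd coordinate. This also refutes your claim that the bound holds for unions of blocks.

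The repair in your last paragraph does not close the gap as written. If you fix one parity class of the side-$2\ell$ groups, the same failure recurs one level up: take $[n]$ inside the groups of the other classes. So the ``extra factor $2^D$'' is justified only if the colour class is chosen adaptively, which you never state.

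The missing idea is exactly that adaptive, pigeonhole choice, and with it the side-$2\ell$ groups are unnecessary. The side-$\ell$ blocks partition $[n]$ and each contains at most $\ell^D$ sites, so at least $n/\ell^D$ blocks meet $[n]$. These are distributed among $2^D$ parity classes, so some class contains at least $n/(2^D\ell^D)$ occupied blocks. Take one site of $[n]$ from each of those blocks. This yields $|Q|=N_{\max}$ whenever $2^D N_{\max}\ell^D\le n$, which $n=\omega(N_{\max}\ell^D)$ guarantees.

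With that fix, your route is a valid alternative to the paper's. The paper uses a greedy maximal packing: repeatedly pick any remaining site of $[n]$ and delete the sites within distance $\ell-1$ of it. It never meets the occupancy issue, because the bound $|B(i,\ell-1)\cap[n]|\le c_D\ell^D$ does not depend on the shape of $[n]$. Hence at least $n/(c_D\ell^D)$ points are produced before $[n]$ is exhausted. Your repaired block construction buys an explicit $O(n)$-time procedure and separation $\ell+1$, at the cost of the $2^D$ pigeonhole constant.
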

\begin{proof}
    As an input, we will consider the integer-valued matrix for the distances between the points in $[n]$:
    \begin{equation}
        M = (\mathrm{dist}_{\Lambda_D}(i,j))_{1\leq i,j\leq n} \in \mathbb Z_{\geq 0}^{n\times n}\;.
    \end{equation}
    Denote the ball in $[n]$ as:
    \begin{equation}
        B(i,r;[n])\equiv \{j \in [n]\;:\;\mathrm{dist}_{\Lambda_D}(i,j)\leq r\} = \{j \in [n]\;:\;M_{i,j}\leq r\}\;.
    \end{equation}
    \begin{algorithm}[H]
\caption{Greedy sphere-packing algorithm on $[n]$ \label{alg:greedy}}
    \KwIn{ $(M \in \mathbb Z_{\geq 1 }^{n\times n}, \ell \in \mathbb Z_{\geq 1 }, 1\leq N_\mathrm{max} \leq n)$;}
    \KwOut{$V\subseteq [n]$}
    $S \gets [n]$\\
    $Q \gets \emptyset $\\
    \For{$k \gets 1 \mathrm{\;to\;}N_\mathrm{max}$}{
        \If{$S \neq \emptyset$}{choose $i \in S$  
        \\
        append $i$ to $Q$\\
        $S\gets S\backslash \{j \in [n]\;:\;M_{i,j}\leq \ell-1\}$}
    }
    \Return{$Q$}
\end{algorithm}
    Indeed, the points in the output $V$ are well-separated by $\ell$: Given an $i \in V$, it follows that $\{j\in [n]\;:\;\mathrm{dist}_{\Lambda_D}(i,j)\leq \ell-1\}\cap V = \emptyset$. Hence, for every distinct pair $i,j \in V$, we have that $\mathrm{dist}_{\Lambda_D}(i,j)\geq \ell$. By the iteration procedure above, let $k_\mathrm{max}\leq N_\mathrm{max}$ be the maximum iteration above, defined as:
    \begin{equation}
        k_\mathrm{max}\::\Leftrightarrow \begin{cases}\:[n]\backslash B(i_1,\ell -1;[n])\backslash B(i_2,\ell-1;[n])\backslash \cdots\backslash B(i_{k_\mathrm{max}}, \ell-1;[n]) = \emptyset\;,\\
        =N_\mathrm{max}
        \end{cases}
    \end{equation}
    such that $i_{k} 
    \in [n]\backslash B(i_1,\ell -1;[n])\backslash B(i_2,\ell-1;[n])\backslash \cdots\backslash B(i_{k-1}, \ell-1;[n])$, yielding $Q= \{i_1, i_2, \cdots, i_{k_\mathrm{max}}\}$ as the output in the above algorithm. This means that a sufficient condition to have $k_\mathrm{max} =N_\mathrm{max}$ is to have:
    \begin{equation}
        n >\sum_{k=1}^{N_\mathrm{max}} |B(i_k, \ell-1;[n])|\;.
    \end{equation}
    So, let us upper bound the right-hand size in order to guarantee this condition. We have:
    \begin{equation}
       \sum_{k=1}^{N_\mathrm{max}} |B(i_k, \ell-1;[n])| \leq N_\mathrm{max}|B(0, \ell-1)| \leq N_\mathrm{max}|B(0,\ell)|\;,
    \end{equation}
    where we have bounded the number of points in all the 1-norm balls in $[n]$ by their lattice inclusion. We can compute this number as
    \begin{align}
        |B(0,\ell)| =\left|\left\{(j_1, j_2, \cdots,j_D)\in \mathbb Z^D\;:\;\sum_{1\leq k \leq D}
        |j_k|\leq \ell\right\}\right|=\sum_{k=0}^D {D\choose k}2^k {\ell \choose k}\leq 2^D\left[
        \frac{\ell^D}{D!} + O(\ell^{D-1})\right]\leq c_D \ell^D\;,
    \end{align}
    where, on the second inequality, we counted the number of points by first choosing $k$ out of the $D$ coordinates of $(j_1, j_2, \cdots, j_D)$ to be non-zero, choosing one of the $2^k$ sign configurations, and distributing $\ell$ into $k$ integers. In the corresponding sum, the term with $k=D$ dominates for $\ell \to \infty$, which we assume as $n\to \infty$, and we refer to $c_D$ as the corresponding leading constant, which is only dimension dependent. Of course, $n = \omega(N_\mathrm{max} \ell^D)\Rightarrow n > c N_\mathrm{max}\ell^D$ for any $c=O(1)$ constant, satisfying the constraint given that $|Q| =N_\mathrm{max}$.
\end{proof}

\begin{lem}
    The following two statements are true:
    \begin{enumerate}
        \item There is a $\ell$-sparse embedding $Q\subseteq [n]\subseteq \Lambda_D$ for $D\geq 2$ with $\ell \geq |Q|^{1/(D-1)}$ and $\ell, |Q|=O(\mathrm{poly}\log n)$;
        \item Let $\psi$ be a pure $n$ qubit state with $\ket{\psi} =\ket{\phi_Q}\otimes \ket{0}^{\otimes (n-|Q|)}$, with $\phi_Q$ being a  pure $|Q|$ qubit quantum state supported on $Q$ being the $\ell$-sparse embedding of Part 1. Then, it is area law.
    \end{enumerate}
\label{lem:sparseembeddingarealaw}
\end{lem}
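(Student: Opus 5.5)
For Part 1, I would invoke Proposition \ref{prop:lsparseembedding} with a concrete parameter choice. Fix the target size $N_\mathrm{max}=|Q|=\Theta(\log^c n)$, matching the qubit number $\nu$ needed in Proposition \ref{prop:knownensembleresults}. Take $\ell=\lceil |Q|^{1/(D-1)}\rceil$, which is also $O(\mathrm{poly}\log n)$. The hypothesis of Proposition \ref{prop:lsparseembedding} is $n=\omega(|Q|\ell^D)$. Here $|Q|\ell^D=O(|Q|^{1+D/(D-1)})=O(\mathrm{poly}\log n)$, and any polylogarithm is $o(n)$, so the hypothesis holds. The greedy Algorithm \ref{alg:greedy} therefore returns a set of exactly $|Q|$ points that are pairwise at distance at least $\ell$. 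This part is routine.

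For Part 2, the state $\ket{\phi_Q}\otimes\ket{0}^{\otimes(n-|Q|)}$ is a product across $Q$ and its complement. So for any region $A$, $\psi_A=\tr_{Q\setminus A}(\phi_Q)\otimes\ket{0}\bra{0}^{\otimes |A\setminus Q|}$. It follows that for every $\alpha\in[0,\infty)$,
\begin{equation}
S_\alpha(\psi_A)\leq S_0(\psi_A)\leq \min\{|A\cap Q|,|\bar A\cap Q|\}.
\end{equation}
The first inequality holds because Rényi entropies are nonincreasing in $\alpha$; the second because the Schmidt rank is bounded by the smaller side of the cut. It therefore suffices to show $m\equiv\min\{|A\cap Q|,|\bar A\cap Q|\}=O(|\partial A|)$. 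By symmetry of the cut I may assume $|A|\leq n/2$ after swapping $A$ and $\bar A$. This swap is harmless because the inner and outer boundaries of $A$ are comparable up to a factor $2D$.

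The key step is a packing bound: $|A\cap Q|$ is controlled by how much of $A$ lies within distance $\ell/2$ of the boundary. I would split $A\cap Q$ into two kinds of points.

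\emph{Deep points} are at distance greater than $\ell/2$ from $\bar A$. Around each such point, the ball of radius $\ell/2$ lies inside $A$, and these balls are pairwise disjoint by $\ell$-sparseness. This gives $|A|\gtrsim \#\mathrm{deep}\cdot \ell^D$. Then the isoperimetric inequality, Proposition \ref{prop:isoperimetric}, yields $|\partial A|\geq c_D|A|^{(D-1)/D}\gtrsim \#\mathrm{deep}^{(D-1)/D}\,\ell^{D-1}$. The choice $\ell^{D-1}\geq |Q|\geq \#\mathrm{deep}$ then gives $\ell^{D-1}\geq \#\mathrm{deep}^{1/D}$, hence $|\partial A|\gtrsim \#\mathrm{deep}$. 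This is exactly where the condition $\ell\geq|Q|^{1/(D-1)}$ and the assumption $D\geq 2$ are used.

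\emph{Shallow points} are within distance $\ell/2$ of the boundary. Each such point lies near some boundary site, and conversely a single boundary site lies within $\ell/2$ of at most $O(1)$ points of $Q$, by sparseness. So the number of shallow points is $O(|\partial A|)$ once $A$ has at least one boundary site near each of them.

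I expect this second count to be the main obstacle. A shallow point needs a boundary site within distance $\ell/2$, and that site must be \emph{distinct} for different points; otherwise the count fails. I would handle it by charging each shallow point $q$ to a boundary site on a shortest lattice path from $q$ to $\bar A$. Points are $\ell$-separated, so the radius-$\ell/2$ neighbourhoods of distinct points are disjoint. The charged boundary sites are therefore distinct, which gives $\#\mathrm{shallow}\leq|\partial A|$. Combining both counts, $m\leq\#\mathrm{deep}+\#\mathrm{shallow}=O(|\partial A|)$, which proves that the state is area-law in the sense of Definition \ref{def:arealawstate}.
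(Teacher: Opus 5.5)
Your proof is correct. Part 1 matches the paper, but your Part 2 takes a genuinely different route.

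The paper decomposes $A$ into nearest-neighbour connected components $C_i$, so that $|\partial A|=\sum_i|\partial C_i|$. Components containing at most one point of $Q$ are disposed of trivially. For $k_i=|Q\cap C_i|\ge 2$ the paper uses only path-connectivity: a path inside $C_i$ from $q$ to another point of $Q$ meets $B(q,r)$, with $r=\lfloor(\ell-1)/2\rfloor$, in at least $r+1$ sites. Hence $|C_i|\ge k_i\ell/2$, and the isoperimetric inequality applied to each $C_i$ closes precisely when $\ell^{D-1}\ge k_i$.

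You instead classify points by depth. Deep points own disjoint full $D$-dimensional balls inside $A$, which gives $|A|\gtrsim \#\mathrm{deep}\,\ell^{D}$ and needs only a single application of isoperimetry to $A$. Shallow points are charged injectively to $\partial A$ via the last $A$-site on a shortest path to $\bar A$.

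What each route buys:
\begin{itemize}
\item The paper's argument needs no charging step and never requires a ball to fit inside $A$.
\item Yours extracts the full volume of each ball. As a result it only needs $\ell^{D(D-1)}\gtrsim|Q|$, a weaker sparsity requirement than the lemma's hypothesis $\ell^{D-1}\ge|Q|$.
\end{itemize}

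Two constant-level details should be fixed:
\begin{itemize}
\item Use radius $\lfloor(\ell-1)/2\rfloor$ rather than $\ell/2$ for the disjoint balls. Closed radius-$\ell/2$ balls around two points at distance exactly $\ell$ intersect.
\item Deep points near the faces of $\{1,\dots,L\}^D$ have truncated balls. These still contain an orthant, and hence $\Omega(\ell^D)$ sites, because $\ell=O(\mathrm{poly}\log n)\ll L$.
\end{itemize}
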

\begin{proof}
By Proposition \ref{prop:lsparseembedding}, the sparse embedding exists as an output of Algorithm \ref{alg:greedy} if $n =\omega(|Q|\ell^D)$. But since we guarantee that $\ell|Q|^D = O(\mathrm{poly}\log n)$, the constraint is automatically satisfied, and hence, $Q\subseteq [n]$ exists.

We now move to show the area law result. Suppose $|A|\leq n/2$, and thus all of its subsets will satisfy Proposition \ref{prop:isoperimetric}.The reduced density matrix is given as $\psi_A  = \mathrm{tr}_{\bar A}\phi_Q\otimes \ket{0}\bra{0}^{\otimes|{\bar Q} \cap A|} $, where $\bar Q$ is the complement of $Q$ in $[n]$. Then, $S(\psi_A) = S(\mathrm{tr}_{\bar A}\phi_Q)$, yielding the upper bound:
    \begin{equation}
        S_\alpha(\psi_A) \leq |Q \cap A|\;.
    \label{eq:entropyintersectionbound}
    \end{equation}
We will upper bound this number using the properties of $Q$. Decompose $A = \bigsqcup_i C_i$, where each $C_i$ is a nearest-neighbor connected component. They are disjoint since their boundaries are disjoint, and we have $|\partial A| =\sum_i |\partial C_i|$.

Now, set $k_i =|Q\cap C_i|$. If $k_i =0$, the particular intersection does not contribute. If $k_i=1$, then $C_i$ is a proper subset of the connected ambient box. Hence, $k_i =1 \leq |\partial C_i|$. For $k_i \geq 2$, set $r = \lfloor (\ell-1)/2\rfloor$. The lattice balls centered at the points of $Q\cap C_i$ are pairwise disjoint: That is, given any distinct pair $q,q^\prime \in Q\cap C_i$, 
\begin{equation}
    B(q, r) \cap B(q^\prime,r) =\emptyset\;,
\end{equation}
since points $\mathrm{dist}_{\Lambda_D}(q,q^\prime)\leq 2r < \ell$. It follows that $|B(q,r)\cap C_i|\geq r+1$ and, therefore, this leads to a lower bound in the number of elements on each of the connected components:
\begin{equation}
    |C_i| \geq \left|\bigcup_{q \in Q\cap C_i}B(q,r) \cap C_i\right| =\sum_{q \in Q \cap C_i } |B(q,r) \cap C_i| \geq \sum_{q \in Q\cap C_i}(r+1) =k_i (r+1)\;.
\end{equation}
Since $r+1 = \lfloor (\ell-1)/2\rfloor + 1 =\lceil \ell /2 \rceil \geq \ell/2$, and therefore, $|C_i|\geq k_i \ell/2$. By the isoperimetric inequality in Proposition \ref{prop:isoperimetric},
\begin{equation}
    |\partial C_i|\geq c_D |C_i|^{(D-1)/D} \geq c_D(k_i \ell/2)^{(D-1)/D} \geq c_D 2^{-(D-1)/D}k_i\;,
\end{equation}
where we have used the fact $\ell \geq |Q|^{1/(D-1)}\geq k_i^{1/(D-1)}$. With that, we estabilish the uniform bound $k_i \leq C_D |C_i|$ for all cases, with $C_D$ a dimension-dependent constant. This shows the area-law, since:
\begin{equation}
    |Q\cap A| = \sum_i |Q\cap C_i| =\sum_i k_i \leq C_D\left(\sum_i |\partial C_i |\right) =C_D|\partial A|\;.
\end{equation}
For $|A|>n/2$, apply the above argument for $\bar A$, that will have $|\bar A|\leq n/2$. Since $\psi$ is pure and each site has at most $2D$ neighbors,
\begin{equation}
    S_\alpha(\psi_A) = S_\alpha(\psi_{\bar A}) \leq C_D |\partial \bar A| \leq 2D C_D |\partial A|\;,
\end{equation}
and thus the area-law holds for any region $A$.
\end{proof}

From that, we note that, if we take $S_\nu \subseteq [n]$ with $|S_\nu|= \nu$ to be the support of the state ensembles $\mathcal E_{\mathrm{Haar}(\nu)}$ and $\mathcal E_{g,S,\nu}$, and impose $S_\nu$ to be a $\ell$-sparse embedding guaranteed by part 1 of Lemma \ref{lem:sparseembeddingarealaw}, note that the states in the ensembles are exactly of the form described in part 2. Hence,
\begin{cor}
    Let $S_\nu \subseteq [n]$ be  a $\ell$-sparse embedding as guaranteed by Lemma \ref{lem:sparseembeddingarealaw} with $|S_\nu|  =\nu$, and let $\mathcal E_{\mathrm{Haar}(\nu)}$ and $\mathcal E_{g,S,\nu}$ be the state ensembles supported on $S_\nu$. Then, they are area-law, for any $S \subseteq [n]$.
\end{cor}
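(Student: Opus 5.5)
This corollary follows by specializing part 2 of Lemma \ref{lem:sparseembeddingarealaw} to each member of the two ensembles. First I fix $\nu=\Theta(\log^c n)$ as in Proposition \ref{prop:knownensembleresults}, so that $\nu=O(\mathrm{poly}\log n)$. Then I invoke part 1 of Lemma \ref{lem:sparseembeddingarealaw} with target size $|Q|=\nu$ and separation $\ell=\lceil \nu^{1/(D-1)}\rceil$; this is also $O(\mathrm{poly}\log n)$, so $\nu\ell^D=O(\mathrm{poly}\log n)=o(n)$. The hypothesis $n=\omega(|Q|\ell^D)$ of Proposition \ref{prop:lsparseembedding} therefore holds, and Algorithm \ref{alg:greedy} with $N_\mathrm{max}=\nu$ returns a set $S_\nu:=Q$ of exactly $\nu$ sites that are pairwise at distance at least $\ell$. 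I place the $\nu$ "active" qubits of both ensembles on $S_\nu$ and put $\ket{0}$ on every site of $[n]\backslash S_\nu$. This relabeling of qubits is a fixed permutation, independent of which ensemble member is drawn.

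Every state in $\mathcal E_{\mathrm{Haar}(\nu)}$ or $\mathcal E_{g,S,\nu}$ then has the form $\ket{\phi_Q}\otimes\ket{0}^{\otimes(n-|Q|)}$, where $\phi_Q$ is either a Haar state or a subset phase state on $Q$. These are exactly the hypotheses of part 2 of Lemma \ref{lem:sparseembeddingarealaw}. The bound it gives, $S_\alpha(\psi_A)\leq |Q\cap A|\leq C_D|\partial A|$ (or $2DC_D|\partial A|$ when $|A|>n/2$), holds for every $\alpha\in[0,\infty)$ and every region $A$. The constant depends only on $D$, not on the particular $\phi_Q$. Hence every member of both ensembles is area-law in the sense of Definition \ref{def:arealawstate}, uniformly over the ensemble. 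This also includes $\alpha=0$, since the Schmidt rank across $A|\bar A$ is at most $2^{|Q\cap A|}$.

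The main obstacle is not this corollary itself but whether the relabeling preserves the MD property and the indistinguishability from Proposition \ref{prop:knownensembleresults}. I would argue both are invariant under a fixed, state-independent qubit permutation. For the nullity this is immediate, since $\nu$ is invariant under permutations, which are Clifford. Indistinguishability of the ensembles is likewise unaffected, because any distinguisher could undo the permutation. The one point to check is that the entanglement conditions refer to the permuted cut: one needs bipartitions $A$ with $S(\psi_A)=\omega(\log n)$. These exist by choosing $A$ to contain about half of $S_\nu$, which is possible because $\nu=\Theta(\log^c n)$ with $c>1$.
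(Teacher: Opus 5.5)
Your proposal is correct and follows the paper's argument: you place the $\nu$ active qubits on the $\ell$-sparse set from part 1 of Lemma~\ref{lem:sparseembeddingarealaw}, so every member of both ensembles has exactly the product form covered by part 2, whose constant depends only on $D$. Your final paragraph on preserving the MD property and indistinguishability is not needed for this corollary; it belongs to the proof of Theorem~\ref{thm:arealawMDhardness}.
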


Note that depending on the embedding $S_\nu \subseteq [n]$, the states in Eqs. (\ref{eq:ensembledopedhaar}, \ref{eq:ensembledopedsubset}) may not host an area law even if $\nu = o(n)$. For example, consider the states in \ref{eq:ensembledopedhaar} , where the set of $\nu$ qubits, $S_\nu \subseteq [n]$, forms a $D$-dimensional hypercube in $\Lambda_D$, with $L$ being the linear length such that $\nu =L^D$. Let $A\subseteq [n]$ denote half of the corresponding hypercube. In this case, we have that $|\partial A|  = \Theta(L^{D-1}) = \Theta(\nu^{(D-1)/D})$. If the area law is imposed to $S(\psi_A) = \Theta(\nu)$, we would get a contradiction, since $\Theta(\nu) \neq O(\nu^{(D-1)/D})$ for $\nu = O(\mathrm{poly}\log n)$. Hence, in order to be able for an area law to appear, the doped qubits must lie with sufficient distance between each other, as the packings as described in Proposition \ref{prop:lsparseembedding} describe.

Putting all the pieces together, we have:
\begin{proof}[Proof of Theorem \ref{thm:arealawMDhardness}]
First, by Lemma \ref{lem:sparseembeddingarealaw}, there is an embedding that it an embedding of the qubits such that both ensembles contain only area-law states. Suppose that any of the three tasks exists. Then, by point 2 of Proposition \ref{prop:knownensembleresults}, one can efficiently distinguish the states between $\mathcal E_{\mathrm{Haar}(\nu)}$ and $\mathcal E_{g,S,\nu}$ with high probability, since in this case they are MD and are area-law. 

However, in \cite{aaronson2022quantum}, it was shown that no efficient distinguisher of the two ensembles exists, meaning that $\mathcal E_{g,S,\nu}$ is \emph{pseudorandom}. This follows from the fact that they form approximate designs (with the scaling choices of $\nu$ and $S$ from the Proposition), meaning:
\begin{equation}
    \|\mathbb E_{\psi \sim \mathcal E_{g,S,\nu}} \psi^{\otimes m} -\mathbb E_{\psi \sim \mathcal E_{\mathrm{Haar}(\nu)}} \psi^{\otimes m}\|_1 \leq o\left(\frac{1}{\mathrm{poly}n}\right)\;, \quad \forall m =O(\mathrm{poly} n)\;.
\end{equation}
Thus, the three proposed tasks cannot exist, and (1) efficient protocols to entanglement estimation needs to have $\omega(1)$ error, (2) an efficient LOCC protocol can only distill $M_+/S(\psi_A) = o(1)$ Bell pairs, and (3) dilution requires $M_-/S(\psi_A)  =\omega(1)$ Bell pairs even with polynomially many copies of the target state.
\end{proof}

\section{An argument for volume-law magic of 1d ground states\label{app:vollawmpsargument}}

In Ref. \cite{hastings2007area} (with further improvement in \cite{arad2012improved, arad2013area,huang2014area}), it was shown that any gapped, one-dimensional Hamiltonian with a constant degeneracy has a ground space that can be represented as a matrix product state of sublinear bond dimension. That is, given some basis $\{\psi^1, \cdots, \psi^k\}$ of the ground-state  subspace with $k=O(1)$, their computational-basis components can be compressed as
\begin{align}
    \psi^j_{i_1, i_2, \cdots, i_n} = \includegraphics[width=0.25\textwidth]{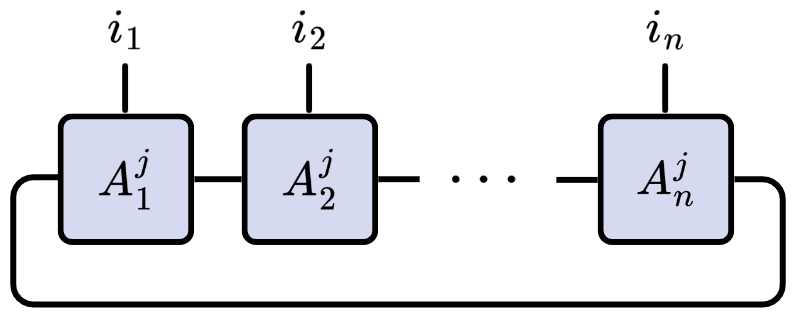}\;,
\end{align}
where the diagram represents $\tr(A^j_{i_1} A^j_{i_2}\cdots A^j_{i_n})$, with $A^j_i$ being $\chi \times \chi$ matrices with $\chi = O(n)$ and we assumed periodic boundary conditions for simplicity. Furthermore, one can always represent a matrix by a unitary embedding, such that
\begin{equation}
    \psi^j_{i_1, i_2, \cdots, i_n}= \includegraphics[width=0.25\textwidth]{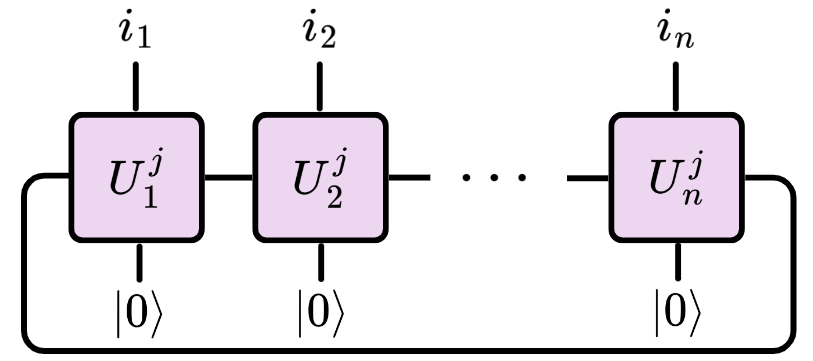}\;,
\end{equation}
where $\{U^j_i: \mathbb C^\chi \times \mathbb C^2 \to \mathbb C^\chi \times \mathbb C^2\}$ are unitaries. In Ref. \cite{chen2024magic}, it was shown that random matrix product states (by constructing a measure $\mu_\mathrm{MPS}$ on the space of states from the Haar measure of the corresponding local unitaries) have extensive stabilizer Rényi entropy with high probability. More specifically, it was shown that
\begin{equation}
    \mathrm{Pr}_{\psi \sim \mu_\mathrm{MPS}}[M_{1/2}(\psi) > \Omega(n)] \geq 1-2^{-\Omega(n)}\;,
\end{equation}
providing a proof that the numerically observed behaviour of Eq. (\ref{eq:magicvolumelaw}) in 1d gapped phases is typical for random matrix product states. That is, the corresponding extensive scaling of the stabilizer Rényi entropy is already seen as the generic behaviour for matrix product states. This argument stems from the ease with which magic can be saturated under local unitaries: For example, another argument comes from the discussion of Ref. \cite{turkeshi2024magic}, where 
it is shown that a one-layer Haar-random quantum circuit $U  = U_1\otimes U_2\otimes \cdots \otimes U_n$ is enough to saturate the stabilizer Rényi entropy of $\ket{\psi} = U \ket{0}^{\otimes n}$ with high probability. 

\section{Proofs of Lemmas}

\subsection{Proof of Lemma \ref{lem:entropygrowth} \label{app:entropygrowthproof}}
It will follow from the following claim:

\noindent \textbf{Claim: }Let $\psi = \ket{\psi}\bra{\psi}$ and $ \psi^\prime = \ket{\psi^\prime}\bra{\psi^\prime}$ be two pure $n$-qubit quantum states in the same $D$-dimensional quantum phase, and let $ A \subseteq [n]$. Then, there is a $c=O(1)$ constant such that:
    \begin{equation}
        S_\alpha(\psi^\prime_A)-S_0(\psi_A) \leq  c|\partial A|\;.
    \end{equation}
Since that implies $S_\alpha(\psi^\prime)\leq S_0(\psi_A) + c |\partial A|$, and given that $\psi$ is area-law by hypothesis, we have that $S_0(\psi_A) = O(|\partial A|)$, yielding an area-law for $\psi^\prime$.

Let us now show the claim: Consider the Schmidt decomposition \cite{nielsen2010quantum} of $\psi$ given by:
    \begin{equation}
        \ket{\psi} =\sum_{i=1}^{\mathrm{rank}(\psi_A)} \lambda_i \ket{i}_A\otimes \ket{i}_{\bar A}\;.
    \end{equation}
Consider the following decomposition of the quantum circuit:
\begin{equation}
    U = \prod_{\ell=1}^{\mathrm{depth}(U)}u_\ell =\prod_{\ell=1}^{\mathrm{depth}(U)} u_{\partial A, \ell}u_{A,\ell}  u_{\bar A, \ell}\;,
\end{equation}
where $\mathrm{depth}(U)=O(1)$, $u_{A,\ell}$ and $u_{\bar A, \ell}$ act exclusively on $A$ and $\bar A$, respectively, while $u_{\partial A, \ell}$ acts on both $A$ and $\bar A$. The strategy to bound the Rényi entropies is the upper bound $S_\alpha(\psi^\prime_A) \leq S_0(\psi^\prime_A)=\log \mathrm{rank}(\psi^\prime_A)$, along with a bound on how big the rank can get. 

Only the unitaries $\{u_{\partial A, \ell}\}_\ell$ can increase the rank of the decomposition, since $u_{A,\ell}$ and $u_{\bar A, \ell}$ corresponds to local unitaries. Unitary operators themselves can be Schmidt-decomposed:
\begin{equation}
    u_{\partial A, \ell} = \sum_{k=1}^{\chi_{\partial A}(\ell)} s_{\partial A,k}(\ell)\; u_{\partial A, k}^A(\ell)\otimes u^{\bar A}_{\partial A,k}(\ell)\;,
    \label{eq:unitaryschmidt}
\end{equation}
where $\chi_{\partial A}(\ell)$ is the Schmidt operator rank of $u_{\partial A,\ell}$, $\{s_{\partial A,k} > 0\}_k$ is a set of real numbers. It is upper bounded by $\chi_{\partial A}(\ell) \leq \min(2^{|\mathrm{supp}_A(\ell)|}, 2^{|\mathrm{supp}_{\bar A}(\ell)|})$, where $\mathrm{supp}_A(\ell)$ is the support of the action of $u_{\partial A,\ell}$ on $A$ and similarly for $\bar A$. They are constrained by geometrical locality. By definition, if $u_\ell = \prod_{j}u_{j,\ell}$, we have, for example for $A$:
\begin{equation}
    u_{\partial A, \ell} = \prod_{j:\;\substack{\mathrm{supp}_A(u_{j,\ell}) \neq \varnothing\\\mathrm{supp}_{\bar A}(u_{j,\ell}) \neq \varnothing} }u_{j,\ell} \quad  \Rightarrow \quad \mathrm{supp}_A(\ell) =  \bigcup_{j:\;\substack{\mathrm{supp}_A(u_{j,\ell}) \neq \varnothing\\\mathrm{supp}_{\bar A}(u_{j,\ell}) \neq \varnothing} } \mathrm{supp}_A(u_{j,\ell})\;,
\end{equation}
where we defined $\mathrm{supp}_A(O) \equiv \mathrm{supp}(O) \cap A$. Consider the maximum distance of this support set to $\bar A$:
\begin{equation}
    \max_{v \in \mathrm{supp}_A(\ell)} \mathrm{dist}_{\Lambda_D}(v, \bar A) = \max_{j, \ell}\{\max_{ v \in \mathrm{supp}(u_{j,\ell})}\mathrm{dist}_{\Lambda_D}(v, \bar A)\}_{j,\ell} \leq  \max_{j, \ell} \mathrm{diam}(\mathrm{supp}(u_{j,\ell})) =O(1)\;,
\end{equation}
where we used the fact that $\mathrm{dist}_{\Lambda_D}(v, \bar A) = \min_{v^\prime \in \bar A}\mathrm{dist}_{\Lambda_D}(v,v^\prime) \leq \mathrm{diam}(\mathrm{supp}(u_{j,\ell}))$, since we know there is a $v^\prime \in \bar A \cap \mathrm{supp}(u_{j,\ell})$. Then, the support lies in a thickening of the boundary set:
\begin{equation}
    \mathrm{supp}_A(\ell) \subseteq \partial A_R \equiv \{v \in A\;:\;\mathrm{dist}_{\Lambda_D}(v, \bar A)\leq R\}\;,
    \label{eq:supportbound}
\end{equation}
with $R=O(1)$. Note also that $\partial A = \partial A_1 \subseteq \partial A_{R\geq 1}$. Furthermore, given any $v^\prime \in \partial A_{R}$, there is a $v \in \partial A $ such that $\mathrm{dist}_{\Lambda_D}(v,v^\prime) \leq R$, due to the inclusion and the fact that the lattice is connected. This shows that:
\begin{equation}
    \partial A_R \subseteq \bigcup_{v \in \partial A} B(v,R)\;,
    \label{eq:boundarybound}
\end{equation}
where $B(v,R) = \{i \in \Lambda_D\;:\; \mathrm{dist}_{\Lambda_D}(i,v) \leq R\}$ is the lattice ball. Combining Eqs. (\ref{eq:supportbound}, \ref{eq:boundarybound}), we derive that:
\begin{equation}
    |\mathrm{supp}_A(\ell)| \leq \left| \bigcup_{v \in \partial A} B(v,R)\right| \leq \sum_{v \in \partial A} |B(v,R)| \leq |B(v,R) |\times |\partial A|\;,
\end{equation}
where we define $c^\prime=  |B(v,R) |$, which is a $O(1)$ $v$-independent constant if $R=O(1)$. Exchanging $A$ and $\bar A$ in the proof above, we derive that both of the supports are upper bounded by $c^\prime |\partial A|$. It follows that:
\begin{equation}
    \chi_{\partial A}(\ell) \leq \min(2^{|\mathrm{supp}_A(\ell)|},2^{|\mathrm{supp}_{\bar A}(\ell)|} ) \leq 2^{c^\prime|\partial A|}\;.
\end{equation}
Now, by Eq. (\ref{eq:unitaryschmidt}), given a quantum state $\ket{\phi}$, $u_{\partial A,\ell} \ket{\phi}$ has Schmidt rank $\chi_{\partial A}(\ell) \times \mathrm{rank}(\phi_A)$. Therefore:
\begin{equation}
    \mathrm{rank}(\psi^\prime_A) =\prod_{\ell=1}^{\mathrm{depth}(U)}\chi_{\partial A}(\ell)\times\mathrm{rank}(\psi_A)\;.
    \label{eq:rankbound}
\end{equation}
Taking the log on both sides in Eq. (\ref{eq:rankbound}) results in:
\begin{equation}
    S_0(\psi_A^\prime) \leq S_0(\psi_A) + \mathrm{depth}(U) c^\prime |\partial A|\;,
\end{equation}
which finalizes the proof, since $c\equiv \mathrm{depth}(U) c^\prime =O(1)$ at finite depth and $S_\alpha(\psi_A^\prime)\leq S_0(\psi_A)$. $\square$

\subsection{Proof of Lemma \ref{lem:stabarealaw} \label{app:stabarealaw}}
Each part can be shown separately:
\begin{enumerate}
    \item By Proposition \ref{prop:reducedstab}, we already know that:
    \begin{equation}
    \psi^\mathcal S_{A} = \frac{1}{2^{|A|}} \sum_{s \in \mathcal S_A}s_A\;.
    \end{equation}
    A straightforward computation confirms that $(\psi^\mathcal S_A)^2 =|\mathcal S_A|/2^{|A|} \psi^\mathcal S_A$, from which we derive that its spectrum is flat, with all non-zero values being $|\mathcal S_A|/2^{|A|} = 2^{-[|A|-\mathrm{rank}(\mathcal S_A)]}$. Note that its rank is given as $\mathrm{rank}(\psi^\mathcal S_A) =2^{|A|-\mathrm{rank}(\mathcal S_A)}$. The corresponding family of Rényi entropies introduced in Eq. (\ref{eq:renyientropy}) can be directly computed as:
    \begin{equation}
        S_\alpha (\psi^\mathcal S_A) = \frac{\log(2^{-\alpha[|A|-\mathrm{rank}(\mathcal{S}_A)]}|\mathrm{rank}(\psi^\mathcal S_A)|)}{1-\alpha} = |A|-\mathrm{rank}(\mathcal S_A)\;.
    \end{equation}
    Recognizing that $\mathrm{Stab}(\psi^\mathcal S_A) =\mathcal S_A$, we arrive at the proof that $S_\alpha (\psi^\mathcal S_A) = \nu(\psi^\mathcal S_A)$ by the definition of the nullity on Eq. (\ref{eq:stabnullity}).
    \item Note that we can split the generating set $S\equiv \{s_1, s_2,\cdots, s_n\}$ as follows:
    \begin{equation}
        S = \underbrace{S_A}_{\mathrm{fully\;supported\;on\;}A} \sqcup\; S_{\partial A}\sqcup \underbrace{S_{\bar A}}_{\mathrm{fully\;supported\;on\;}\bar A}\;.
    \end{equation}
    We then notice that:
    \begin{equation}
        |S| = n=|A|+|\bar A| = \underbrace{|S_A|}_{=\mathrm{rank}(\mathcal S_A)} + |S_{\partial A}| + \underbrace{|S_{\bar A}|}_{\leq |\bar A|} \Rightarrow |S_A| \geq |A| -|S_{\partial A}|\;, 
    \end{equation}
    deriving then that
    \begin{equation}
        \nu(\psi^\mathcal S_A) = |A|-|S_A| \leq |S_{\partial A}|\;,
        \label{eq:nullityareabound}
    \end{equation}
    bounding the nullity by the number of generators with overlapping support on $A$ and $\bar A$. 

    To impose geometrical locality, we can use the same trick employed in App. \ref{app:entropygrowthproof}: First, note that $|S_{\partial A}| \leq |\mathrm{supp}(S_{\partial A})|$, where $\mathrm{supp}(S_{\partial A})=\cup_{i=1}^n \mathrm{supp}(s_i)$. Then, by geometrical locality, there is a $R=O(1)$ such that $\mathrm{supp}(S_{\partial A}) \subseteq \partial A_R$, as Eq. (\ref{eq:supportbound}). From the same reasoning, it follows that:
    \begin{equation}
        |S_{\partial A}| \leq |\mathrm{supp}(S_{\partial A})| \leq |\bigcup_{v \in \partial A} B_{R}(v)| \leq |B_R(v)|\times |\partial A|\;.
    \end{equation}
    By defining $c= |B_R(v)| = O(1)$ and applying Part 1 with Eq. (\ref{eq:nullityareabound}), we show indeed that $S_\alpha(\psi^\mathcal S_A) = \nu(\psi^\mathcal S_A) \leq c |\partial A|$. $\square$
\end{enumerate}

\subsection{Proof of Lemma \ref{lem:stab_renyi_formula} \label{app:stab_renyi_formula}}
The precise formulation of the classical representation is put forward in Lemma \ref{lem:fixednullity}. The proof follows by a somewhat lengthy, but direct computation from the representation. Given $\psi =\ket{\psi}\bra \psi$ with the representation in Eq. (\ref{eq:fixednullity}) and $A \subseteq [n]$, the corresponding reduced state is given as:
\begin{equation}
    \psi = \Pi\cdot K \;\Rightarrow\; \psi_A =\tr_{\bar A} \psi = \Pi_A \cdot K_A\;, 
\end{equation}
where we have defined:
\begin{align}
    K_A &= \frac{|\mathrm{Stab}(\psi_A)|}{2^{|A|}} \sum_{ i \in  I} \langle h_i \rangle_ \psi h^\prime_i\;,\\
    \Pi_A &= \frac{1}{|\mathrm{Stab}(\psi_A)|} \sum_{s_A \in \mathrm{Stab}(\psi_A)}s_A\;,
\end{align}
where $I = \{i \in \{0, \cdots, k\}| \exists P_i \in \mathrm{Stab}(\psi): \tr_{\bar A}(h_i P_i) \neq 0\}$ is the set of indices where the subset of operators in $\{h_i\}_{i \in I}$ survives the partial trace, and $h^\prime_i = h_i P_i$. We compute $\tilde M_2$ by first evaluating the 2-Rényi entropy and the SRE in Eq. (\ref{eq:localstab}). The direct evaluation of 2-Rényi entropy is given by (Lemma 7 of \cite{gu2024magic}):
\begin{align}
    S_2(\psi_A) = \nu(\psi_A)-\log\left(\sum_{i=0}^k \delta_i \langle h_i\rangle_\psi^2\right)\;.
    \label{eq:S2compressed}
\end{align}
We now want to evaluate:
\begin{equation}
    M_2(\psi_A) =- \log\left(\frac{1}{2^{|A|}} \sum_{P_A \in \mathcal P_{|A|}} \langle P_A\rangle_\psi^4\right) \;.
\end{equation}
Consider the Pauli string average:
\begin{align}
    \langle P_A \rangle_\psi = \frac{1}{2^{|A|}} \sum_{i \in I} \sum_{s_A \in \mathrm{Stab}(\psi_A)} \langle h_i\rangle_\psi \tr(P_A h^\prime_i s_A) =  \sum_{i \in I} \sum_{s_A \in \mathrm{Stab}(\psi_A)}\langle h_i \rangle_\psi \delta_{P_A, h^\prime_i s_A}\;,
\end{align}
on which we can compute:
\begin{align}
    M_2(\psi_A) &= -\log \left(\frac{1}{2^{|A|}} \sum_{P_A \in \mathcal P_{|A|}} \sum_{\substack{i_1,i_2, i_3, i_4 \in I\\s^1_A, s^2_A, s^3_A, s^4_A \in \mathrm{Stab}(\psi_A)}} \prod_{\ell=1}^4\langle h_{i_\ell}\rangle_\psi  \delta_{P_A, h^\prime_{i_\ell}s^\ell_A}\right)\\
    &=-\log \left(\frac{1}{2^{|A|}}  \sum_{\substack{i \in I\\s_A \in \mathrm{Stab}(\psi_A)}} \langle h_{i}\rangle_\psi^4\right) = -\log \left(\frac{|\mathrm{Stab}(\psi_A)|}{2^{|A|}}  \sum_{\substack{i \in I}}\langle h_{i}\rangle_\psi^4\right)\\& = \nu(\psi_A) -\log \left(\sum_{\substack{i \in I}}\langle h_{i}\rangle_\psi^4\right) = \nu(\psi_A) -\log \left(\sum_{i=0}^k\delta_i\langle h_{i}\rangle_\psi^4\right) \;, \label{eq:SREcompressed}
\end{align}
where, on the second equality, we have used the fact that $P_A = h^\prime_{i_\ell}s^\ell_A$ for $\ell \in\{1,2,3,4\}$ implies that $h^\prime_{i_\ell} s^\ell_A = h^\prime_{i_{\ell^\prime}} s_A^{\ell^\prime}$ for all $\ell, \ell^\prime$ in the four values. By the decomposition of the Pauli support of $\psi$ into cosets in Eq. (\ref{eq:cosetdecomp}), this implies that all the indices match, meaning $i_{\ell} = i_{\ell^\prime}$. Combining Eqs. (\ref{eq:SREcompressed}, \ref{eq:S2compressed}), we finally obtain the desired formula:
\begin{equation}
    \tilde M_2(\psi_A)= M_2(\psi_A)-S_2(\psi_A) = - \log\left[\frac{\sum_{i=0}^k \delta_i \langle h_i \rangle_\psi^{4}}{ \sum_{i=0}^k \delta_i \langle h_i \rangle_\psi^2}\right]\;.
\end{equation}
We can now introduce the algorithm, given as an input $(\mathcal T_{\mathrm{Stab}(\psi)}, \{\langle h_i\rangle_\psi, \mathbf y_i\}_{i=0}^k)$, where $\mathbf y_i \in \mathbb F_2^{2n}$ is the symplectic representation of $h_i$, as explained in Eq. (\ref{eq:fixednullity_repfull}). We want to check for each $i \in \{0,1, \cdots, k\}$ if $\delta_i=1$, and then add in the sum. Given $A$, we define the projector into $\bar A$ in the symplectic representation as:
\begin{align}
    \mathbf P_{\bar A}: &\mathbb F_2^{2n}\to \mathbb F_2^{2n}\;,\\
    &(\mathbf a, \mathbf b)\mapsto (\mathbf a_{\bar A}, \mathbf b_{\bar A})
\end{align}
where, given $\mathbf z \in \mathbb F_2^{n}$, we define the split $\mathbf z = \mathbf z_A + \mathbf z_{\bar A}$ according to $[n] =A \sqcup \bar A$ (or $\mathbb F_2^{n}\cong \mathbb F_2^{|A|}\oplus \mathbb F_2^{|\bar A|}$). We first claim that:
\begin{equation}
     i \in I \Leftrightarrow \mathbf P_{\bar A}(\mathbf y_i) \in \mathrm{rowspace}[\mathbf P_{\bar A}(\mathcal T_{\mathrm{Stab}(\psi)})]\;.
\end{equation}
To show this, let us denote the rows of $\mathcal T_{\mathrm{Stab}(\psi)}$ as $\{\mathbf x_j = (\mathbf a_j, \mathbf b_j) \in \mathbb F_2^{2n}\}_{j=1}^{\mathrm{rank}[\mathrm{Stab}(\psi)]}$. Then:
\begin{equation}
    \mathbf P_{\bar A}(\mathbf y_i) \in \mathrm{rowspace}[\mathbf P_{\bar A}(\mathcal T_{\mathrm{Stab}(\psi)})] \Leftrightarrow \exists S \subseteq \{1,2,\cdots, \mathrm{rank}[\mathrm{Stab}(\psi)]\}\;:\;\mathbf P_{\bar A}(\mathbf y_i) + \sum_{j \in S} \mathbf P_{\bar A}(\mathbf x_j) =\mathbf 0\;.
\end{equation}
Denoting the corresponding Pauli defined by $S$ as $P_S \in \mathrm{Stab}(\psi)$ by Proposition \ref{prop:stabmembership}, the last condition happens if and only if $\mathrm{supp}(h_i P_S) \subseteq A \Leftrightarrow \tr_{\bar  A}(h_i P_S)\neq 0$, showing the claim. Hence, we reduce our problem to determining if the projections of the symplectic representations $\{\mathbf y_i\}_i$ lie in the corresponding row-space of the Paulis. The full algorithm is as follows:

\begin{algorithm}[H]
\caption{Calculation of $\tilde M_2(\psi_A)$ for a state of nullity $\nu$. \label{alg:M2est}}
    \KwIn{ $(\mathcal T_{\mathrm{Stab}(\psi)}, \{\langle h_i \rangle_\psi, \mathbf y_i\}_{i=0}^k)$;}
    \KwOut{$\tilde M_2(\psi_A)$}
    $r_1 \gets 0$\\
    $r_2 \gets 0$\\
    \For{$i \gets 0 \mathrm{\;to\;}k$}{
        \If{$\mathbf P_{\bar A}(\mathbf y_i) \in \mathrm{rowspace}[\mathbf P_{\bar A}(\mathcal T_{\mathrm{Stab}(\psi)})]$}{$r_1 \gets r_1 +\langle h_i\rangle_\psi^4$  \\
        $r_2 \gets r_2 +\langle h_i\rangle_\psi^2$}
    }
    \Return{$-\log(r_1/r_2)$}
\end{algorithm}
 The runtime of the algorithm is $O(k \times T_\mathrm{row, n \times n})$, where $T_\mathrm{row, n \times n} $ is the runtime of doing a single row-space checking of the projection. Since the Tableau can be at most a $n \times 2n $ rectangular matrix and $\mathbf y_i \in \mathbb F_2^{2n}$, rowspace checking takes $T_\mathrm{row, n \times n} = O(n^3)$ by Gaussian elimination \cite{CLRS4}, and since $k \leq 4^\nu - O(1) $, we have the claimed complexity of $O(4^\nu n^3)$. $\square$

\subsection{Proof of Lemma \ref{lem:localstabbound} \label{app:localstabbound}}

Consider the representation of $\psi$ guaranteed in Lemma \ref{lem:fixednullity}. Then,
\begin{equation}
    \tr(\psi^2)=\frac{1}{2^n} \sum_{P \in \mathrm{supp}(\psi)}\langle P \rangle_\psi^2= \frac{1}{2^n}\sum_{s \in \mathrm{Stab}(\psi)} \sum_{i=0}^k\langle h_i s\rangle_\psi^2 =  \frac{1}{2^{\nu(\psi)}}\sum_{i=0}^k\langle h_i \rangle_\psi^2\;,
\end{equation}
where we used the decomposition fo the support in Eq. (\ref{eq:cosetdecomp}) and the fact that $|\mathrm{Stab}(\psi)|= 2^{n-\nu(\psi)}$. Hence, we have $\tr(\psi^2)=1\Rightarrow \sum_i \langle h_i \rangle_\psi^2=2^{\nu(\psi)}$. We note that, by Lemma \ref{lem:stab_renyi_formula}, $\tilde M_2$ depends on the ratio $\sum_i \delta_i \langle h_i \rangle_\psi^4/\sum_i \delta_i \langle h_i \rangle_\psi^2$, with $\delta_i \in \{0,1\}$. We notice:
\begin{equation}
    1 \underbrace{\leq}_{\langle h_0\rangle_\psi =\delta_0=1} \sum_{i=0}^k \delta_i \langle h_i \rangle_\psi^4 \underbrace{\leq}_{0 \leq \langle h_i \rangle_\psi \leq 1} \sum_{i=0}^k \delta_i \langle h_i \rangle_\psi^2 \underbrace{\leq}_{\sum_i \langle h_i\rangle_\psi^2=2^{\nu(\psi)}} 2^ {\nu(\psi)}\;.
\end{equation}
Therefore,
\begin{equation}
    0 \leq \tilde M_2(\psi_A) = \log \left[\frac{\sum_{i=0}^k \delta_i \langle h_i \rangle_\psi^2}{\sum_{i=0}^k \delta_i \langle h_i \rangle_\psi^4}\right]\leq \nu(\psi)\;.
\end{equation}
$\square$

\subsection{Bounds on the growth of the nullity and proof of Lemma \ref{lem:stabregimerobust} \label{app:stabregimerobust}}
To show the Lemma, we will need a bound on the growth of the stabilizer nullity:
\begin{prop}
    Let $U$ be a $t$-doped quantum circuit of constant gate locality and $\psi^\prime  = U \psi U^\dagger$. Then, there is a constant $c=O(1)$ such that:
    \begin{equation}
        \nu(\psi^\prime) \leq \nu(\psi) + c t\;.
        \label{eq:stabnullitybound}
    \end{equation}
    \label{prop:stabnullitybound}
\end{prop}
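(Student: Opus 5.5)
The plan is to decompose the circuit into its gates and track the stabilizer group $\mathrm{Stab}(\psi)$ gate by gate. Write $U = g_m g_{m-1}\cdots g_1$, where each $g_j$ acts on at most $r=O(1)$ qubits. At most $t$ of the $g_j$ are non-Clifford. Set $\psi_j = g_j\cdots g_1\psi g_1^\dagger\cdots g_j^\dagger$. It then suffices to prove two single-gate statements:
\begin{itemize}
\item if $g_j$ is Clifford, then $\nu(\psi_j)=\nu(\psi_{j-1})$;
\item if $g_j$ is an arbitrary unitary supported on a set $R$ with $|R|\leq r$, then $\nu(\psi_j)\leq\nu(\psi_{j-1})+2r$.
\end{itemize}
Summing over the gates gives Eq. (\ref{eq:stabnullitybound}) with $c=2r$. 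The Clifford case is exactly the Clifford invariance of the nullity, property 2 of Proposition \ref{prop:properties_monotomes}. Concretely, $\mathrm{Stab}(C\psi C^\dagger)=C\,\mathrm{Stab}(\psi)C^\dagger$ as groups, and this conjugation preserves rank.

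For a non-Clifford gate $g$ supported on $R$, consider the subgroup
\begin{equation}
\mathcal S_{\bar R}\equiv\{P\in\mathrm{Stab}(\psi_{j-1})\;:\;\mathrm{supp}(P)\cap R=\emptyset\}.
\end{equation}
Every $P\in\mathcal S_{\bar R}$ commutes with $g$, since their supports are disjoint. Hence $|\tr(P\,g\psi_{j-1}g^\dagger)|=|\tr(g^\dagger P g\,\psi_{j-1})|=|\tr(P\psi_{j-1})|=1$, which gives $\mathcal S_{\bar R}\subseteq\mathrm{Stab}(\psi_j)$. It remains to bound $\mathrm{rank}(\mathcal S_{\bar R})$ from below.

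For this step I work in the symplectic picture of Proposition \ref{prop:stabmembership}. Consider the restriction map $\mathbf P_R:\mathbb F_2^{2n}\to\mathbb F_2^{2|R|}$, which keeps only the coordinates in $R$. Its kernel, intersected with $\mathrm{rowspace}(\mathcal T_{\mathrm{Stab}(\psi_{j-1})})$, is exactly the symplectic image of $\mathcal S_{\bar R}$. The signs cause no trouble: $-\mathbb I\notin\mathrm{Stab}(\psi_{j-1})$, so the map from the group modulo phases to $\mathbb F_2^{2n}$ is injective. Since the image of $\mathbf P_R$ has dimension at most $2|R|$, rank-nullity gives
\begin{equation}
\mathrm{rank}(\mathcal S_{\bar R})\geq\mathrm{rank}[\mathrm{Stab}(\psi_{j-1})]-2|R|.
\end{equation}
Therefore $\nu(\psi_j)\leq n-\mathrm{rank}(\mathcal S_{\bar R})\leq\nu(\psi_{j-1})+2r$.

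Iterating over the gates, only the at most $t$ non-Clifford gates change $\nu$, and each raises it by at most $2r$. Hence $\nu(\psi')\leq\nu(\psi)+2rt$.

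I expect the main obstacle to be the rank-counting step. One has to check that the elements of $\mathrm{Stab}(\psi_{j-1})$ with trivial support on $R$ really form a subgroup whose rank equals the dimension of the kernel. This holds by the injectivity of the symplectic map on stabilizer groups. Once that is in place, the rest is bookkeeping.

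If the gate locality is only geometric, with gates of diameter $O(1)$, then $|R|\leq|B(0,R)|=O(1)$ as in the proof of Proposition \ref{prop:lsparseembedding}, so the same constant $c$ works. The bound could be sharpened to $c=|R|$, but the statement does not need it.
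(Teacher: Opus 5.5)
Your proof is correct and follows the paper's argument: Clifford invariance, survival of the stabilizers supported off each non-Clifford gate, and a rank loss of at most $2|\mathrm{supp}(u)|$ per such gate. Your rank--nullity count for $\mathbf P_R$ on the row space is in fact more rigorous than the paper's generator split, which bounds the straddling generators $R_u$ by $|\mathrm{supp}(u)|$ on their own; that fails for a Bell pair across the cut, where both generators straddle, though only the combined $2|\mathrm{supp}(u)|$ bound is needed.

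Drop your closing claim that $c$ could be sharpened to $|R|$. A generic single-qubit unitary on one half of a Bell pair maps no nontrivial Pauli to a Pauli, so $\nu$ jumps from $0$ to $2=2|R|$. The bound $2|R|$ is therefore tight for general gates, and the $|R|$ bound holds only for special gates, e.g.\ diagonal ones.
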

\begin{proof}
    We can decompose the quantum circuit as:
    \begin{equation}
        U  = \prod_{j=1}^{t}C_j\cdot u_j\;,
    \end{equation}
    where $\{C_j\}$ are Clifford untaries, and $u_j$ are the non-Clifford gates. The result in Eq. (\ref{eq:stabnullitybound}) can be shown recursively. To do so, consider a non-Clifford gate $u$ and a quantum state $\phi$. Then, note that if $P \in \mathrm{Stab}(\phi)$ and $\mathrm{supp}(P) \cap \mathrm{supp}(u)=\varnothing$, we have:
    \begin{equation}
        |\tr(P u \phi u^\dagger)|\underbrace{=}_{\mathrm{using\;}[P, u^\dagger]= 0 } |\tr(P u^\dagger u \phi)| = |\tr(P \phi)|=1\;,
    \end{equation}
    implying $P \in \mathrm{Stab}(u \phi u^\dagger)$. We want to count the number of Pauli stabilizers that satisfy the disjoint support condition. Consider:
    \begin{equation}
       \mathrm{rank}[\mathrm{Stab}(u \phi u^\dagger)]\geq  |\{P \in \mathrm{gen}[\mathrm{Stab}(\phi)]\;|\; \mathrm{supp}(u) \cap \mathrm{supp}(P) = \varnothing \}| \equiv \mathrm{rank}[\mathrm{Stab}(\phi_{\bar{\mathrm{supp}(u)}})]\;,
    \end{equation}
    since the set of Paulis that stabilize the state in the complement of the support is included in the first set, and we defined the last set by the split of the set of generators as:
    \begin{equation}
        \mathrm{gen}[\mathrm{Stab}(\phi)] = \underbrace{\mathrm{gen}[\mathrm{Stab}(\phi_{\bar{\mathrm{supp}(u)}})]}_{\mathrm{generators\;supported\;at\;}\bar{\mathrm{supp}(u)}}  \sqcup \underbrace{\mathrm{gen}[\mathrm{Stab}(\phi_{\mathrm{supp}(u)})]}_{\mathrm{generators\;supported\;at\;}\mathrm{supp}(u)} \sqcup \; \underbrace{R_u}_{\mathrm{Paulis\;with\;full\;support}}\;,
    \end{equation}
    where $R_u$ is the set of generators with support on both $\mathrm{supp}(u)$ and its complement. Thus we can lower bound as:
    \begin{equation}
        \mathrm{rank}[\mathrm{Stab}(\phi_{\bar{\mathrm{supp}(u)}})] \geq \mathrm{rank}[\mathrm{Stab}(\phi)]-2|\mathrm{supp}(u)|\;,
    \end{equation}
    where we have used the fact  both $\mathrm{rank}[\mathrm{Stab}(\phi_{\mathrm{supp}(u)})]$ and $\mathrm{rank}(R_u)$ are upper bounded by $|\mathrm{supp}(u)|$. Hence,
    \begin{equation}
        \nu(u \phi u^\dagger) = n-\mathrm{rank}[\mathrm{Stab}(u \phi u^\dagger)] \leq n- \mathrm{rank}[\mathrm{Stab}(\phi_{\bar{\mathrm{supp}(u)}})]\leq \nu(\phi) +2 |\mathrm{supp}(u)|\;.
    \end{equation}
    The final formula can be shown by induction: Consider $u = C_j u_j$ and $\phi = \psi_j$. Then, due to the Clifford invariance of the nullity,
    \begin{equation}
        \nu(C_{j} u_{j}  \psi_{j-1} u_j^\dagger C_j^\dagger) = \nu(u_j \psi_{j-1} u_j^\dagger) \leq \nu(\psi_{j-1}) + 2|\mathrm{supp}(u_j)|\;.
    \end{equation} 
    Defining $\psi_1\equiv \psi$, we do indeed arrive at the final claim by repeating the formula $t$ times:
    \begin{equation}
        \nu(\psi^\prime) \leq \nu(\psi) + 2\sum_{j=1}^t|\mathrm{supp}(u_j)|\underbrace{\leq}_{\mathrm{by\;constant\;gate\;locality:}\; \max_{j}|\mathrm{supp}(u_j)| = k =O(1)} \nu(\psi)+ 2 k t\;.
    \end{equation}
\end{proof}
\noindent With this result, the proof of the Lemma is quite direct:
\begin{proof}[Proof of Lemma \ref{lem:stabregimerobust}]
Recall that since $\psi, \psi^\prime$ are in the same stabilizer regime, there is a $t=o(\log n)$-doped circuit relating the two. Note that by assumption, $S(\psi_A^\prime) = \Theta(S(\psi_A))$ for all subregions $A\subseteq [n]$, and thus, $\psi^\prime$ is also area law.

By Proposition \ref{prop:stabnullitybound}, we have:
\begin{equation}
    \frac{\nu(\psi^\prime)}{S(\psi^\prime_A)} \leq \frac{\nu(\psi)}{S(\psi^\prime_A)} + \frac{o(\log n)}{S(\psi^\prime_A)} = \frac{\nu(\psi)}{\Theta (S(\psi_A))} + \frac{o(\log n)}{\Theta(S(\psi_A))}\;, 
    \label{eq:nullityboundafterFDQC}
\end{equation}
where we have used the fact that the entropy scaling is invariant by assuming that they are on the same stabilizer regime.
Taking the $n \to \infty$ limit, the first term vanishes, since $\psi$ is ED and area-law for $A$, and the second term also vanishes, since we are taking $S(\psi_A)  = \omega(\log n)$ by assumption. Hence, $\psi^\prime$ is also ED. 
\end{proof}

\end{document}